\newcommand{\ignore}[1]{}
\newcommand{\findmin}{\mathsf{Find \ Minimum}}
\newcommand{\findmax}{\mathsf{Find \ Maximum}}
\newcommand{\parens}[1]{\left( #1 \right)}
\newcommand{\floor}[1]{\left\lfloor #1 \right\rfloor}
\newcommand{\ceil}[1]{\left\lceil #1 \right\rceil}
\newcommand{\polylog}{\mathrm{polylog}}
\newcommand{\Prob}[1]{\mathbf{P}\left(#1\right)}
\newtheorem{theorem}{Theorem}[section]
\newtheorem{lemma}{Lemma}[section]
\newtheorem{claim}{Claim}
\newtheorem{invariant}{Invariant}[section]
\newtheorem{remark}{Remark}[section]
\newcommand{\ball}[3]{\mathsf{Ball}_{#1}(#2, #3)}
\newcommand{\clusterGraph}[2]{{\mathsf{cluster}}(#1,#2)}
\newcommand{\cluster}{{\mathsf{Cl}}}
\newcommand{\bfs}{\mathsf{Recursive}\text{-}\mathsf{BFS}}
\newcommand{\diam}{\mathsf{diam}}
\newcommand{\dist}{\mathsf{dist}}
\newcommand{\sr}{\mathsf{Local}{\text -}\mathsf{Broadcast}}
\newcommand{\srs}{\mathsf{Local}{\text -}\mathsf{Broadcast}\text{s}}
\newcommand{\upcast}{\mathsf{Up}{\text -}\mathsf{cast}}
\newcommand{\downcast}{\mathsf{Down}{\text -}\mathsf{cast}}
\newcommand{\poly}{\operatorname{poly}}
\newcommand{\ID}{\operatorname{ID}}
\newcommand{\LL}{\mathcal{L}}
\newcommand{\DD}{\mathcal{D}}
\newcommand{\CC}{\mathcal{C}}
\newcommand{\SSS}{\mathcal{S}}
\newcommand{\RR}{\mathcal{R}}
\newcommand{\UU}{\mathcal{U}}
\newcommand{\idle}{\mathsf{idle}}
\newcommand{\listen}{\mathsf{listen}}
\newcommand{\transmit}{\mathsf{transmit}}
\newcommand{\Ones}{\mathsf{Ones}}
\newcommand{\Zeros}{\mathsf{Zeros}}
\newcommand{\CONGEST}{\mathsf{CONGEST}}
\newcommand{\RN}{\mathsf{RN}}
\newcommand{\Energy}{\mathsf{En}}
\newcommand{\Time}{\mathsf{Time}}
\begin{document}

\title{The Energy Complexity of BFS in Radio Networks}

 \author{Yi-Jun Chang\\
 {\small ETH Z\"{u}rich}
 \and
 Varsha Dani\\
 {\small Univ.~of New Mexico}
 \and
 Thomas P. Hayes\thanks{Supported by NSF CAREER award CCF-1150281.}\\
 {\small Univ.~of New Mexico}
 \and
 Seth Pettie\thanks{Supported by NSF grants CCF-1514383, CCF-1637546, and CCF-1815316.}\\
 {\small Univ.~of Michigan}}

\date{}
\maketitle
\thispagestyle{empty}
\setcounter{page}{0}

\begin{abstract}
We consider a model of \emph{energy complexity} in Radio Networks in which transmitting or listening on the channel costs one unit of energy and computation is free.  This simplified model captures key aspects of battery-powered sensors: that battery-life is most influenced by transceiver usage, and that at low transmission powers, the actual cost of transmitting and listening are very similar.

The energy complexity of tasks in single-hop (clique) networks are well understood~\cite{ChangKPWZ17,NakanoO00,BenderKPY18,JurdzinskiKZ02c}.  Recent work of Chang et al.~\cite{ChangDHHLP18} considered energy complexity in \emph{multi-hop} networks and showed that $\mathsf{Broadcast}$ admits an \emph{energy-efficient} protocol, 
by which we mean each of the $n$ nodes in the network spends $O(\polylog(n))$ energy.
This work left open the strange possibility that \emph{all} natural problems in multi-hop networks might admit such an energy-efficient solution.

In this paper we prove that the landscape of energy complexity is rich enough to support a multitude of problem complexities.  Whereas $\mathsf{Broadcast}$ can be solved by an energy-efficient protocol, exact computation of $\mathsf{Diameter}$ cannot, requiring $\Omega(n)$ energy.
Our main result is that 
$\mathsf{Breadth First Search}$ has 
sub-polynomial energy complexity at most 
$2^{O(\sqrt{\log n\log\log n})}=n^{o(1)}$; 
whether it admits an efficient 
$O(\polylog(n))$-energy protocol is an open problem.

Our main algorithm involves recursively solving a generalized BFS problem on a ``cluster graph'' introduced by Miller, Peng, and Xu~\cite{miller2013parallel}.  In this application, we make crucial use of a close
relationship between distances in this cluster graph,
and distances in the original network.  This relationship is new and may be of independent interest.

We also consider the problem of approximating the network $\mathsf{Diameter}$.  From our main result, it is immediate that $\mathsf{Diameter}$ can be 2-approximated using $n^{o(1)}$ energy per node.  We observe that, for all $\epsilon > 0$, approximating $\mathsf{Diameter}$ to within a $(2-\epsilon)$ factor 
requires $\Omega(n)$ energy per node.  However, this lower bound is only due to graphs of very small diameter; for large-diameter graphs, we prove that the diameter can be nearly $3/2$-approximated using $O(n^{1/2+o(1)})$ energy per node. 
\end{abstract}

\newpage

\section{Introduction}

Consider a network of $n$ tiny sensors scattered throughout a National Park.
We'd like the sensors to organize themselves, so that in the
event of a forest fire, say, information about it can be \emph{efficiently}
broadcast to the entire network.

In this extremely low power setting, sensors would need to
spend most of their time with their transceiver units shut off to
conserve power.  In a steady state, we might expect that we have a
good \emph{labelling} of the nodes, and each node with label $i$ wakes up
at times of the form $jP + i$, where $j$ runs through every positive integer, and
$P$, the polling period, is also a positive integer.  
Each node wakes up just long enough to receive a message
and forward it on any neighbors with label $i+1$.  In this
way, at the expense of adding $P$ to the latency, the nodes are able to reduce
their power consumption by a factor of $P$, compared to the always-on
scenario.

Once $P$ has been optimized, which should be a function of the
available power, the next issue is how to find a good
labelling efficiently.  In this paper we focus mainly on the problem of computing
\emph{BFS labelings}: a given source $s$ has label zero, 
and all other devices label themselves by the distance (in hops) to $s$.
Such a labeling gives a 2-approximation to the diameter, and via up-casts
and down-casts, allows for time- and energy-efficient dissemination of a message from any origin.
Thus, the problem of finding a BFS labelling is a very natural question in
this context.

\subsection{The Model}

We work within the classic \emph{Radio Network} model~\cite{chlamtac1985broadcasting},
but in contrast to most prior work in this model, we treat
\emph{energy} (defined below) as the primary measure of complexity
and \emph{time} to be important, but secondary.

There are $|V|$ devices 
associated with the nodes of an \underline{\emph{unknown}}
undirected graph $G=(V,E)$.
\emph{Time} is partitioned into discrete steps.  
All devices agree on time zero,\footnote{Synchronizing devices in an energy-efficient manner is an interesting open problem.  In some situations it makes sense to assume the devices begin in a synchronized state, e.g., if the sensors are simultaneously turned on and dropped from an airplane on the aforementioned National Park.} 
and agree on some upper bound $n\ge |V|$.
In each timestep, each device performs some computation and chooses to either 
$\idle$, $\listen$ to the channel, or $\transmit$ a message.
If a device $v$ chooses to $\listen$, 
and \emph{exactly} one device $u\in N(v)$ $\transmit$s a message $m_u$,
then $v$ receives $m_u$.  In all other cases, $v$ receives no feedback from the
environment.\footnote{Here $N(v)=\{u \ | \ \{u,v\}\in E(G)\}$ is the neighborhood of $v$. 
A more powerful model allows for \emph{collision detection}, 
i.e., differentiation between zero and two or more transmitters in $N(v)$.
Since collision detection only 
gives a $\polylog (n)$ advantage
in any complexity measure ($\sr$ in Section~\ref{sect:cluster} allows each vertex to differentiate  between zero and two or more transmitters in $\polylog (n)$ rounds w.h.p.) and we 
are insensitive to such factors, 
we assume the weakest model, without collision detection.}
Devices can locally generate unbiased random bits; there is no shared randomness.
Let $\RN[b]$ denote this Radio Network model, 
where $b$ is the maximum number of bits per message.
All of our algorithms work in $\RN[O(\log n)]$ and 
all our lower bounds apply even to $\RN[\infty]$.

\paragraph{Cost Measures.} An algorithm runs in \emph{time $t$} if all devices
halt and return their output by timestep $t$.
Typically the algorithm is randomized, with some probability of failure, but
$t$ is a function of $n$ or other given parameters, not a random variable. 
The \emph{energy cost} of $v\in V$ is the number of timesteps for which
$v$ is $\listen$ing or $\transmit$ting.  
(This is motivated by the fact that the \emph{sleep mode} of tiny devices is so efficient that it is reasonable to approximate its energy-cost by \emph{zero}, and that transceiver usage is often the most expensive part of a computation.  Moreover, at low transmission powers, transmitting and listening are comparable; see, e.g.,~\cite[Fig.~2]{PolastreSC05} and \cite[Table~1]{BarnesCMA10}.)
The energy cost of the \emph{algorithm}
is the maximum energy cost of any device.

\paragraph{Energy Complexity.} 
Most prior work on energy complexity
has focused on \emph{single-hop} (clique) networks, 
typically under the assumption that $|V|=n$ is \emph{unknown}, 
and that some type of collision-detection is available.\footnote{Sender-side CD enables devices to detect if another device is transmitting; receiver-side CD lets receivers detect if at least two devices are transmitting.}
Because of the high degree of symmetry, there are only
so many interesting problems in single-hop networks.
Nakano and Olariu~\cite{NakanoO00} proved that the $\mathsf{Initialization}$
problem (assign devices distinct IDs in $\{1,\ldots,|V|=n\}$)
can be solved with $O(\log\log n)$ energy.
Bender et al.~\cite{BenderKPY18} showed that with collision-detection, 
all $n$ devices holding messages
can transmit all of them using $O(\log(\log^* n))$ energy.
Chang et al.~\cite{ChangKPWZ17} proved that 
$\Theta(\log(\log^* n))$ is optimal,
and more generally, settled the complexity of $\mathsf{LeaderElection}$ and $\mathsf{ApproximateCounting}$ (estimating ``$n$'') in all the collision-detection models, with and without randomization.  It was proved that collision-detection gives two exponential advantages in energy complexity. With randomization, $\mathsf{LeaderElection}/\mathsf{ApproximateCounting}$ takes 
$\Theta(\log^* n)$ energy (without CD) or $\Theta(\log(\log^* n))$ energy (with CD), and deterministically, they take 
$\Theta(\log N)$ energy (without CD~\cite{JurdzinskiKZ02c})
and $\Theta(\log\log N)$ energy (with CD), 
where devices initially have IDs in $[N]$.  See also~\cite{JurdzinskiKZ02b,JurdzinskiKZ02,JurdzinskiKZ02c,JurdziskiKZ03,JurdzinskiS02}.
Three-way tradeoffs between time, energy, and error probability were 
studied by 
Chang et al.~\cite{ChangKPWZ17} and Kardas et al.~\cite{KardasKP13}.

Very recently Chang et al.~\cite{ChangDHHLP18} extended the 
single-hop notion of energy complexity
to \emph{multi-hop} networks ($G$ is \emph{not} a clique),
and proved nearly sharp upper and lower bounds on $\mathsf{Broadcast}$,
both in $\RN[O(\log n)]$ and the same model when listeners have 
collision detection.  Without CD the energy
complexity of $\mathsf{Broadcast}$ is between $\Omega(\log^2 n)$ and $O(\log^3 n)$; with CD it is between 
$\Omega(\log n)$ and $O\left(\frac{\log n\log\log n}{\log\log\log n}\right)$. 

\paragraph{Other Energy Models.}
Other notions of energy complexity have been studied in radio networks.
For example, when distances between devices are very large, transmitting is 
significantly more expensive than listening, and it makes sense to design algorithms
that minimize the worst-case number of transmissions per device.
Gasnieniec et al.~\cite{GasieniecKKPS07}, 
Klonowski and Pajak~\cite{KlonowskiP18}, 
and Berenbrink et al.~\cite{BerenbrinkCH09} studied broadcast 
and gossiping problems under this cost model.
Klonowski and Sulkowska~\cite{KlonowskiS16} defined a distributed
model in which devices are scattered randomly at points in $[n^{1/d}]^d$
and can choose their transmission power dynamically.
Several works have looked at energy complexity against
an adversarial \emph{jammer}, where the energy cost is sometimes
a function of the adversary's energy budget.  See, e.g., \cite{KutylowskiR03,KabarowskiKR06,GilbertKPPSY14,KingPSY18}.

\paragraph{Time Complexity.} Most prior work in the $\RN$ model
has studied the time complexity of basic primitives such as $\mathsf{LeaderElection}$, $\mathsf{Broadcast}$, $\mathsf{BFS}$, etc.  
We review a few results most relevant to our work. 
Bar-Yehuda et al.'s~\cite{bar1991efficient} \emph{decay} algorithm
 solves $\mathsf{BFS}$ in $O(D\log^2 n)$ time and $\mathsf{Broadcast}$
 in $O(D\log n + \log^2 n)$ time.  Here $D$ is the diameter of the network.
 Since $\Omega(D)$ is an obvious lower bound,
 the question is which $\log$-factors are necessary.  Alon et al.~\cite{alon1991lower} proved that the additive $\log^2 n$ term is necessary
 in a strong sense: even with full knowledge of the graph topology,
 $\mathsf{Broadcast}$ needs $\Omega(\log^2 n)$ time even when $D=O(1)$.
Kushilevitz and Mansour~\cite{KushilevitzM98} proved that if 
devices are forbidden from transmitting before hearing the message,
then $\Omega(D\log(n/D))$ time in necessary.
Czumaj and Davies~\cite{CzumajD17} (improving~\cite{haeupler2016faster})
gave a $\mathsf{Broadcast}$ algorithm running in $O(D\log_D n + \polylog (n))$ time, which is optimal when $D > n^\epsilon$.
These $\mathsf{Broadcast}$ algorithms \emph{do not} solve $\mathsf{BFS}$.
Improving the classic $O(D\log^2 n)$ decay algorithm for $\mathsf{BFS}$,
Ghaffari and Haeupler~\cite{GhaffariH16} solve $\mathsf{BFS}$
in $O(D\log (n)\log\log (n) + \polylog (n))$ time.

\paragraph{New Results.}
It is useful to coarsely classify energy-efficiency bounds
as either \emph{feasible} or \emph{infeasible}.  
We consider $\polylog (n)$ energy to be feasible 
and polynomial energy $n^{\Omega(1)}$ to be infeasible.\footnote{These definitions seem to be robust to certain modeling assumptions, e.g., 
whether collision detection is available.}
It is not immediately obvious that there are \emph{any} natural, 
infeasible problems, especially if we are considering the full
power of $\RN[\infty]$, where message congestion is not an issue.  
In this paper we demonstrate
that the energy landscape is rich, and that even
coarsely classifying the energy complexity of simple problems 
is technically challenging and demands the development of 
new algorithm design techniques. 
Our results are as follows
\begin{itemize}
    \item We develop a recursive $\mathsf{BreadthFirstSearch}$ algorithm in $\RN[O(\log n)]$ with ``intermediate'' energy-complexity $2^{O(\sqrt{\log n\log\log n})} = n^{o(1)}$.
    The algorithm involves simulating itself on a clustered version of the input graph.  Due to the nature of the $\RN$ model, this simulation is not free, but incurs a polylogarithmic increase in energy, which restricts the profitable depth of 
    recursion to be at most $\sqrt{\log n/\log\log n}$.
    
    \item We give examples of some ``hard'' problems in energy-complexity, even when the model is $\RN[\infty]$. 
    The problem of deciding whether $\mathsf{diam}(G)$ is 1 or at least 2 takes $\Omega(n)$ energy; in this case the hard graph $G$ is dense.  We adapt the construction of~\cite{AbboudCK16} (designed for the $\mathsf{CONGEST}$ model) to show that even on \emph{sparse} graphs, with arboricity $O(\log n)$, deciding whether $\mathsf{diam}(G)$ is 2 or at least 3 takes $\tilde{\Omega}(n)$ energy.
    
    \item To complement the hardness results, we show that $\mathsf{Diameter}$
    can be nearly $3/2$-approximated\footnote{I.e., it returns a value in the range $\left[\floor{\frac{2}{3}\mathsf{diam}(G)}, \mathsf{diam}(G)\right]$.} in $\RN[O(\log n)]$ 
    with $O(n^{1/2+o(1)})$ energy, by adapting~\cite{holzer2014brief,RodittyW13} and
    using our new $\mathsf{BreadthFirstSearch}$ routine.
\end{itemize}

The existence of a subpolynomial-energy $\mathsf{BreadthFirstSearch}$ algorithm
is somewhat surprising for information-theoretic reasons.
Observe that the number of edges in $E(G)$ that are collectively discovered
by all devices is at most the number of messages successfully received,
which itself is at most the aggregate energy cost.  Thus, if the per-device energy cost is $n^{o(1)}$, we can never hope to know about more than $n^{1+o(1)}$ edges in $E(G)$ --- a negligible fraction of the input on dense graphs!  
On the other hand, it is possible to efficiently verify the \emph{non-existence} of
many non-edges.  Given a candidate $\mathsf{BFS}$-labeling, for example, it is straightforward to \emph{verify} its correctness with $\polylog(n)$ energy.

\paragraph{Organization.}
In Section~\ref{sect:cluster} we review the 
Miller-Peng-Xu~\cite{miller2013parallel} clustering
algorithm and prove that it preserves distances better than previously
known.  
In Section~\ref{sect:cluster-sim} we 
define some communications primitives and 
prove that they can be executed
on the cluster graph (as if it were an $\RN[O(\log n)]$ network)
at the cost of a polylogarithmic factor increase in energy usage.  
In Section~\ref{sect:BFS} we 
design and analyze
a recursive BFS algorithm,
which uses $2^{O(\sqrt{\log n\log\log n})}$ energy.
In Section~\ref{appendix:diameter} we consider the energy cost
of approximately computing the network's $\mathsf{Diameter}$.

\section{Cluster Partitioning}\label{sect:cluster}

Miller, Peng, and Xu~\cite{miller2013parallel} 
introduced a remarkably simple algorithm for 
partitioning a given graph into vertex-disjoint clusters 
with certain desirable properties.  In this section
we prove that the MPX clustering approximately preserves relative \emph{distances} from the original graph significantly
better than previously known.

Given a graph $G = (V,E)$, and a parameter $\beta$, 
each vertex $v \in V$ independently samples a random variable
$\delta_v \sim \operatorname{Exponential}(\beta)$ from the 
exponential distribution with mean $1/\beta$.
Assign each $v$ to the ``cluster'' centered at 
$u \in V$
that minimizes  $\dist_G(v,u) - \delta_u$.
Equivalently, we may think of a cluster forming at each vertex $u$
at time $-\delta_u$, and spreading through the graph at a uniform rate of
one edge per time unit.  
Each vertex $v$ is absorbed into the first 
cluster to reach it, if this happens prior to time $-\delta_v$,
when it would start growing its own cluster.
Refer to Figure~\ref{fig:cluster-graph}.
Throughout the paper, 
\emph{we only choose $\beta$ such that $1/\beta$ is an integer}.


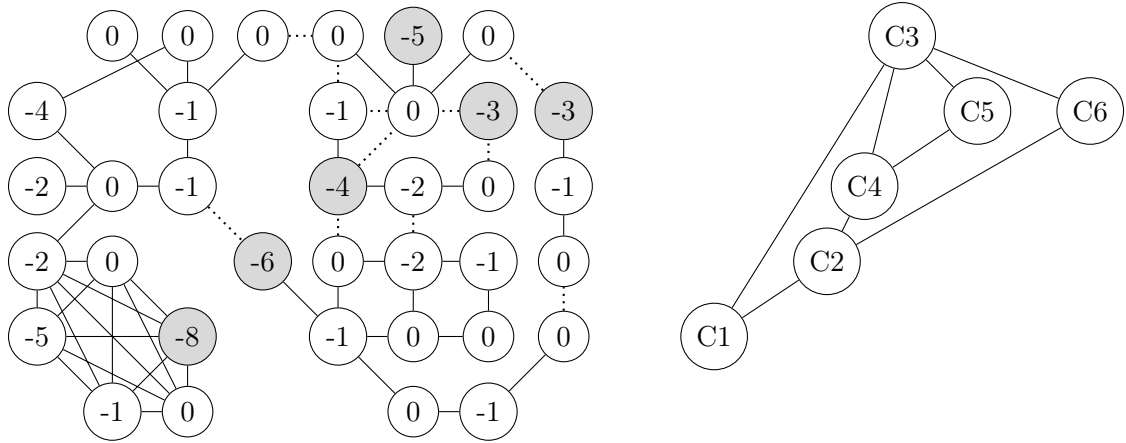
\begin{figure*} 
\begin{center}
\begin{tikzpicture}[scale=1]
\tikzstyle{every node} = [draw, circle]
\node (a) at (2,1){-1};
\node (b) at (3,1){0};
\node (c) at (1,2){-5};
\node [fill=gray!30](d) at (3,2){-8};
\node (e) at (1,3){-2};
\node (f) at (2,3){0};
\draw (a)--(b)--(c)--(d)--(e)--(f)--(a)--(c)--(e)--(a)--(d);;
\draw (b)--(d)--(f)--(b)--(e);
\draw (c)--(f);
\node (g) at (1,4){-2};
\node (h) at (1,5){-4};
\node (i) at (2,4){0};
\node (j) at (2,6){0};
\node (k) at (3,4){-1};
\node (l) at (3,5){-1};
\node (m) at (3,6){0};
\draw (e)--(i)--(g); \draw (i)--(h);
\draw (i)--(k)--(l)--(m); \draw (l)--(j);
\draw (m)--(h);
\node [fill=gray!30] (n) at (4,3){-6};
\node (o) at (4,6){0};
\node (p) at (5,2){-1};
\node (q) at (5,3){0};
\node [fill=gray!30] (r) at (5,4){-4};
\node (s) at (5,5){-1};
\node (t) at (5,6){0};
\draw [dotted, thick] (k)--(n);
\draw (n)--(p)--(q);
\draw (r)--(s);
\draw [dotted, thick] (s)--(t);
\draw [dotted, thick] (o)--(t);
\draw (o)--(l);
\node (u) at (6,1){0};
\node (v) at (6,2){0};
\node (w) at (6,3){-2};
\node (x) at (6,4){-2};
\node (y) at (6,5){0};
\node [fill=gray!30] (z) at (6,6){-5};
\node (aa) at (7,1){-1};
\node (ab) at (7,2){0};
\node (ac) at (7,3){-1};
\node (ad) at (7,4){0};
\node [fill=gray!30] (ae) at (7,5){-3};
\node (af) at (7,6){0};
\draw (p)--(v)--(ab)--(ac)--(w)--(q);
\draw (r)--(x)--(ad);
\draw [dotted, thick] (q)--(r);
\draw [dotted, thick] (ad)--(ae)--(y)--(s);
\draw (v)--(w); \draw (y)--(z);
\draw [dotted, thick] (w)--(x);
\node (ag) at (8,2){0};
\node (ah) at (8,3){0};
\node (ai) at (8,4){-1};
\node [fill=gray!30] (aj) at (8,5){-3};
\draw (p)--(u)--(aa)--(ag);
\draw (ah)--(ai)--(aj);
\draw (af)--(y);
\draw [dotted, thick] (aj)--(af);
\draw [dotted, thick] (ag)--(ah);
\draw (y)--(t);
\draw [dotted, thick] (r)--(y);

\node (C1) at (10,2){C1};
\node (C2) at (11.5,3){C2};
\node (C3) at (12.5,6){C3};
\node (C4) at (12,4){C4};
\node (C5) at (13.5,5){C5};
\node (C6) at (15,5){C6};
\draw (C2)--(C1)--(C3);
\draw (C3)--(C5)--(C4)--(C3);
\draw (C4)--(C2)--(C6)--(C3);
\end{tikzpicture}
\end{center}
\caption{Constructing a cluster graph.  At left, the original graph;
  with the (rounded) start time, $-\delta_v$, marked on each vertex.  
The cluster centers have been darkened, and the dotted lines
indicate edges that cross a cluster boundary.
At right, the corresponding cluster graph.  Note that the distances in
the cluster graph are broadly proportional to the original distances,
but can vary significantly.}
\label{fig:cluster-graph}
\end{figure*}


Miller et al.~\cite{miller2013parallel} were 
primarily interested in this construction because 
the algorithm parallelizes well, the clusters have diameter
$O(\log(n)/\beta)$ w.h.p., and a $O(\beta)$-fraction of the
edges are ``cut,'' having their endpoints in distinct clusters.
Haeupler and Wajc~\cite{haeupler2016faster} observed 
that this 
algorithm can be efficiently implemented in the Radio Network model~\cite{chlamtac1985broadcasting,ChlamtacK87}, 
with only minor modifications.  

\subsection{The Cluster Graph as a Distance Proxy}

Define $\cluster(u)$ to be the cluster containing $u$.
The cluster graph, $\clusterGraph{G}{\beta} = G^* = (V^*,E^*)$
is defined by
\begin{align*}
             V^*    &= \{\cluster(u) \ | \ u\in V(G)\}\\
\mbox{ and } E^*    &= \{(\cluster(u),\cluster(v)) \ | \ (u,v)\in E(G), \cluster(u)\neq\cluster(v)\}.
\end{align*}

To prove that distances in $G^*$ are a good proxy
for distances in $G$, we make use of the
following lemma, which is a slight variant of 
lemmas by Miller, Peng, Vladu, and Xu~\cite[Lemma 2.2]{MillerPVX15}
and Haeupler and Wajc~\cite[Corollary 3.8]{haeupler2016faster}.  
We include a proof for completeness.

Define
$\ball{G}{v}{\ell}
=\{ u \in V \ | \ \dist_G(u,v) \leq \ell\}$
to be the ball of radius $\ell$ around $v$.

\begin{lemma} \label{lem:cluster-exponential-hits-in-ball}
Let $G^* = \clusterGraph{G}{\beta}$ be the cluster graph 
for $G$.  For every positive integer $j$ and $\ell>0$, 
the probability that the number of $G^*$-clusters 
intersecting $\ball{G}{v}{\ell}$
is more than $j$ is at most
\[
(1 - \exp(-2 \ell \beta))^j.
\]
\end{lemma}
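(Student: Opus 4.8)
The plan is to reduce the geometric statement to a one-dimensional claim about the order statistics of the sampled exponentials and then prove that claim by a sequential reveal that exploits the memorylessness of $\mathrm{Exp}(\beta)$. For a center $u$ write $Y_u = \dist_G(v,u) - \delta_u$ for the (signed) time at which the cluster grown from $u$ reaches $v$, and let $m = \min_w Y_w$; the cluster $\cluster(v)$ is precisely the one attaining this minimum, at a center I will call $c$. The first step is purely deterministic: I would show that every cluster meeting $\ball{G}{v}{\ell}$ has $Y_u \le m + 2\ell$. Indeed, if some $x$ with $\dist_G(v,x) \le \ell$ is assigned to the center $u$, then the assignment rule gives $\dist_G(x,u) - \delta_u \le \dist_G(x,c) - \delta_c$, and two applications of the triangle inequality (through $x$) yield
\[
Y_u = \dist_G(v,u) - \delta_u \le \ell + \dist_G(x,u) - \delta_u \le \ell + \dist_G(x,c) - \delta_c \le 2\ell + \dist_G(v,c) - \delta_c = m + 2\ell .
\]
Hence the number of clusters meeting the ball is at most $\bigl|\{u : Y_u \le m + 2\ell\}\bigr|$, and this step I expect to be routine.

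It then suffices to bound $\Prob{\bigl|\{u : Y_u \le m + 2\ell\}\bigr| > j}$, which is a statement about the variables $S_u := -Y_u = \delta_u - \dist_G(v,u)$: I must show that more than $j$ of the $S_u$ lie within $2\ell$ of their maximum $M = \max_u S_u$ only with probability $(1 - e^{-2\ell\beta})^j$. Equivalently, writing $S_{(1)} \ge S_{(2)} \ge \cdots$ for the order statistics, I want $\Prob{S_{(1)} - S_{(j+1)} \le 2\ell} \le (1-e^{-2\ell\beta})^j$. The guiding computation is the i.i.d.\ case: when the near-maximal centers satisfy $\delta_u \ge \dist_G(v,u)$, memorylessness makes their overshoots $S_u$ independent $\mathrm{Exp}(\beta)$ variables, whose top order-statistic spacings $S_{(r)} - S_{(r+1)}$ are independent with laws $\mathrm{Exp}(r\beta)$ for $r = 1, \dots, j$. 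Since $\sum_{r=1}^{j}\mathrm{Exp}(r\beta)$ has exactly the distribution of the maximum of $j$ independent $\mathrm{Exp}(\beta)$ variables, one gets $\Prob{S_{(1)} - S_{(j+1)} \le 2\ell} = (1-e^{-2\ell\beta})^j$ on the nose, which is the source of the bound.

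The step I expect to be the main obstacle is passing from this clean i.i.d.\ picture to the general one, where the shifts $\dist_G(v,u)$ differ from center to center and the $2\ell$-window around $M$ may dip below $0$, so that ``unsuccessful'' centers (those with $\delta_u < \dist_G(v,u)$, hence $S_u < 0$) can also fall in the window. Conditioning on $M$ does not work here, since the per-center probability of landing in the window is then not uniformly at most $1 - e^{-2\ell\beta}$. Instead I would keep the joint law intact and prove that $S_{(1)} - S_{(j+1)}$ stochastically dominates the maximum of $j$ independent $\mathrm{Exp}(\beta)$ variables, which gives $\Prob{S_{(1)} - S_{(j+1)} \le 2\ell} \le (1 - e^{-2\ell\beta})^j$ directly from the CDF of that maximum. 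The content of the domination is that nonnegative shifts only spread the top of the order statistics apart — one checks, for instance, that the gap between any two centers exceeds $t$ with probability at least $e^{-\beta t}$, with strict gain when the shift is positive — and establishing this uniformly over the shift pattern, rather than term by term, is the crux of the argument.
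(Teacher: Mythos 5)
Your reduction is sound and in fact mirrors the paper's: your deterministic step (every cluster meeting $\ball{G}{v}{\ell}$ has arrival time at $v$ within $2\ell$ of the minimum arrival time) is the same triangle-inequality observation that underlies the paper's argument, and the probabilistic claim you isolate --- that for independent shifted exponentials $S_u = \delta_u - \dist_G(v,u)$ one has $\Prob{S_{(1)} - S_{(j+1)} \le 2\ell} \le (1-e^{-2\ell\beta})^j$ --- is exactly what must be proved. You are also right that conditioning on the maximum $M$ fails: a center $u$ with $\dist_G(v,u) < 2\ell - M$ lies in the window $[M-2\ell,M]$ with conditional probability $1$. But the proposal stops precisely at the step that carries the content of the lemma: you assert that $S_{(1)} - S_{(j+1)}$ stochastically dominates the maximum of $j$ i.i.d.\ $\mathrm{Exp}(\beta)$ variables and call its proof ``the crux,'' offering only the unshifted computation (R\'enyi spacings) as a guide plus a pairwise-gap observation. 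The unshifted computation does not transfer --- the event that the near-maximal centers have nonnegative $S_u$ is itself random, so you cannot simply declare their overshoots to be i.i.d.\ exponentials --- and pairwise bounds of the form $\Prob{|S_u - S_w| > t} \ge e^{-\beta t}$ do not control the joint law of the top $j+1$ order statistics. As written, the general case has no proof, so this is a genuine gap.

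The missing idea, and the one the paper uses, is to condition not on the maximum but on the value $-t$ of the \emph{$(j+1)$-st largest} $S$ (equivalently, the $(j+1)$-st arrival time $t$ at $v$) together with the identities $i_1,\dots,i_j$ of the top $j$ centers. For independent continuous variables this conditioning makes $S_{i_1},\dots,S_{i_j}$ conditionally independent, each distributed as $S_{i_r}$ given $S_{i_r} > -t$; writing $d_{i_r} = \dist_G(v,i_r)$ and using $\delta_{i_r}\ge 0$, memorylessness gives $\delta_{i_r} = \max\{0,\, d_{i_r}-t\} + X_r$ with $X_1,\dots,X_j$ i.i.d.\ $\mathrm{Exp}(\beta)$, hence
\[
S_{i_r} - S_{(j+1)} \;=\; X_r + \max\{t - d_{i_r},\, 0\} \;\ge\; X_r .
\]
The domination you wanted is then immediate: $S_{(1)} - S_{(j+1)} \ge \max_r X_r$, so the conditional (hence unconditional) probability that $S_{(1)} - S_{(j+1)} \le 2\ell$ is at most $(1-e^{-2\ell\beta})^j$. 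Note that the shifts enter only through the nonnegative correction $\max\{t-d_{i_r},0\}$, which can only increase the gaps, so no uniformity over shift patterns is needed and no exact spacing identity is required. With this conditioning argument supplied, your outline closes into a complete proof that is essentially identical to the paper's.
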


\begin{proof}
Condition on the time $t$ that the $(j + 1)$st signal would reach vertex
$v$, as well as on the identities $v_1, \dots, v_j$ 
of the vertices whose signals reach $v$ before time $t$.  
Due to the memoryless property of the exponential distribution, each of these arrival
times are independently distributed as 
$\min\{t, \dist(v_i, v)\} - X \leq t - X$, where $X \sim \operatorname{Exponential}(\beta)$.

Now, if $\max_{1 \le i \le j} X_i > 2\ell$, then $\ball{G}{v}{\ell}$
cannot intersect any clusters except those 
centered at $v_1, \dots,
v_j$, because they do not 
reach $\ball{G}{v}{\ell}$ until times 
$\ge t - \ell$, whereas the first signal reached $v$ before time $t - 2\ell$, and has therefore already 
flooded all of $\ball{G}{v}{\ell}$
before time $t - \ell$.  Thus,
\[
\Prob{\ball{G}{v}{\ell} \mbox{ intersects more than $j$ clusters}} \le \Prob{\forall i\in[1,j], X_i \le 2 \ell} = (1 - \exp(-2 \ell \beta))^{j}.   \qedhere
\]
\end{proof}

A natural way to show that $G^*$ approximately
preserves distances in $G$ is to consider the fraction 
of edges in a shortest path that are ``cut'' by 
the partition, which corresponds to applying Lemma~\ref{lem:cluster-exponential-hits-in-ball} with $\ell=1/2$ and $j=1$.\footnote{One imagines a vertex $v_e$
in the middle of an edge $e$; $e$ is cut iff $\ball{G}{v_e}{1/2}$ intersects two clusters,
which must cover distinct endpoints of $e$.}
This was the approach taken in~\cite{ChangDHHLP18},
but it only guarantees that the fraction of edges cut 
concentrates around its expectation ($O(\beta)$)
for paths of length $\tilde{\Omega}(\poly(\beta^{-1}))$.
In Lemmas~\ref{lem:dist-ratio-union-bound} and \ref{lem:dist-ratio-union-bound2}
we use Lemma~\ref{lem:cluster-exponential-hits-in-ball}
in a different way to bound 
the ratio of distances in $G$ to those in $G^*$,
which works even for relatively short distances.
Lemma~\ref{lem:dist-ratio-union-bound} applies to all distances
(and suffices for our BFS application in Section~\ref{sect:BFS})
whereas Lemma~\ref{lem:dist-ratio-union-bound2} applies 
to distances $\Omega(\beta^{-1}\log^2 n)$.

\begin{lemma} \label{lem:dist-ratio-union-bound}
Let $G^* = \clusterGraph{G}{\beta}$ be a clustering of $G$.
There exists a constant $C$ such that for every pair $u,v \in V(G)$,
\begin{dmath*}
\Prob{\dist_{G^*}(\cluster(u),\cluster(v))
 \in \left[\floor{\frac{\dist_{G}(u,v)\cdot \beta}{8\log(n)}}, 
\ceil{\dist_G(u,v)\cdot\beta} \cdot C\log(n)\right]} 
\ge 1 - \frac{1}{n^3}.  
\end{dmath*}
More generally, let $P=(u,\dots,v)$ be any length-$d$ path connecting $u$ and $v$. With probability $1 - \frac{1}{n^3}$, there exists a path $P^*$ 
in $G^*$ connecting $\cluster(u)$ and $\cluster(v)$ with length at most
$d\cdot C\beta\log(n)$, where each cluster in $P^*$ intersects $P$.
\end{lemma}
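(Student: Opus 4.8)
The plan is to prove the two inclusions separately: an upper bound $\dist_{G^*}(\cluster(u),\cluster(v)) \le \ceil{\dist_G(u,v)\beta}\cdot C\log n$, which is really the ``more general'' path statement, and a lower bound $\dist_{G^*}(\cluster(u),\cluster(v)) \ge \floor{\dist_G(u,v)\beta/(8\log n)}$. The upper bound is where Lemma~\ref{lem:cluster-exponential-hits-in-ball} is used in the new way; the lower bound is the more routine direction, relying on the fact that clusters have small radius. A final union bound over the two failure events gives the claimed $1 - n^{-3}$ guarantee after adjusting constants.

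For the \emph{upper bound / path statement}, fix a path $P = (u = w_0, w_1, \dots, w_d = v)$ of length $d$. First I would record the decisive structural fact: as we walk along $P$, each consecutive pair $\cluster(w_t), \cluster(w_{t+1})$ is either equal or adjacent in $G^*$ (since $(w_t,w_{t+1}) \in E(G)$), so the clusters met by $P$ induce a \emph{connected} subgraph $H$ of $G^*$, all of whose vertices intersect $P$. Hence $\dist_{G^*}(\cluster(u),\cluster(v))$ is at most the number of \emph{distinct} clusters meeting $P$. It then remains to bound that number. Cover $P$ by balls $\ball{G}{p_i}{\ell}$ with $\ell = 1/(2\beta)$ and centers $p_i \in P$ spaced $1/\beta$ apart along $P$; since $1/\beta$ is an integer the $p_i$ are genuine vertices, there are $\ceil{d\beta} + 1$ of them, and the balls cover $P$. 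Every cluster meeting $P$ meets some ball, so by Lemma~\ref{lem:cluster-exponential-hits-in-ball} applied with this $\ell$ (so $2\ell\beta = 1$) and $j = \Theta(\log n)$, together with a union bound over the balls, with probability $\ge 1 - n^{-4}$ every ball meets at most $j$ clusters, giving at most $j(\ceil{d\beta} + 1) = O(\ceil{d\beta}\log n)$ clusters meeting $P$. This yields both the desired upper bound (with an appropriate $C$) and the path $P^*$, obtained by taking any shortest path inside $H$, every cluster of which meets $P$ by construction.

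For the \emph{lower bound} I would first control cluster radii. Writing $c$ for $u$'s center, a vertex $w \in \cluster(c)$ satisfies $\dist_G(c,w) \le \delta_c$ (the winning signal reaches $w$ no later than $w$ would start its own cluster), and moreover every geodesic from $c$ stays inside the cluster; since $\Prob{\delta_c > 4\log(n)/\beta} = n^{-4}$, a union bound shows that with probability $\ge 1 - n^{-3}$ every cluster has radius at most $R = O(\log n/\beta)$ about its center and is connected. Now convert a shortest $G^*$-path $\cluster(u) = C_0, \dots, C_L = \cluster(v)$ into a $G$-walk: route from $u$ to the center of $C_0$, then for each $G^*$-edge cross the single $G$-edge realizing it and route through the next cluster's center. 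Each cluster contributes at most $2R$ intra-cluster steps and each of the $L$ crossings contributes one edge, so $\dist_G(u,v) \le (2R+1)L + 2R$. Rearranging gives $L \ge (\dist_G(u,v) - 2R)/(2R+1)$, which, with the choice $R = 4\log(n)/\beta$ (so $2R = 8\log(n)/\beta$) and the integrality of $L$, yields $L \ge \floor{\dist_G(u,v)\beta/(8\log n)}$.

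I expect the \emph{upper bound} to be the main obstacle, and specifically the idea of bounding the $G^*$-\emph{path length} by the \emph{number} of clusters a ball-cover of $P$ can contain. The naive route --- bounding the fraction of \emph{cut edges} along $P$ via Lemma~\ref{lem:cluster-exponential-hits-in-ball} with $\ell = 1/2, j = 1$ --- only concentrates for paths of length $\tilde\Omega(\poly(\beta^{-1}))$, whereas the overlapping-ball argument controls short paths too, because each radius-$1/(2\beta)$ ball sees only $O(\log n)$ clusters with overwhelming probability and consecutive balls share vertices. The remaining work --- fixing $j = \Theta(\log n)$ to beat the union bound, and chasing the constants $8$ and $C$ through the two floors and ceilings --- is routine bookkeeping that the integrality of distances and the slack in the constants absorb.
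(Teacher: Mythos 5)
Your proof follows essentially the same route as the paper's: for the upper bound, cover the path with $O(\beta d)$ short segments/balls of radius $\Theta(1/\beta)$, apply Lemma~\ref{lem:cluster-exponential-hits-in-ball} with $j = \Theta(\log n)$ and a union bound, and use the fact that the clusters meeting $P$ form a connected subgraph of $G^*$; for the lower bound, use the w.h.p.\ bound $\delta_c \le 4\log(n)/\beta$ on all cluster radii and convert a shortest $G^*$-path into a $G$-walk through cluster centers. If anything, you make explicit two steps the paper leaves implicit (the connectivity of the clusters along $P$, and the walk-conversion giving the lower bound), and your constant bookkeeping is no looser than the paper's own.
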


\begin{proof}
First observe that the probability of any $\delta_v$-value being 
outside $[0,4\log(n)/\beta)$ is $\ll n^{-4}$ and hence all clusters 
have radius less than $4\log(n)/\beta$ with probability $\ll n^{-3}$.
This gives the lower bound on $\dist_{G^*}(u,v)$. 

For the upper bound, define $\ell$ to be the integer $1/\beta$.  
Fix any length-$d$ path $P$ from $u$ to $v$ (e.g., a shortest
path, with $d=\dist_G(u,v)$), 
and cover its vertices
with $\left\lceil \frac{d}{2 \ell + 1} \right\rceil$ paths of length
$2 \ell$.  
Applying Lemma~\ref{lem:cluster-exponential-hits-in-ball}
to the center vertex $u'$ of one of these subpaths, we conclude
that the number of clusters that intersect $\ball{G}{u'}{\ell}$,
(which includes the entire subpath) is more than $j$ with probability
\begin{equation}\label{eqn:clusters-geom-distr}
\left(1 - \exp(-2 \beta \ell) \right)^j 
=
(1 - \exp(-2))^j,
\end{equation}
Choosing $j$ to be the appropriate
multiple of $\log(n)$, we can make this probability $\ll n^{-4}$.  
Taking a union bound over the
$\approx \beta d/2 < n$ subpaths, 
the probability that any subpath intersects more than $C \log(n)$
clusters is $\ll n^{-3}$.  This concludes the proof.
\end{proof}

Lemma~\ref{lem:dist-ratio-union-bound} suffices
to achieve our main result, BFS labeling in $2^{O(\sqrt{\log n\log\log n})}$ energy, but 
the exponent can be improved
by a constant factor by using Lemma~\ref{lem:dist-ratio-union-bound2} whenever applicable.  We include the proof of Lemma~\ref{lem:dist-ratio-union-bound2} 
since it may be of independent interest.

\begin{lemma}\label{lem:dist-ratio-union-bound2}
Let $G^* = \clusterGraph{G}{\beta}$ be a clustering of $G$.
There exists a constant $C$ such that for every pair $u,v \in V(G)$
\begin{align*}
&\Prob{\dist_{G^*}(\cluster(u),\cluster(v))
\in \left[\frac{\dist_{G}(u,v)\cdot \beta}{8\log(n)}, \ \dist_G(u,v)\cdot C\beta\right]}
\ge 1 - \frac{1}{n^3}.
\end{align*}
\end{lemma}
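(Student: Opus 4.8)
The plan is to reuse, essentially verbatim, the lower-bound argument from the proof of Lemma~\ref{lem:dist-ratio-union-bound}: with probability $1 - n^{-3}$ every cluster has radius less than $4\log(n)/\beta$, and since any $\dist_{G^*}$-path of length $k$ from $\cluster(u)$ to $\cluster(v)$ traces a $G$-walk through $k$ clusters, each of diameter $O(\log(n)/\beta)$, this forces $\dist_{G^*}(\cluster(u),\cluster(v)) \ge \dist_G(u,v)\cdot\beta/(8\log n)$. All of the new work is in the upper bound, where the goal is to remove the spurious $\log n$ factor that appears in the upper bound of Lemma~\ref{lem:dist-ratio-union-bound}.

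That factor arose only because the earlier proof union-bounds over the $\approx \beta d/2$ subpaths, forcing each subpath to succeed with probability $1-n^{-4}$ and hence charging each one up to $C\log n$ clusters. My plan is to instead bound the \emph{total} number of distinct clusters met by a shortest path $P$ of length $d=\dist_G(u,v)$ directly, via concentration rather than a per-subpath union bound. As before, cover $P$ by $m = \ceil{d/(2\ell+1)} = \Theta(\beta d)$ subpaths of length $2\ell$ with $\ell = 1/\beta$, and let $X_i$ be the number of clusters intersecting $\ball{G}{u_i'}{\ell}$, the radius-$\ell$ ball about the center $u_i'$ of the $i$-th subpath. The clusters met by $P$, taken in order, trace a walk in $G^*$ from $\cluster(u)$ to $\cluster(v)$, so $\dist_{G^*}(\cluster(u),\cluster(v)) \le \sum_{i=1}^m X_i$. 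By Lemma~\ref{lem:cluster-exponential-hits-in-ball} with $\exp(-2\ell\beta)=e^{-2}$, each $X_i$ is stochastically dominated by a geometric random variable of constant mean, so $\mathbb{E}\!\left[\sum_i X_i\right] = \Theta(m) = \Theta(\beta d)$.

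The crux is to upgrade this expectation bound into an $n^{-3}$-probability tail bound, via a Chernoff-type inequality for a sum of geometric random variables, obtaining $\sum_i X_i \le C\beta d$ except with probability $\exp(-\Omega(m))$. This is precisely where the hypothesis $d = \Omega(\beta^{-1}\log^2 n)$ is used: it guarantees $m = \Omega(\log^2 n)$, which is enough to drive the tail below $n^{-3}$ even after the loss incurred in the dependence-handling step below.

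The main obstacle is that the $X_i$ are \emph{not} independent: the clustering is a single global random object (shared centers and $\delta$-values), so balls of neighboring subpaths can be influenced by the same centers. I would resolve this by first conditioning on the good event that every cluster has radius less than $4\log(n)/\beta$ (probability $1-n^{-3}$), after which the cluster assignment of any vertex, and hence each $X_i$, is determined by the $\delta$-values of centers within distance $O(\log(n)/\beta)$ of the corresponding ball. Two subpaths whose balls are $\Omega(\log(n)/\beta)$ apart then have conditionally independent cluster counts, so I would partition the $m$ subpaths into $O(\log n)$ color classes, spacing consecutive same-color subpaths $\Theta(\log(n)/\beta)$ apart, within which the $X_i$ are independent. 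Applying the Chernoff bound inside each class (each still has $\Omega(\beta d/\log n) = \Omega(\log n)$ independent terms) and union-bounding over the $O(\log n)$ classes yields $\sum_i X_i = O(\beta d)$, and hence the claimed upper bound, with probability $1 - n^{-3}$ after adjusting constants. The delicate points I expect to require the most care are verifying that conditioning on bounded radius does not corrupt the geometric domination supplied by Lemma~\ref{lem:cluster-exponential-hits-in-ball}, and that the locality of cluster assignment indeed furnishes honest conditional independence across a single color class.
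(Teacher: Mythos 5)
Your proposal is correct and follows essentially the same route as the paper's proof: condition on all cluster radii being at most $4\log(n)/\beta$ (which also gives the lower bound), cover the shortest path with length-$2\ell$ subpaths, color them into $O(\log n)$ classes with $\Theta(\log(n)/\beta)$ spacing so that same-color counts are (conditionally) independent, dominate each class's total by a sum of $\Omega(\log n)$ constant-mean geometric variables via Lemma~\ref{lem:cluster-exponential-hits-in-ball}, and finish with a Chernoff bound and a union bound over the color classes. The only difference is presentational: you make explicit the locality argument behind the conditional independence and the role of the hypothesis $d=\Omega(\beta^{-1}\log^2 n)$, both of which the paper leaves implicit.
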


\begin{proof}
We condition on the event that all cluster radii are at most $4\log(n)/\beta$, which fails to hold with probability $\ll n^{-3}$.
As before, the lower bound on $\dist_{G^*}(u,v)$ follows from this event.
Furthermore, this implies that sufficiently distant segments on the shortest $u$-$v$ path are essentially independent.

As before, cover the vertices of the shortest $u$-$v$ path with 
length-$2\ell$ subpaths, $\ell=1/\beta$, 
and color the subpaths with $4\log(n)+1$
colors such that any two subpaths of the same color are at distance at least
$8\log(n)/\beta$.  Each color-class contains $\Omega(\log n)$ subpaths.
By Lemma~\ref{lem:cluster-exponential-hits-in-ball} and (\ref{eqn:clusters-geom-distr}), 
the number
of clusters intersecting subpaths of a particular color class is stochastically
dominated by the sum of $\Omega(\log n)$ geometrically distributed 
random variables with constant expectation 
$\frac{1}{1-(1-\exp(-2))} = \exp(2)$.  
By a Chernoff bound, the probability that this sum deviates from its 
expectation by more than a constant factor is 
$1/\poly(n)$.
Hence, for sufficiently large $C$ 
(controlling the number of summands and the tolerable deviation)
the probability that any color-class hits too many distinct clusters 
is $\ll n^{-3}$.
\end{proof}

\begin{remark}
Lemma~\ref{lem:dist-ratio-union-bound2} cannot be improved by 
more than constant factors.   
It is easy to construct families of graphs for which both the upper and
lower bounds are tight, with high probability, depending on which 
vertex pairs are chosen.
\end{remark}

\subsection{Distributed Implementation}

The definition of $\clusterGraph{G}{\beta}$ immediately 
lends itself to a distributed implementation 
in radio networks, as was noted in~\cite{haeupler2016faster}.
For completeness we show how it can be reduced
to calls to $\sr$.

\begin{description}
\item [$\sr$:] 
We are given two disjoint vertex sets $\SSS$ and $\RR$,
where each vertex $u \in  \SSS$ holds a message $m_u$.
An $\sr$ algorithm guarantees that for
every $v \in \RR$ with $N(v) \cap \SSS \neq \emptyset$,
with probability $1-f$, $v$ receives some message
$m_u$ from {\em at least one} vertex $u \in N(v) \cap \SSS$. 
We only apply this routine with $f = 1/\poly(n)$.
\end{description}

\begin{lemma}\label{lemma:sr-decay}
$\sr$ can be implemented in $O(\log\Delta\log f^{-1})$ time and energy,
where $\Delta\leq n-1$ is an upper bound on the maximum degree.
Senders use $O(\log f^{-1})$ energy;
receivers that hear a message use $O(\log \Delta)$ energy in expectation;
receivers that hear no message use $O(\log\Delta\log f^{-1})$ energy.
\end{lemma}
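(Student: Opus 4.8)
The plan is to implement $\sr$ using the classical \emph{decay} protocol of Bar-Yehuda, Goldreich, and Itai, tuned so that receivers stop $\listen$ing as soon as they hear a message. I would group the schedule into $T = O(\log f^{-1})$ identical \emph{epochs}, each consisting of $\ceil{\log\Delta}+1$ rounds. In round $i \in \{0,1,\dots,\ceil{\log\Delta}\}$ of an epoch, every sender $u \in \SSS$ independently $\transmit$s its message $m_u$ with probability $2^{-i}$ (and $\idle$s otherwise), while every receiver $v \in \RR$ that has not yet heard a message $\listen$s.

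The heart of the argument is a constant lower bound on the per-epoch success probability of a fixed receiver $v$ with $d := |N(v)\cap\SSS| \ge 1$ senders in its neighborhood. Since $1 \le d \le \Delta$, the round $i = \ceil{\log d} \le \ceil{\log\Delta}$ is present in every epoch, and its transmission probability $p = 2^{-i}$ satisfies $\frac{1}{2d} < p \le \frac{1}{d}$. In that round the probability that \emph{exactly one} of the $d$ senders transmits is $d\,p\,(1-p)^{d-1} > \tfrac12(1-\tfrac1d)^{d-1} \ge \tfrac{1}{2e}$, so $v$ receives a message in a single epoch with probability at least the constant $c := \tfrac{1}{2e}$ --- the key point being that $v$ need not know $d$, as some round automatically matches its unknown local degree. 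Since epochs are independent, after $T = \ceil{c^{-1}\ln f^{-1}} = O(\log f^{-1})$ epochs the probability that $v$ never hears a message is at most $(1-c)^T \le e^{-cT} \le f$, which is exactly the (per-receiver) $\sr$ guarantee. The total running time is $T\,(\ceil{\log\Delta}+1) = O(\log\Delta\,\log f^{-1})$.

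For the energy bounds I would account for the three roles separately. A sender only $\transmit$s, so its energy equals its number of transmissions; within one epoch this has expectation $\sum_{i=0}^{\ceil{\log\Delta}} 2^{-i} < 2$, so over all $T$ epochs the expected sender energy is $O(\log f^{-1})$ (and concentrates by a Chernoff bound). A receiver $v$ that eventually hears a message $\listen$s for $O(\log\Delta)$ rounds per epoch but goes $\idle$ forever once it succeeds; since each epoch independently succeeds with probability at least $c$, the number of epochs it stays active is stochastically dominated by a geometric variable of mean $1/c = O(1)$, giving expected energy $O(\log\Delta)\cdot O(1) = O(\log\Delta)$. Finally, a receiver that never hears a message (e.g.\ one with no sender neighbor, or the $\le f$ failure event) simply $\listen$s throughout, spending $O(\log\Delta\,\log f^{-1})$ energy. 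The only delicate points are the exactly-one-transmitter calculation --- which is precisely what makes the unknown in-degree $d$ irrelevant --- and the early-stopping analysis that converts the per-epoch constant success probability into the $O(\log\Delta)$ expected cost for successful receivers; the rest is bookkeeping.
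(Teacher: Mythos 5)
Your proposal is essentially the paper's own proof: both are the classical decay protocol, organized into $O(\log f^{-1})$ epochs of $O(\log\Delta)$ rounds with transmission probability $2^{-i}$ in round $i$, both rest on the same exactly-one-transmitter calculation at the round matching the unknown local sender count $d$ (giving the constant $\frac{1}{2e}$ per-epoch success probability), and both obtain the $O(\log\Delta)$ expected energy for successful receivers from early stopping plus independence of epochs.

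The one place you deviate costs you a slightly weaker statement than the lemma claims, on the sender side. You have each sender flip an independent coin in every round, so a sender's energy is $O(\log f^{-1})$ only in expectation and with high probability (your Chernoff remark); in the worst case it is $\Theta(\log\Delta\log f^{-1})$. The lemma's sender clause is unconditional --- note that it reserves ``in expectation'' for receivers only --- and the paper achieves it with probability $1$ by having each sender transmit \emph{exactly once per epoch}, at a single random round $X_u$ chosen with $\Prob{X_u = t} \ge 2^{-t}$. The repair to your protocol is immediate: replace the independent per-round coins by one sample of a (truncated) geometric round per epoch; your exactly-one-transmitter calculation goes through with $p = \Prob{X_u = \ceil{\log d}} = \Theta(1/d)$, and nothing else in your analysis changes.
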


\begin{proof}
This lemma follows from a small modification to the \emph{Decay} algorithm~\cite{bar1992time}, which is known to be optimal 
in terms of \emph{time}; see Newport~\cite{Newport14}.  
For the sake of completeness, 
we provide a proof here.  Each sender $u\in \SSS$ 
repeats the following $O(\log f^{-1})$ 
times.  Randomly pick an $X_u \in [1,\log\Delta]$ 
such that $\Prob{X_u =t} \ge 2^{-t}$ and transmit $m_u$ at time step $X_u$.  
The energy of any sender is clearly $O(\log f^{-1})$ with probability 1.
For a receiver $v\in \RR$, if the number of senders in 
$N(v)$ is in the range $[2^{t-1},2^t]$, $v$ will receive \emph{some}
message with constant probability in the $t$th timestep of every iteration.
Receivers with no adjacent sender will never detect this, and
spend $\Theta(\log\Delta\log f^{-1})$ energy.
\end{proof}

We show that $\clusterGraph{G}{\beta}$ can be computed, w.h.p., 
using $4\log(n)/\beta$ $\srs$ in the communication network $G=(V,E)$.
Every vertex $u$ will learn its cluster-identifier $\ID(\cluster(u)))$ and get a label $\LL(v)$
such that $\LL(v)=0$ iff $v$ is a cluster center
and $\LL(v)=i$ iff there is a $u\in N(v)$ with $\LL(u)=i-1$ such that $\cluster(u)=\cluster(v)$.
If $\LL(v) = i$, we say that $v$ is at \emph{layer $i$}. 

The graph $\clusterGraph{G}{\beta}$ is constructed as follows.
Every vertex $v$ picks a value $\delta_v\sim \text{Exponential}(\beta)$
and sets its start time to be
$\text{start}_v\gets \lceil \frac{4 \log(n)}{\beta} - \delta_v\rceil$.  With probability at least $1-1/n^3$, all start times are positive.
For $i = 1$ to $4\log(n)/\beta$, do the following. 
At the beginning of the $i$th iteration, if $v$ is not yet in any cluster and
$\text{start}_v=i$, then $v$ becomes a cluster center
and sets $\LL(v)=0$.
During the $i$th iteration, we execute $\sr$
with $\SSS$ being the set of all clustered vertices and 
$\RR$ the set of all as-yet unclustered vertices.
The message of $u\in \SSS$ contains $\ID(\cluster(u))$ and $\LL(u)$.
Any vertex $v \in \RR$ receiving a message from $u \in \SSS$ joins 
$u$'s cluster and sets $\LL(v)=\LL(u)+1$.
Lemma~\ref{lem:clustr-diam-ub} follows immediately from the 
above construction.

\begin{lemma}\label{lem:clustr-diam-ub}
The cluster graph $\clusterGraph{G}{\beta}$ can be constructed using 
$4\log(n)/\beta$ $\srs$ with probability $1-1/n^3$.  This
takes $O(\log^3(n)/\beta)$ time and 
$O(\log^3(n)/\beta)$ energy per vertex.
\end{lemma}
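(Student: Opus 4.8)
The statement bundles a correctness/probability claim with a resource claim, and I would treat them separately. The plan is first to argue that the discretized distributed loop faithfully simulates the continuous cluster-growth process of the MPX construction, then to bound the failure probability by a union bound over the two sources of randomness (the exponential start times and the per-call failures of $\sr$), and finally to read off the time and energy by substituting the guarantees of Lemma~\ref{lemma:sr-decay} into the iteration count. For correctness, I would identify iteration $i$ of the loop with simulating integer time $i$ of the process in which a cluster starts at each center $u$ at time $\text{start}_u=\ceil{4\log(n)/\beta-\delta_u}$ and floods outward one edge per step. Under this identification the signal of center $u$ reaches $v$ at iteration $\text{start}_u+\dist_G(v,u)$, so the first cluster to reach $v$ is the one minimizing $\text{start}_u+\dist_G(v,u)$, which up to rounding is the same minimizer as $\dist_G(v,u)-\delta_u$, i.e.\ exactly the cluster $\cluster(v)$ of the definition; the layer $\LL(v)$ assigned by the loop is then the within-cluster hop-distance from $v$ to its center, as required. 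The crucial conditional event is that \emph{every} $\sr$ call delivers a message to every unclustered receiver that has a clustered neighbor: if this holds in every iteration, each vertex is absorbed at precisely the first integer time its winning signal arrives, and no vertex is captured late or by a slower cluster.

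For the probability, two events suffice. First, all start times are positive, equivalently $\delta_v<4\log(n)/\beta$ for all $v$; since $\Prob{\delta_v\ge 4\log(n)/\beta}=\exp(-4\log n)=n^{-4}$, a union bound over the $\le n$ vertices bounds this failure probability by $n^{-3}$. Together with $\dist_G(v,u^{*})-\delta_{u^{*}}\le 0$ for $v$'s winning center $u^{*}$, this guarantees every winning arrival time lies in $[1,4\log(n)/\beta]$, so the loop terminates with every vertex clustered. Second, every one of the $4\log(n)/\beta$ $\sr$ calls succeeds; running each with $f=1/\poly(n)$ and union-bounding over the (iteration, receiver) pairs makes the total failure probability $\ll n^{-3}$, which I would absorb into the stated $1-1/n^{3}$.

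For the resource bounds, the loop performs exactly $4\log(n)/\beta$ $\sr$ calls. Applying Lemma~\ref{lemma:sr-decay} with $\Delta\le n-1$, so $\log\Delta=O(\log n)$, and $f=1/\poly(n)$, so $\log f^{-1}=O(\log n)$, each call costs $O(\log^{2} n)$ time; multiplying by the iteration count gives the $O(\log^{3}(n)/\beta)$ total time. For energy, each vertex is a receiver while unclustered and a sender thereafter, and the dominant per-iteration cost is the ``receiver hears nothing'' term $O(\log\Delta\log f^{-1})=O(\log^{2} n)$; summing over the $4\log(n)/\beta$ iterations yields $O(\log^{3}(n)/\beta)$ energy per vertex.

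The main obstacle is the correctness step, specifically making the discrete–continuous correspondence airtight: the ceiling in $\text{start}_v=\ceil{4\log(n)/\beta-\delta_v}$ introduces a possible off-by-one at the boundary of the time window and can perturb near-ties between two competing signals, so I would need to check that rounding never reassigns a vertex to the wrong cluster and that the loop bound of $4\log(n)/\beta$ iterations is not short by one. Once the simulation is shown to deviate from the ideal process only when some $\sr$ call fails, the probability and resource parts are routine.
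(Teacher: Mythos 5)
Your proposal is correct and follows essentially the same route as the paper: the paper presents exactly this construction (start times $\text{start}_v=\ceil{4\log(n)/\beta-\delta_v}$, a loop of $4\log(n)/\beta$ iterations each executing one $\sr$ from clustered to unclustered vertices) and then states that the lemma ``follows immediately,'' with the probability bound coming from all start times being positive and the resource bounds from multiplying the iteration count by the $O(\log^2 n)$ per-call cost of Lemma~\ref{lemma:sr-decay}. Your additional care about the ceiling-induced rounding and near-ties is a fair point, but the paper simply treats the discretized process as the working definition of $\clusterGraph{G}{\beta}$ and does not address it either, so your write-up is, if anything, more complete on exactly the step the paper elides.
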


\section{Communication Primitives for the Cluster Graph}\label{sect:cluster-sim}

Our BFS algorithm forms a cluster graph $G^*$
and computes BFS recursively on numerous subgraphs of $G^*$.
In order for this type of recursion to work, we need to argue
that algorithms on the (abstract) $G^*$ can be simulated, with some time and energy cost, on the underlying $G$.
We focus on algorithms that are composed \emph{exclusively} of
calls to $\sr$ (as our BFS algorithm is), but the method can be used to simulate arbitrary radio network algorithms.

We use the primitives $\downcast$ and $\upcast$ to allow
cluster centers to disseminate information to their constituents
and gather information from some constituent.

\begin{description}
\item[$\downcast$:] There is a set $\UU$ of vertices such that each $u \in \UU$ is a cluster center, and the goal is to let each $u \in \UU$ broadcast a message $m_u$ to all members of $\cluster(u)$.
\item[$\upcast$:] There is a set $\UU$ of vertices such that each $u \in \UU$ wants to deliver a message $m_u$ to the center of $\cluster(u)$. 
Any cluster center $v$ with at least one $u\in \UU\cap \cluster(v)$
must receive \emph{any} message from one such vertex.
\end{description}

\begin{lemma}\label{lemma:sr-cluster}
$\upcast$ and $\downcast$ can be implemented
with $O\parens{\frac{\log^3 n}{\beta\log(1/\beta)}}$ calls
to $\sr$ on $G$, in which each vertex participates in 
$O(\log n)$ $\srs$.  I.e., the total time and energy per vertex
are $O\parens{\frac{\log^5 n}{\beta\log(1/\beta)}}$ 
and $O(\log^3 n)$, respectively.
\end{lemma}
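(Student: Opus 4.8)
The plan is to implement both primitives by propagating messages layer-by-layer along the within-cluster BFS trees already produced during the construction of $\clusterGraph{G}{\beta}$, using the layer labels $\LL(\cdot)$. For $\downcast$ the message $m_C$ flows outward from the center (layer $0$) of each cluster $C$; for $\upcast$ it flows inward toward the center, where we only need one message per cluster to survive. In both cases a single $\sr$ is meant to advance a message one layer within every cluster simultaneously. There are two obstacles: (i) a receiver at layer $i+1$ in cluster $C$ must hear from a layer-$i$ vertex of its \emph{own} cluster, not from a neighboring cluster; and (ii) the per-vertex energy must stay $O(\log^3 n)$, i.e., each vertex may take part in only $O(\log n)$ $\srs$.

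To address (i) I would tag every message with its cluster identifier, so that a receiver discards anything not labelled with its own $\ID(\cluster(\cdot))$; this rules out corruption but not the possibility that a receiver repeatedly fails because $\sr$ keeps delivering a neighboring cluster's message. To force progress I would isolate clusters by color. First, applying Lemma~\ref{lem:cluster-exponential-hits-in-ball} with $\ell = 1$ shows that, w.h.p., at most $k = O\parens{\frac{\log n}{\log(1/\beta)}}$ distinct clusters intersect any $\ball{G}{v}{1}$. I would then fix a palette of $K = \Theta(k)$ colors and give each cluster a schedule of $T = \Theta(\log n)$ independent uniformly random colors, known to all of its members (folded into the cluster identifier at construction time). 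A single ``super-round'' indexed by $t \le T$ runs $K$ calls to $\sr$, one per color; in the $c$-th call the senders are exactly the message-holding frontier vertices whose cluster has color $c$ in coordinate $t$, and each vertex listens or transmits only in the call matching its own cluster's color. Since at most $k-1$ other clusters share $\ball{G}{v}{1}$ with $\cluster(v)$, the probability that one of them collides with $v$'s color in a given super-round is at most $(k-1)/K \le 1/2$, so $v$ hears its own cluster's frontier message with probability $\ge 1/2$; over the $T = \Theta(\log n)$ independent super-rounds the failure probability drops below $n^{-c}$, and a union bound over all vertices handles the whole layer advance w.h.p.

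For (ii) I would organize the computation into $R = 4\log(n)/\beta$ blocks, one per layer, where block $i$ consists of the $T$ super-rounds that move every cluster's message from layer $i$ to layer $i+1$ (with the ordering reversed for $\upcast$). A vertex at layer $i$ is active only in block $i-1$ (as a receiver) and block $i$ (as a sender), and within each active block it participates in exactly one $\sr$ per super-round, the one matching its cluster's scheduled color, so it joins only $2T = O(\log n)$ $\srs$ in total, giving the claimed per-vertex participation. The aggregate count is $R \cdot T \cdot K = O\parens{\frac{\log n}{\beta}} \cdot O(\log n) \cdot O\parens{\frac{\log n}{\log(1/\beta)}} = O\parens{\frac{\log^3 n}{\beta\log(1/\beta)}}$ calls to $\sr$. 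Invoking Lemma~\ref{lemma:sr-decay} with $f = 1/\poly(n)$ and $\Delta \le n-1$, each $\sr$ costs $O(\log^2 n)$ time and energy, which multiplies through to the stated $O\parens{\frac{\log^5 n}{\beta\log(1/\beta)}}$ total time and $O(\log^3 n)$ energy per vertex. Correctness follows by induction on the blocks: conditioned on all layer-$i$ vertices holding the message at the start of block $i$ (which holds w.h.p. by the block-$(i-1)$ analysis), every layer-$(i+1)$ vertex receives it by the end of block $i$.

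I expect the main obstacle to be reconciling requirement (i) with the randomness budget: the color-isolation argument needs genuinely fresh randomness in each of the $\Theta(\log n)$ super-rounds, yet cluster members cannot cheaply re-coordinate a fresh common color. The fix I would pursue is to commit all $T$ random colors per cluster once and reuse the \emph{same} $T$ colorings across every block (each receiver's analysis depends only on its own block's colorings, so no independence across blocks is needed) and across both $\upcast$ and $\downcast$; this keeps the shared per-cluster randomness to $O(\log n\cdot\log K) = O(\log n\log\log n)$ bits, which can be carried inside the cluster identifier during the construction of Lemma~\ref{lem:clustr-diam-ub} at an $O(\log\log n)$ overhead that does not affect the bounds claimed here. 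Checking that this reuse introduces no harmful correlation between the success events of distinct vertices, and that the sending frontier is always nonempty for a cluster whose message has reached layer $i$, are the remaining points requiring care.
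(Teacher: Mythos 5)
Your proposal is correct and takes essentially the same approach as the paper's proof: both commit per-cluster shared randomness at cluster-formation time (disseminated with the cluster ID), invoke Lemma~\ref{lem:cluster-exponential-hits-in-ball} to bound the per-vertex contention by $\CC = O\parens{\frac{\log n}{\log(1/\beta)}}$, randomly schedule clusters so that each vertex is isolated from all neighboring clusters in at least one of its $\Theta(\log n)$ scheduled slots w.h.p., and propagate layer-by-layer, which yields the identical count of $O\parens{\frac{\log n}{\beta}}\cdot O(\log n)\cdot O(\CC)$ calls to $\sr$ with $O(\log n)$ participations per vertex. The only (immaterial) difference is the realization of the schedule: the paper has each cluster draw a random subset $S_{\cluster(v)} \subseteq [\ell]$, $\ell = \Theta(\CC\log n)$, with inclusion probability $1/\CC$, whereas you draw $\Theta(\log n)$ independent uniform colors from a palette of size $\Theta(\CC)$.
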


\begin{proof}
Consider the following two quantities:
\begin{description}
\item[] $\CC = O(\log_{(1/\beta)} n)$. By Lemma~\ref{lem:cluster-exponential-hits-in-ball}, $\CC$ is an upper bound on the number of clusters intersecting $N(v) \cup \{v\}$, with high probability. 
Intuitively, $\CC$ represents the {\em contention} at $v$.
\item[] $\DD = 4\log(n)/\beta$ is the maximum radius of any cluster, i.e., the maximum $\LL$-value is at most $\DD$.
\end{description}

If there were only \emph{one} cluster, then doing an $\upcast$
or $\downcast$ would be easily reducible to $O(\log(n)/\beta)$ $\srs$.
In order to minimize interference between neighboring clusters,
we modify, slightly, the clustering algorithm so that all constituents of a cluster have shared randomness.  
When a new cluster center $v$
is formed, it generates a subset $S_{\cluster(v)} \subset [\ell]$, 
$\ell = \Theta(\CC\log n)$, 
by including each index independently with probability $1/\CC$. 
It disseminates $S_{\cluster(v)}$ to all 
members of $\cluster(v)$ along with $\ID(\cluster(v))$.
It is straightforward to show that with probability $1-1/\poly(n)$, 
for every $v$,
\begin{equation}\label{eqn:isolation}
\text{There exists } j\in[\ell] : j\in S_{\cluster(v)} 
\text{ and for all } u\in N(v), j\not\in S_{\cluster(u)} 
\end{equation}
$\downcast$ is implemented in $\DD$ stages, each stage consisting
of $\ell$ steps.  In step $j$ of stage $i$, we execute $\sr$
with $\SSS$ consisting of every $v$ with a message to send
such that $\LL(v)=i-1$ and $j\in S_{\cluster(v)}$,
and with $\RR$ consisting of every $u$ with $\LL(u)=i$
and $j\in S_{\cluster(u)}$.  By (\ref{eqn:isolation}), 
during stage $i$, every layer-$i$ vertex in every participating
cluster receives the cluster center's message with high probability.
An $\upcast$ is performed in an analagous fashion.

Each $\upcast$/$\downcast$ performs $\ell\DD=\Theta(\CC\DD\log n)=O(\frac{\log^3 n}{\beta\log(1/\beta)})$ 
$\sr$ on $G$, for a total of 
$O(\frac{\log^5 n}{\beta\log(1/\beta)})$ time. 
Each vertex $v$ participates in $O(|S_{\cluster(v)}|)$
$\srs$, which is $O(\log n)$ w.h.p., for a total
of $O(\log^3 n)$ energy.
\end{proof}

\begin{lemma}\label{lem:sr-sim}
A call to $\sr$ on the cluster graph $G^* = \clusterGraph{G}{\beta}$
can be simulated with $O\left(\frac{\log^3 n}{\beta\log(1/\beta)}\right)$ calls to $\sr$ on $G$;
each vertex in $V(G)$ participates in $O(\log n)$ $\srs$.
\end{lemma}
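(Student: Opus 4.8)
The plan is to reduce a single $\sr$ on $G^*$ to three phases executed on $G$ --- a $\downcast$, then one ordinary $\sr$ on $G$, then an $\upcast$ --- and to charge the two enveloping operations using Lemma~\ref{lemma:sr-cluster}. Write $\SSS^*$ and $\RR^*$ for the disjoint sets of sender- and receiver-clusters participating in the $G^*$-level $\sr$, where each sender cluster $C \in \SSS^*$ holds its message $m_C$ at its center. The first phase is a $\downcast$ that delivers $m_C$ from the center of each $C\in\SSS^*$ to \emph{every} constituent of $C$; in particular every boundary vertex of $C$ now holds $m_C$.

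Second, I would perform a single $\sr$ on $G$ with sender set $\SSS=\bigcup_{C\in\SSS^*}V(C)$, each $u\in\SSS$ transmitting the message $m_{\cluster(u)}$ it just learned, and receiver set $\RR=\bigcup_{C'\in\RR^*}V(C')$; since $\SSS^*$ and $\RR^*$ are disjoint, $\SSS$ and $\RR$ are disjoint. The crux is the correspondence between adjacency in $G^*$ and adjacency in $G$: a receiver cluster $C'$ has a sender-cluster neighbor in $G^*$ exactly when some $v\in V(C')$ has a neighbor $u\in V(C)$ with $C\in\SSS^*$ in $G$. For any such witnessing $v$, the defining guarantee of $\sr$ on $G$ ensures that $v$ hears \emph{some} message $m_C$ with probability $1-f$, notwithstanding contention among competing sender clusters, since resolving exactly this contention is what the $\sr$ primitive provides. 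Third, I would run an $\upcast$ in which every $v\in\RR$ that heard a message forwards it toward the center of $\cluster(v)$; by the $\upcast$ guarantee, each center of a cluster $C'\in\RR^*$ containing at least one such $v$ receives one of these messages, which is precisely the output required of the $G^*$-level $\sr$.

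For the accounting, the middle phase is a single $\sr$ on $G$ in which each vertex participates once, while the two enveloping operations each cost $O\!\left(\frac{\log^3 n}{\beta\log(1/\beta)}\right)$ calls to $\sr$ on $G$ with each vertex participating in $O(\log n)$ of them, by Lemma~\ref{lemma:sr-cluster}. Summing yields $O\!\left(\frac{\log^3 n}{\beta\log(1/\beta)}\right)$ calls to $\sr$ on $G$ with $O(\log n)$ $\srs$ per vertex, matching the claim, and a union bound over the three sub-operations, each tuned to fail with probability $1/\poly(n)$, keeps the overall failure probability at $1/\poly(n)$, as the $\sr$ specification permits.

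I expect the only genuine content to be the correctness of the middle phase: verifying that the cluster-level adjacency condition $N_{G^*}(C')\cap\SSS^*\neq\emptyset$ is faithfully witnessed at the vertex level by some $v\in V(C')$ with an $\SSS$-neighbor in $G$, so that the per-vertex $\sr$ guarantee on $G$ lifts to a per-cluster guarantee once the $\upcast$ collects the result at the center. Everything else --- the cost bookkeeping and the failure probability --- follows immediately from Lemma~\ref{lemma:sr-cluster} and a routine union bound.
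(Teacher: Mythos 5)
Your proposal is correct and follows essentially the same route as the paper: a $\downcast$ in the sender clusters, a single $\sr$ on $G$ between the unions of sender- and receiver-cluster vertices, and an $\upcast$ in the receiver clusters, with the cost dominated by the two enveloping operations via Lemma~\ref{lemma:sr-cluster}. The correctness observation you isolate (that $G^*$-adjacency is witnessed by a $G$-edge between constituents, so the vertex-level $\sr$ guarantee lifts to the cluster level) is exactly the step the paper relies on as well.
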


\begin{proof}
Let $\SSS$ and $\RR$ be the sets of sending and receiving clusters in $G^*$.
All members of $C$ know that $C$ is in $\SSS$ or $\RR$.  
The $\sr$ algorithm has three steps. 
\begin{enumerate}
    \item Begin by doing a $\downcast$ in each $C\in \SSS$.  Each member of $C$ learns the 
    message $m_C$.
    \item Perform one $\sr$ on $G$, with sender set $\bigcup_{C\in \SSS} C$ and receiver
    set $\bigcup_{C'\in \RR} C'$.  At this point, w.h.p., every $\RR$-cluster adjacent to an $\SSS$-cluster has at least one constituent that has received a message.
    \item Finally, do one $\upcast$ on every cluster $C\in \RR$ to let the cluster center 
    of $C$ learn one message from a constituent of $C$, if any.
\end{enumerate}
The algorithm clearly satisfies the requirement of $\sr$ on $\clusterGraph{G}{\beta}$.
The number of calls to $\sr$ on $G$ is 
$O\left(\CC\DD\log n\right) = O\left(\frac{\log^3 n}{\beta\log(1/\beta)}\right)$
and each vertex participates in $O(\log n)$ of them.
\end{proof}

\section{BFS with Sub-polynomial Energy}\label{sect:BFS}

\subsection{Technical Overview}

Suppose every vertex in the graph could cheaply compute 
its distance from the source up to an 
\underline{\emph{additive}} $\pm \rho$ error.
Given this knowledge, 
we could trivially solve exact BFS in $\tilde{O}(D)$ time
and $\tilde{O}(\rho)$ energy per vertex, simply
by letting vertices sleep through steps that they need
not participate in.  In particular, we would advance
the \emph{BFS wavefront} one layer at a time using calls 
to $\sr$, except that each vertex $u$ would sleep through the
first $\widetilde{\dist_G}(s,u)-\rho$ calls to $\sr$,
where $\widetilde{\dist_G}$ is the approximate distance.
It would be guaranteed to fix $\dist_G(s,u)$ (and halt)
in the next $2\rho$ calls to $\sr$.  

Lemmas~\ref{lem:dist-ratio-union-bound}
and \ref{lem:dist-ratio-union-bound2} suggest a method of
obtaining approximate distances.  If we computed the cluster graph 
$G^* = \clusterGraph{G}{\beta}$ and then computed
\emph{exact} distances on $G^*$, Lemmas~\ref{lem:dist-ratio-union-bound}
and \ref{lem:dist-ratio-union-bound2} allow us to 
approximate all distances from the source, up 
to an additive error of $\tilde{O}(\beta^{-1})$ (for small distances)
and multiplicative error of $w^2$ (for larger distances),
where $w = \Theta(\log n)$ is a sufficiently large multiple of $\log n$.
Note that, from the perspective of energy efficiency, the main 
advantage to computing distances in $G^*$ rather than $G$ is
that $G^*$ has a smaller diameter $w\beta\cdot\diam(G)$.

Our algorithm computes distances up to $D$ by advancing 
the \emph{BFS wavefront} in 
$\ceil{\beta D}$ stages, extending the radius $\beta^{-1}$ per stage.
The \emph{$i$th wavefront} $W_i$ is defined to be the vertex set
\[
W_i = \{u\in V(G) \mid \dist_G(S,u) = i\beta^{-1}\},
\]
where $S$ is the set of sources. (Recall that $\beta^{-1}$ is an integer.)  
To implement the $i$th stage correctly it suffices to activate 
a vertex set $X_i$ that includes all the affected vertices, in particular:
\[
X_i \supset \{u \in V(G) \mid \dist_G(S,u) \in [i\beta^{-1},(i+1)\beta^{-1}]\} \ \ \mbox{ (w.h.p.)}
\]
In order for each vertex $u$ to decide whether it should join $X_i$ or sleep through the $i$th stage, $u$ maintains lower and upper bounds on its distance to the $i$th wavefront, or more accurately, the distance
from its cluster $\cluster(u)$ to $W_i$ in $G$.
\begin{invariant}\label{inv:interval}
Before the $i$th stage begins, each vertex $u$ knows
$L_i(\cluster(u))$ and $U_i(\cluster(u))$ such that
\[
\dist_G(W_i,\cluster(u)) = \dist_G(S,\cluster(u)) - i\beta^{-1} 
\in [L_i(\cluster(u), U_i(\cluster(u))].
\]
\end{invariant}
Clearly, if some cluster $C$ satisfies 
Invariant~\ref{inv:interval} at stage $i-1$ 
with the interval $[L_{i-1}(C),U_{i-1}(C)]$,
it also satisfies Invariant~\ref{inv:interval} at stage $i$
with $L_i(C) = L_{i-1}(C) - \beta^{-1}$ and $U_i(C) = U_{i-1}(C) - \beta^{-1}$ since the $(i-1)$th stage advances the wavefront by exactly $\beta^{-1}$. 
In the algorithm these are called \emph{Automatic Updates};
they can be done locally, without expending any energy.
In order to keep the interval $[L_i(C),U_i(C)]$ relatively narrow
(and hence useful for keeping vertices in $C$ asleep), we occasionally
refresh it with a \emph{Special Update}.  
Let $W_i^* \subseteq V(G^*)$ be the clusters in $G^*$ that intersect
the wavefront $W_i$.  We call BFS on a subgraph $G_i^*$ of $G^*$ 
from the source-set $W_i^*$, up to a radius of $Z[i]$.  
The only clusters 
that participate in this recursive call are those that are likely
to be relevant, i.e., those $C$ for which $L_i(C) \le Z[i]\cdot \beta^{-1}$.  
(The $Z[i]$ sequence will be defined shortly.)
After this recursive call completes we update
$[L_i(C),U_i(C)]$ for all participating $C$ by
applying Lemmas~\ref{lem:dist-ratio-union-bound} and \ref{lem:dist-ratio-union-bound2} 
to the (exact) distance 
$\dist_{G_i^*}(W_i^*, C)$ obtained 
in the cluster graph.

\begin{figure*}
    \centering
    \framebox{
    \begin{minipage}{5.8in}
$\bfs(G,S,A,D)$
    \begin{enumerate}
        \item[] \hfill {\bf [Initialize Distance Estimates]}
        \item[1.] Call $\bfs(G^*,S^*,A^*,D^*)$ where $D^* = w\beta D$.  
        For each cluster $C$ in $A^*$, 
        \begin{align*}
            L_0(C) &\leftarrow \dist_{A^*}(S^*,C)\cdot \frac{1}{\beta w},
            &U_0(C) &\leftarrow \max\left\{w\beta^{-1}, w^2\cdot L_0(C)\right\}.
        \end{align*}
        \item[2.] $A = A \backslash \{u \mid L_0(\cluster(u)) = \infty\}$. \hfill {\small \emph{(Deactivate vertices at distance greater than $D$, w.h.p.)}}
        \item[3.] For $i$ from $0$ to $\ceil{\beta D}-1$ 
        \item[] \hfill {\bf [Iteratively Advance BFS Wavefront $\beta^{-1}$ Steps]}
        \begin{enumerate}
            \item[4.] Define $X_i = \{u \in A \mid L_i(\cluster(u)) \leq \beta^{-1}\}$.
            \item[5.] Advance BFS wavefront from $W_i$ to $W_{i+1}$ using $\beta^{-1}$ calls to $\sr$. Only vertices in $X_i$ participate in this step.
            \item[6.] $A \leftarrow A \backslash \{u \mid \dist_G(S,u) < (i+1)\beta^{-1}\}$.\hfill \emph{(Deactivate settled vertices.)}
            \item[] \hfill {\bf [Estimate Distances to $(i+1)$th Wavefront $W_{i+1}$]}
            \item[7.] Define $G_{i+1}^*$ to be the subgraph of $G^*$ induced by
            \[
            \Upsilon = \{C \in A^* \mid L_{i}(C) \leq (Z[i+1]+1)\cdot \beta^{-1}\}.
            \]
            Vertices in $\Upsilon$-clusters participate in a 
            \emph{Special Update}.
            
            Call $\bfs(G^*,W_{i+1}^*,\Upsilon,Z[i+1])$.
            
            For each cluster $C$ with 
            $\dist_{G_{i+1}^*}(W_{i+1}^*,C) = x$, set
            \begin{align*}
                L_{i+1}(C) &\leftarrow \min\{Z[i+1]\cdot \beta^{-1} + 1,\, x\cdot \beta^{-1}/w\},\\
                U_{i+1}(C) &\leftarrow \min\{U_{i}(C) - \beta^{-1},\, \max\{x,1\}\cdot \beta^{-1}w\}.
            \end{align*}
            \item[8.] Active vertices that did not participate in the Special Update perform an \emph{Automatic Update}. For each $C\in A^* \backslash \Upsilon$,
                \begin{align*}
                    L_{i+1}(C) &\leftarrow L_i(C) - \beta^{-1}\\
                    U_{i+1}(C) &\leftarrow U_i(C) - \beta^{-1}
                \end{align*}
        \end{enumerate}
    \end{enumerate}
    \end{minipage}
    }
    \caption{$\bfs$.}
    \label{fig:BFS}
\end{figure*}

\paragraph{Specification.}
Our $\bfs$ procedure (see Figure~\ref{fig:BFS}) takes four parameters: $G$, 
the graph, $S \subset V(G)$, the set of sources, $A \subseteq V(G)$, 
the set of \emph{active} vertices (which is a superset of $S$), 
and $D$, the depth of the search.  When we make a call to $\bfs$, 
every vertex can locally calculate $D$ and whether it is in $S$ or 
$A$.\footnote{The purpose of the $A$ parameter is to refrain
from computing useless information. E.g., when we compute the distance
from the clusters $W_i^*$ intersecting the $i$th wavefront, we are only
interested in distances to clusters intersecting as-yet unvisited vertices
(those intersecting $A$), not settled vertices ``behind'' the wavefront.}
$G^*$ denotes the cluster graph returned by $\clusterGraph{G}{\beta}$, where
$\beta$ is a parameter fixed throughout the computation.  
We compute $G^*$ once, just before the first recursive call to $\bfs(G,\cdot,\cdot,\cdot)$;
subsequent calls to $\bfs$ on $G$ with different $(S,A,D)$ parameters can use the same $G^*$. 
It is important to remember that $G$ can be either the actual radio network (RN)
or a \emph{virtual} RN on which we can simulate RN algorithms, with a certain
overhead in terms of time and energy.
At the termination of $\bfs(G,S,A,D)$, every vertex $u \in A$ returns
$\dist_{A}(S,u)$ if it is at most $D$, and $\infty$ otherwise.
Vertices in $V(G)\backslash A$ expend no energy.

\paragraph{Correctness.}
If one believes that the algorithm (Figure~\ref{fig:BFS})
faithfully implements the high level description given so far,
its correctness is immediate.
Every time we set $[L_i(C),U_i(C)]$
the interval is correct with probability $1-1/\poly(n)$, 
either because $[L_{i-1}(C),U_{i-1}(C)]$ is correct (an Automatic Update),
or because they are set according to 
Lemmas~\ref{lem:dist-ratio-union-bound} 
and \ref{lem:dist-ratio-union-bound2}, which 
hold with probability $1-1/\poly(n)$ 
(Special Update).
If $L_i(C)$ is correct for all $C$, then $X_i$ will include 
all vertices necessary to compute the $(i+1)$th wavefront,
and the $i$th stage will succeed, up to the $1/\poly(n)$
error probability inherent in calls to $\sr$.  
The main question is whether
the procedure is \emph{efficient}.

\paragraph{Efficiency.}
We will argue that for a very specific $Z[\cdot]$ sequence, which guides
the Special Update steps, the following claims hold:
\begin{claim}\label{claim:X}
Each vertex is included in the set $X_i$ for $\tilde{O}(1)$ values of $i$.
\end{claim}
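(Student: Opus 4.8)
The plan is to sandwich the set of stages at which $u$ is active within a window of width $\tilde{O}(1)$. Fix a vertex $u$, write $C=\cluster(u)$ and $d=\dist_G(S,C)$, and condition on the high-probability events that the conclusion of Lemma~\ref{lem:dist-ratio-union-bound} holds for all relevant pairs and that every cluster has radius $O(\log n/\beta)$. I will show that every stage $i$ with $u\in X_i$ satisfies
\[
d\beta - O(\log^2 n) \;\le\; i \;\le\; d\beta + O(\log n),
\]
which immediately bounds the number of such $i$ by $O(\log^2 n)=\tilde{O}(1)$. The upper inequality is the easy ``settling'' direction; the lower inequality---that the estimate $L_i(C)$ cannot have fallen to $\beta^{-1}$ unless the wavefront $W_i$ is genuinely within $O(\log^2 n\cdot\beta^{-1})$ of $C$---is the crux, and is where the specific form of the Special Update is used.

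The linchpin is the following observation: whenever $u\in X_i$ (i.e.\ $L_i(C)\le\beta^{-1}$), the value $L_i(C)$ must have been produced by a \emph{Special Update}, never by an Automatic Update. Indeed, assuming the sequence satisfies $Z[\cdot]\ge 1$, an Automatic Update at the end of stage $i-1$ (step~8) requires $C\notin\Upsilon$, i.e.\ $L_{i-1}(C)>(Z[i]+1)\beta^{-1}$; subtracting $\beta^{-1}$ then leaves $L_i(C)>Z[i]\beta^{-1}\ge\beta^{-1}$, contradicting $u\in X_i$. (The base case $i=0$, where $L_0$ is fixed by the initialization in step~1 via $L_0(C)=\dist_{A^*}(S^*,C)/(\beta w)$, is handled identically.) This self-refreshing behaviour is exactly what keeps $L_i$ from drifting: once a cluster enters the ``Special-Update zone'' it is recomputed from scratch every stage, so its lower bound never accumulates the slack that repeated Automatic Updates would introduce.

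For the lower inequality, suppose $u\in X_i$. By the linchpin, $L_i(C)=\min\{Z[i]\beta^{-1}+1,\;x\beta^{-1}/w\}$ with $x=\dist_{G^*_i}(W_i^*,C)$ the exact distance returned by the recursive call. Since $Z[i]\beta^{-1}+1>\beta^{-1}\ge L_i(C)$, the binding term is $x\beta^{-1}/w\le\beta^{-1}$, forcing $x\le w$. Because $G^*_i$ is an induced subgraph of $G^*$, distances only grow, so $x\ge\dist_{G^*}(W_i^*,C)$, and the lower-bound half of Lemma~\ref{lem:dist-ratio-union-bound} gives $x\ge\floor{\dist_G(W_i,C)\cdot\beta/(8\log n)}$. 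Combining, $\dist_G(W_i,C)=O(w^2\beta^{-1})=O(\log^2 n\cdot\beta^{-1})$, and since Invariant~\ref{inv:interval} identifies $\dist_G(W_i,C)=d-i\beta^{-1}$, we conclude $i\ge d\beta-O(\log^2 n)$.

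The upper inequality follows from deactivation: step~6 removes $u$ from $A$ as soon as the wavefront passes it, so $u\in A$ at the start of stage $i$ forces $\dist_G(S,u)\ge i\beta^{-1}$. Bounding $\dist_G(S,u)\le\dist_G(S,C)+\diam(C)\le d+O(\log n/\beta)$ via the cluster-radius bound yields $i\le d\beta+O(\log n)$. Together with the lower inequality this proves the claim. I expect the main obstacle to be making the linchpin observation airtight, since it rests on the precise interplay between the activation threshold $\beta^{-1}$ defining $X_i$, the larger threshold $(Z[i]+1)\beta^{-1}$ defining the Special-Update set $\Upsilon$, and the assumption $Z[\cdot]\ge 1$; if any of these were mismatched, an Automatic Update could leave $L_i(C)$ small while $W_i$ is still far, defeating the distance-to-stage conversion.
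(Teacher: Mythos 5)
Your proof is correct, and its crux --- the ``linchpin'' that $L_i(C)\le\beta^{-1}$ can only have been produced by a Special Update (an Automatic Update would require $L_{i-1}(C)>(Z[i]+1)\beta^{-1}$, leaving $L_i(C)>Z[i]\beta^{-1}\ge\beta^{-1}$), which in turn forces $x=\dist_{G_i^*}(W_i^*,\cluster(u))\le w$ --- is exactly the key step in the paper's proof. Where you diverge is in how $x\le w$ is converted into a stage count. The paper reads off $U_i(\cluster(u))\le\max\{x,1\}\cdot w\beta^{-1}\le w^2\beta^{-1}$ from the Special Update formula, then uses the facts that $U_\cdot(\cdot)$ is a valid upper bound on the distance to the wavefront (Invariant~\ref{inv:interval}) and decreases by $\beta^{-1}$ per stage, so $u$ is settled and deactivated in Step~6 within $w^2$ further stages. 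You instead bypass the $U$-estimates entirely: you turn $x\le w$ into a bound on the \emph{true} distance $\dist_G(W_i,\cluster(u))=O(w^2\beta^{-1})$ via monotonicity of induced-subgraph distances and the lower-bound half of Lemma~\ref{lem:dist-ratio-union-bound} (equivalently, the cluster-radius event), and then sandwich the admissible stage indices in a window of width $O(\log^2 n)$ around $\beta\cdot\dist_G(S,\cluster(u))$, using the deactivation rule of Step~6 for the upper end. Both finishes rest on the same underlying high-probability event (all cluster radii $O(\log n/\beta)$) and give the same $O(w^2)$ bound; the paper's is shorter because it reuses the algorithm's own bookkeeping, while yours is somewhat more self-contained in that it never invokes the correctness of the $U$-updates, only the update formulas, Lemma~\ref{lem:dist-ratio-union-bound}, and the deactivation rule. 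Two small points to tidy: your identity $\dist_G(W_i,C)=d-i\beta^{-1}$ requires $d\ge i\beta^{-1}$ (when $i>d\beta$ your lower inequality is vacuous, so nothing breaks, but the case split should be stated), and the application of Lemma~\ref{lem:dist-ratio-union-bound} to the set-to-set distance $\dist_{G^*}(W_i^*,C)$ deserves a sentence, since the lemma is stated for vertex pairs (it extends immediately under the radius event you conditioned on).
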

\begin{claim}\label{claim:U}
    For each vertex $u$, 
    $\cluster(u)$ is included in $G_i^*$ for $\tilde{O}(1)$ values of $i$.
\end{claim}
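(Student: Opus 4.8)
The plan is to fix a single cluster $C$ and track the behaviour of its lower bound $L_i(C)$ across stages, showing that the Step~7 participation condition can be satisfied only $\tilde O(1)$ times. First I would make that condition explicit: a vertex $u$ with $\cluster(u)=C$ has $C$ in $G_{i+1}^*$ exactly when $C\in A^*$ and $L_i(C)\le (Z[i+1]+1)\beta^{-1}$. I would then record two facts following from Invariant~\ref{inv:interval} and Lemmas~\ref{lem:dist-ratio-union-bound} and~\ref{lem:dist-ratio-union-bound2}: (a) $L_i(C)$ is always a valid lower bound on the true remaining distance $\dist_G(W_i,C)=\dist_G(S,C)-i\beta^{-1}$; and (b) immediately after a \emph{Special Update}, $L_{i+1}(C)$ sits only a $\poly(w)=\polylog(n)$ factor \emph{below} the true remaining distance, because the exact cluster-graph distance $x=\dist_{G_{i+1}^*}(W_{i+1}^*,C)$ is pinned from both sides by the two-sided distortion bound of Lemma~\ref{lem:dist-ratio-union-bound}.

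Second, I would analyze the dynamics of $L_i(C)$ between two consecutive Special Updates of $C$. In every intervening stage $C\notin\Upsilon$, so Step~8 applies an \emph{Automatic Update}, decreasing $L$ by exactly $\beta^{-1}$, the same rate at which the true remaining distance shrinks. Hence, once a Special Update at stage $i$ sets $L_i(C)\gtrsim \dist_G(W_i,C)/\polylog(n)$, the cluster cannot re-enter $\Upsilon$ until $L$ has declined from this fresh value down to the threshold $\approx Z[\cdot]\beta^{-1}$. Counting the intervening stages shows the wavefront must advance by $\gtrsim \dist_G(W_i,C)/\polylog(n)$ before the next Special Update; equivalently, the true remaining distance contracts by a factor $(1-1/\polylog(n))$ at each Special Update.

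Third, I would sum this up. A geometric contraction with ratio $(1-1/\polylog(n))$ drives the remaining distance from at most $D\le n$ down to the $O(Z[\cdot]\beta^{-1}\polylog(n))$ ``endgame'' regime in $\polylog(n)\cdot O(\log n)=\polylog(n)=\tilde O(1)$ Special Updates; once the wavefront has effectively reached $C$, Step~6 deactivates its constituents and removes $C$ from $A^*$, so nothing further is counted. The endgame itself, where $\dist_G(W_i,C)$ is already small, spans only $\tilde O(1)$ stages by the calibration of $Z[\cdot]$ and thus contributes at most $\tilde O(1)$ additional Special Updates. Combining the two regimes gives the claim.

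The main obstacle is the quantitative step of the second paragraph: establishing that the fresh lower bound is genuinely $\gtrsim \dist_G(W_i,C)/\polylog(n)$. This needs (i) that the \emph{induced} subgraph $G_{i+1}^*$ preserves the relevant $G^*$-distances, which I would obtain from the ``each cluster of $P^*$ intersects $P$'' clause of Lemma~\ref{lem:dist-ratio-union-bound}, guaranteeing that the witnessing shortest path stays inside $\Upsilon$; (ii) control of the two $\min$ caps in the Step~7 update of $L_{i+1}(C)$, and in particular the case where the recursive search depth $Z[i+1]$ is exceeded so the call returns $\infty$; and (iii) persistence of the narrowness invariant $U_i(C)/L_i(C)=\poly(w)$, so that membership in $\Upsilon$ actually forces $\dist_G(W_i,C)$ to be small. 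All three points hinge on the precise choice of the $Z[\cdot]$ sequence, which is exactly what binds the bound together.
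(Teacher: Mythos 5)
Your high-level skeleton---a geometric contraction of the relevant distance across consecutive Special Updates, followed by a short ``endgame'' once the remaining distance is $\tilde O(\beta^{-1})$---matches the paper's strategy, but the two quantitative steps you yourself flag as ``the main obstacle'' are precisely the content of the paper's proof, and your versions of them fail as stated. Your fact (b), that a fresh $L_{i+1}(C)$ sits at most a $\poly(w)$ factor below the true remaining distance, is false whenever the cap in Step~7 is active: if the recursive call $\bfs(G^*,W_{i+1}^*,\Upsilon,Z[i+1])$ does not reach $C$ within depth $Z[i+1]$, then $L_{i+1}(C)=Z[i+1]\cdot\beta^{-1}+1$, and since $Z[i+1]$ can be as small as $\alpha=4$ while $\dist_G(W_{i+1},C)$ is enormous, the fresh lower bound can be polynomially far below the truth. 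Consequently your potential (the true remaining distance) need not contract at all across such an update, and your geometric count collapses. The paper avoids this by using a different potential: the \emph{monotone decreasing} upper bound $U_i(C)$, together with the narrowness invariant $U_i(C)\le 2w^2\max\{L_i(C),\beta^{-1}\}$ at every Special Update (Lemma~\ref{lem-aux-2-ssss}), proved by induction; the truncation case is handled by Lemma~\ref{lem-aux-11-ssss}, which shows that after a truncated update the next Special Update must occur at a strictly larger $Z$-value, so $L$ cannot fall while $U$ keeps shrinking. Your item (iii) asks for exactly this invariant, but your argument supplies no mechanism for restoring it after a truncated update.

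The second gap is the counting step itself: ``the cluster cannot re-enter $\Upsilon$ until $L$ has declined down to the threshold $\approx Z[\cdot]\beta^{-1}$'' treats the threshold as roughly constant, but it is a moving target. $Z[j]$ can jump as high as $D^\ast$, at which point every active cluster enters $\Upsilon$ regardless of its $L$-value, so nothing about the decline of $L$ alone prevents an immediate re-entry. The spacing bound $(j-i)\cdot\beta^{-1}\ge L_i(C)/(8\alpha)$ genuinely requires the dyadic structure of the $Z$-sequence: the paper's Lemma~\ref{lem-aux-1-ssss} picks a power-of-two target $x\in[\beta L_i(C)/8,\,\beta L_i(C)/4)$, uses the periodicity property (Lemma~\ref{lem:Z}) to show that every stage before the first stage $j^\ast$ with $Z[j^\ast]=x$ has a strictly smaller $Z$-value, and then verifies inductively that $L$ stays above all of those thresholds throughout the window. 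You assert the conclusion of this lemma without an argument. In short: the plan is the right one and correctly locates the difficulties, but the two load-bearing lemmas (the spacing bound and the narrowness invariant through truncations) are missing, so the proposal does not constitute a proof.
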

Our algorithms ($\mathsf{cluster}$ and $\bfs$) are based solely on calls
to $\sr$.  Define $\Energy(D)$ to be the number of calls to $\sr$
that one vertex participates in when computing BFS to distance $D$.
If Claims~\ref{claim:X} and \ref{claim:U} hold, then 
\begin{equation}\label{eqn:EnD1}
\Energy(D) \leq \tilde{O}(1)\cdot \Energy(\tilde{O}(\beta D)) + \tilde{O}(\beta^{-1})
\end{equation}
The $\tilde{O}(\beta^{-1})$ term accounts for the cost of computing $G^*$ (Lemma~\ref{lem:clustr-diam-ub}) and the 
$\tilde{O}(1)$ times a vertex is included
in $X_i$ (Claim~\ref{claim:X}), 
each of which involves $\beta^{-1}$ $\sr$s on $G$.
Every recursive call to 
$\bfs(G^*,\cdot,\cdot,D')$ has $D' = \tilde{O}(\beta D)$
and by Claim~\ref{claim:U} 
each vertex participates in $\tilde{O}(1)$ such recursive calls.
Moreover, according to Lemma~\ref{lem:sr-sim}, the \emph{energy} overhead 
for simulating one call to $\sr$ on $G^*$ is $\tilde{O}(1)$ 
calls to $\sr$ on $G$.  This justifies the first term of (\ref{eqn:EnD1}).
The time and energy of our algorithm is analyzed in Theorem~\ref{thm:main}.
As a foreshadowing of the analysis, if $D_0$ is the distance
threshold of the top-level call to $\bfs$, we will set
set $\beta = 2^{-\sqrt{\log D_0 \log\log n}}$ and apply (\ref{eqn:EnD1})
to recursion depth $\sqrt{\log D_0/\log\log n}$.

\paragraph{The $Z$-Sequence.}
The least obvious part of the $\bfs$ algorithm is the $Z$-sequence,
which guides how Special Updates are performed.
Recall that $w = \Theta(\log n)$ is a sufficiently large
multiple of $\log n$; if we are computing BFS to distance $D$ in $G$,
then we need never compute BFS beyond distance $D^* \geq w\beta D$ in $G^*$.
The $Z$-sequence is defined as follows.
\begin{align*}
Y[i] &= \max_{j\ge 0} \{2^j \mbox{ such that } 2^j|i\} & (i \ge 1)\\
\mbox{I.e., } Y &= (1,2,1,4,1,2,1,8,1,2,1,4,1,2,1,16,  1,2,1,4,1,2,1,8,1, 2,1,4,1,2,1,32, \ldots)\\
Z[0] &= D^\ast\\
Z[i]  &= \min\{ D^\ast, \ \alpha \cdot Y[i]\}, \quad \mbox{ where $\alpha = 4$} & (i \ge 1)\\
D^\ast &= \min_{j\ge 0} \{\alpha 2^j \mbox{ such that } \alpha 2^j \ge w\beta D\}
\end{align*}
In other words, $Z$ is derived by multiplying $Y$ by $\alpha = 4$, 
truncating large elements at $D^*$, 
and beginning the sequence at 0, with $Z[0]=D^*$.
(Here $Z[0]$ corresponds to the distance threshold $D^*$ used in 
Step 1 of $\bfs$ to estimate distances to the $0$th wavefront $W_0 = S$.)



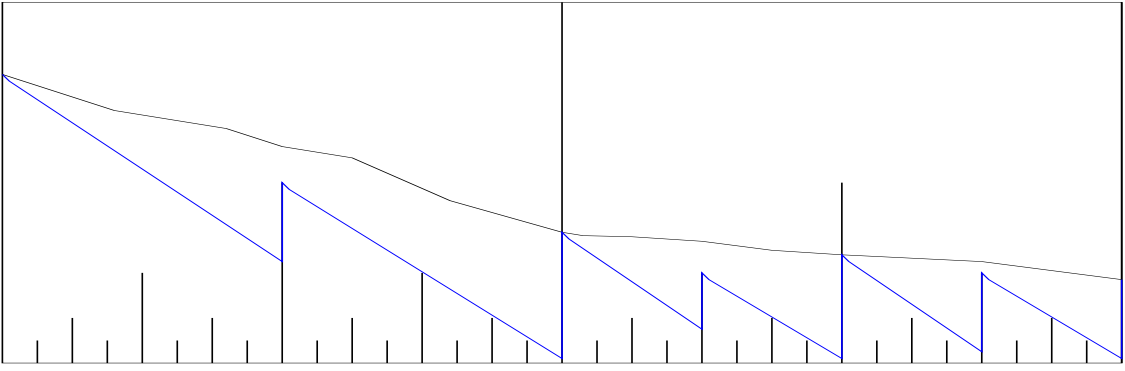
\begin{figure*} 
\begin{center}
\scalebox{1.55}[0.5]{
\begin{tikzpicture}[scale=0.6]
\draw (0,0) rectangle (16,16);
\draw (1,0)--(1,2);
\draw (2,0)--(2,4);
\draw (3,0)--(3,2);
\draw (4,0)--(4,8);
\draw (5,0)--(5,2);
\draw (6,0)--(6,4);
\draw (7,0)--(7,2);
\draw (8,0)--(8,16);
\draw (9,0)--(9,2);
\draw (10,0)--(10,4);
\draw (11,0)--(11,2);
\draw (12,0)--(12,8);
\draw (13,0)--(13,2);
\draw (14,0)--(14,4);
\draw (15,0)--(15,2);
\draw (16,0)--(16,16);
\draw (0.5,0)--(0.5,1);
\draw (1.5,0)--(1.5,1);
\draw (2.5,0)--(2.5,1);
\draw (3.5,0)--(3.5,1);
\draw (4.5,0)--(4.5,1);
\draw (5.5,0)--(5.5,1);
\draw (6.5,0)--(6.5,1);
\draw (7.5,0)--(7.5,1);
\draw (8.5,0)--(8.5,1);
\draw (9.5,0)--(9.5,1);
\draw (10.5,0)--(10.5,1);
\draw (11.5,0)--(11.5,1);
\draw (12.5,0)--(12.5,1);
\draw (13.5,0)--(13.5,1);
\draw (14.5,0)--(14.5,1);
\draw (15.5,0)--(15.5,1);

\draw (0,12.8)--(1.6,11.2)--(3.2,10.4)--(4,9.6)--(5,9.1)--(6.4,7.2)--(8,5.8)--(8.3,5.65)--(9,5.6)--(10,5.4)--(11,5)--(12,4.8)--(14,4.5)--(16,3.7);

\draw [blue] (0,12.8)--(0.1,12.5)--(4,4.5)--(4,8)--(4.1,7.7)--(8,0.2)
--(8,5.8)--(8.1,5.5)--(10,1.5)--(10,4)--(10.1,3.7)--(12,0.2)--(12,4.8)--(12.1,4.5)--(14,0.5)--(14,4)--(14.1,3.7)--(16,0.2)--(16,3.7);
\end{tikzpicture}
}
\end{center}
\caption{\label{fig:time-evolution}\small Part of the time evolution of the distance of a fixed cluster, $C$, from the
  frontier in the cluster graph $G^*$.  The $x$-axis is time spent moving the wavefront across
  the underlying graph $G$.  Every vertical tick mark is a time at
  which this is suspended, so that, recursively, BFS can be done on the
  cluster graph $G^*$, starting from the current wavefront.  
  The height of each such tick mark indicates the depth to which this search
  is to be done.  The $y$-axis is the distance of $C$ to
  the wave front, \underline{\emph{in $G^*$}}.  The top curve shows the
  irregular, but monotonic, decrease of this distance over time.
  The bottom curve, in blue, shows the high-probability lower bound on this distance, 
  from the perspective of the cluster in question.  Note that every
  time the top curve intersects a tick mark, the cluster must
  participate in the BFS on the cluster graph, or this BFS will fail.  
  Every time the bottom
  curve intersects a tick mark, the cluster will wake up in order to participate
  in the BFS, because it thinks it \emph{may} be needed.  Note that, by design,
  the lower curve often passes just above the tick marks without
  actually intersecting them.   The reader should bear in mind that these
  two curves chart the actual/likely distance of $C$ to the wavefront in $G$; 
  the algorithm maintains the related interval $[L_i(C),U_i(C)]$, which bounds
  distances from $C$ to the wavefront \underline{\emph{in $G$}}.
}
\end{figure*}


Figure~\ref{fig:time-evolution} gives an example, from the perspective of a single cluster, of how the distance estimate evolve over time.  

\paragraph{Organization of Section~\ref{sect:BFS}.}
In Section~\ref{sect:BFSlemmas} we prove a number of lemmas
that relate to the correctness and efficiency
of $\bfs$, including proofs of 
Claims~\ref{claim:X} and \ref{claim:U}.
In Section~\ref{sect:BFSmainthm} we analyze the overall 
time and energy-efficiency of the BFS algorithm.

\subsection{Auxiliary Lemmas}\label{sect:BFSlemmas}

Lemma~\ref{lem:distance-estimates} justifies how distance estimates are 
updated in Steps 1, 7, and 8 of $\bfs$ in order to preserve Invariant~\ref{inv:interval}, with high probability.

\begin{lemma}\label{lem:distance-estimates}
Let $W_i$ be the $i$th wavefront; 
let $\Upsilon$ include all clusters $C$ such that 
$\dist_G(W_i,C) \in [i\beta^{-1}, (i+Z')\beta^{-1}]$;
and let $G_i^*$ be the subgraph of $G^*$ induced by $\Upsilon$.
If $\cluster(u) \in \Upsilon$ and $\dist_G(S,u) \ge i\beta^{-1}$,
then w.h.p.,
\begin{dmath*}
\dist_G(W_i,u) \in \left[\min\left\{\frac{Z'}{\beta}+1,\; \dist_{G_i^*}(W_i^*,\cluster(u))\cdot \frac{1}{w\beta}\right\}, 
\max\left\{1,\, \dist_{G_i^*}(W_i^*,\cluster(u))\right\}\cdot \frac{w}{\beta}
\right].
\end{dmath*}
\end{lemma}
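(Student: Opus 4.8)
The plan is to prove the two inclusions separately, writing $d := \dist_G(W_i,u)$ and $x := \dist_{G_i^*}(W_i^*,\cluster(u))$, and to condition throughout on the high-probability event (already used in the proof of Lemma~\ref{lem:dist-ratio-union-bound}) that every cluster has radius below $4\log(n)/\beta$, so that any two members of a common cluster are at $G$-distance at most $8\log(n)/\beta \le w/\beta$, recalling that $w$ is a sufficiently large multiple of $\log n$. I would also use that $d$, $\beta^{-1}$, and $Z'\beta^{-1}$ are integers, and that since $\dist_G(S,u)\ge i\beta^{-1}$ and $W_i$ is exactly the set of vertices at $S$-distance $i\beta^{-1}$, we have $d = \dist_G(S,u)-i\beta^{-1}$, with every vertex on a shortest $W_i$-to-$u$ path having $S$-distance in $[i\beta^{-1},\,i\beta^{-1}+d]$.

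For the upper bound $d \le \max\{1,x\}\,w/\beta$, I would argue constructively in the ``short cluster-path $\Rightarrow$ short $G$-path'' direction. If $x=\infty$ the bound is vacuous, and if $x=0$ then $\cluster(u)$ meets $W_i$, so $u$ lies within $G$-distance $\le 8\log(n)/\beta \le w/\beta$ of a wavefront vertex. For $1\le x<\infty$, I would take a shortest path $C_0,\dots,C_x=\cluster(u)$ in $G_i^*$ with $C_0\in W_i^*$ and lift it to a $G$-walk that starts at a vertex of $C_0\cap W_i$ and alternates between within-cluster subpaths (each of length at most the cluster diameter $\le w/(2\beta)$, after fixing $w$ large enough) and the $x$ boundary-crossing edges realizing the $G^*$-edges; summing gives $d \le (x{+}1)\tfrac{w}{2\beta}+x \le x\,w/\beta$. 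This direction uses only the cluster-radius event, not Lemmas~\ref{lem:dist-ratio-union-bound}/\ref{lem:dist-ratio-union-bound2}.

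For the lower bound $d \ge \min\{Z'/\beta+1,\ x/(w\beta)\}$, I would split on the size of $d$. If $d > Z'\beta^{-1}$ then by integrality $d \ge Z'/\beta+1$ and we are done regardless of $x$ (this also absorbs the case $x=\infty$). Otherwise $d \le Z'\beta^{-1}$, and I would certify $x \le d\beta w$ by exhibiting a short cluster path. Fix a shortest $G$-path $P$ from some $s\in W_i$ to $u$; every vertex of $P$ has $S$-distance in $[i\beta^{-1},(i+Z')\beta^{-1}]$, so its cluster lies in the distance band defining $\Upsilon$ and is thus a vertex of $G_i^*$. Applying the ``more generally'' clause of Lemma~\ref{lem:dist-ratio-union-bound} to $P$ yields, with probability $1-n^{-3}$, a $G^*$-path $P^*$ from $\cluster(s)\in W_i^*$ to $\cluster(u)$ of length at most $d\cdot C\beta\log n \le d\beta w$, each of whose clusters intersects $P$ and hence lies in $\Upsilon$; therefore $P^*$ is contained in $G_i^*$ and witnesses $x \le d\beta w$, i.e.\ $d \ge x/(w\beta)$.

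The crux, and the step I expect to require the most care, is precisely this containment: Lemmas~\ref{lem:dist-ratio-union-bound} and \ref{lem:dist-ratio-union-bound2} speak about distances in the \emph{full} cluster graph $G^*$, whereas the statement concerns distances in the \emph{induced subgraph} $G_i^*$, so I must verify that the witnessing path never leaves $\Upsilon$. The ``each cluster of $P^*$ intersects $P$'' guarantee is exactly what makes this go through, since $P$ is confined to the wavefront band; the only slack is that a cluster meeting $P$ can dip below the band by at most its radius $4\log(n)/\beta$, which is lower-order and is absorbed by the (effectively one-sided, upper) distance threshold on $L_i$ that actually defines $\Upsilon$ in the algorithm. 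Finally I would take a union bound over the $\poly(n)$ invocations of Lemma~\ref{lem:dist-ratio-union-bound} and the cluster-radius event to obtain the overall $1-1/\poly(n)$ guarantee, which is what Invariant~\ref{inv:interval} requires.
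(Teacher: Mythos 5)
Your proposal is correct and takes essentially the same route as the paper's own proof: the lower bound via the case split on $d$ versus $Z'\beta^{-1}$ and the ``more generally'' clause of Lemma~\ref{lem:dist-ratio-union-bound} applied to a shortest path confined to the wavefront band (whose clusters therefore lie in $\Upsilon$, placing $P^*$ inside $G_i^*$), and the upper bound by lifting a shortest cluster-path to a $G$-walk using the cluster-diameter bound $8\log(n)/\beta \le w/(2\beta)-1$. You are in fact somewhat more explicit than the paper about the two delicate points---that the chosen path must stay in the band and that the witnessing path never leaves $\Upsilon$---and your only slack (the $x=1$ case of the upper-bound arithmetic) disappears once the cluster diameter is bounded by $w/(2\beta)-1$ rather than $w/(2\beta)$, exactly as the paper does.
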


\begin{proof}
If $d = \dist_G(W_i,u) \ge Z'\beta^{-1}+1$ then the lower bound is already correct, so suppose that $d \le Z'\beta^{-1}$.  
Let $P$ be any length-$d$ path from $u$ to $W_i$ in $G$. Lemma~\ref{lem:dist-ratio-union-bound} implies that w.h.p., there is a path
$P^*$ in $G_i^*$ from $W_i^*$ to $\cluster(u)$ 
with length at most $O(\beta d \log n) < w \beta d$, 
and so $\dist_{G_i^*}(W_i^*,\cluster(u)) \leq w \beta d$, as required.

This upper bound follows from the cluster diameter upper bound 
$K = 8\log(n)/\beta \leq w/(2\beta)-1$.
Thus, if $\dist_{G_i^*}(W_i^*, \cluster(u)) = d'$
then $\dist_G(W_i,u) \le (d'+1)\cdot(K+1) \leq \max\{d'+1\}\cdot w\beta^{-1}$.
\end{proof}

Lemma~\ref{lem:distance-estimates} shows that Step 1 of $\bfs$ 
initializes $L_0(\cdot), U_0(\cdot)$ to satisfy Invariant~\ref{inv:interval},
w.h.p.  Here $\Upsilon = A^*$
is the set of all active clusters; if $\dist_A(S,u) \in [0,D]$ 
(the relevant range), then Lemma~\ref{lem:distance-estimates} 
guarantees that 
$\dist_A(S,u) \in [L_0(\cluster(u)),U_0(\cluster(u))]$ after Step 1.
The estimates set in Step 8 of $\bfs$ are trivially correct;
Lemma~\ref{lem:distance-estimates} also guarantees that the lower
and upper bounds fixed in Step 7 are correct.

We use several properties 
of the $Z$ sequence, listed in Lemma~\ref{lem:Z}.
\begin{lemma}\label{lem:Z}
Fix an index $i$.
\begin{enumerate}
    \item For any number $b \geq \alpha$,
    define $j>i$ to be the smallest index such that $Z[j] \geq b$.
    Then 
    \[j-i \leq  b/\alpha.\] 
    Suppose the number $b$ additionally satisfies that $b \leq Z[i]$ and $b \in \{\alpha, 2\alpha, 4\alpha, 8\alpha, \ldots D^\ast\}$. Then we have $Z[i] = b$ and $j-i = Z[j]/\alpha$. \label{Seq-property-1}
    
    \item Define $j > i$ to be the smallest index such that 
    $Z[j] > Z[i]$ or  $Z[j] = D^\ast$. 
    Then we have $j-i = Z[i]/\alpha$; moreover, all indices $k \in \{i+1, \ldots, j-1\}$ satisfy that $Z[k] \leq Z[i]/2$. \label{Seq-property-2}
\end{enumerate}
\end{lemma}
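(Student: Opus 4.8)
The plan is to reduce both parts to elementary divisibility facts, since everything about $Z$ is governed by the $2$-adic valuation $v_2(i)$ (the exponent of the largest power of $2$ dividing $i$): we have $Y[i]=2^{v_2(i)}$, hence $Z[i]=\min\{D^\ast,\ \alpha\,2^{v_2(i)}\}$ for $i\ge 1$, and $Z[0]=D^\ast$. The one fact I would establish first, and then reuse throughout, is the equivalence
\[ Z[i]\ge \alpha\,2^{c}\ \Longleftrightarrow\ 2^{c}\mid i,\qquad\text{valid whenever } \alpha\,2^{c}\le D^\ast\ (\text{and trivially for } i=0). \]
For $i\ge 1$ this is immediate: $\min\{D^\ast,\alpha\,2^{v_2(i)}\}\ge \alpha\,2^{c}$ reduces, using $\alpha\,2^{c}\le D^\ast$, to $v_2(i)\ge c$. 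Since $Z$ takes values only in the dyadic set $\{\alpha,2\alpha,4\alpha,\dots,D^\ast\}$, any test $Z[j]\ge b$ is equivalent to $Z[j]\ge \alpha\,2^{c}$, where $\alpha\,2^{c}$ is the least element of that set with $\alpha\,2^{c}\ge b$ (note $\alpha\,2^{c}\le D^\ast$, as $D^\ast$ itself is such an element). Absorbing the truncation at $D^\ast$ into this single equivalence is the only genuinely fiddly point, and doing it once lets the rest go through uniformly.

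For part~\ref{Seq-property-1}, the equivalence turns ``smallest $j>i$ with $Z[j]\ge b$'' into ``smallest $j>i$ divisible by $2^{c}$.'' The gap to the next multiple of $2^{c}$ is at most $2^{c}=(\alpha\,2^{c})/\alpha$; when $b=\alpha\,2^{c}$ is itself a dyadic multiple this is exactly the asserted $j-i\le b/\alpha$, and for general $b$ the enclosing level satisfies $\alpha\,2^{c}<2b$, so the bound degrades by at most a constant. For the refinement I would use that the extra hypothesis $b\le Z[i]$ is, by the equivalence, precisely $2^{c}\mid i$; then the next multiple of $2^{c}$ after $i$ is $i+2^{c}$, giving $j-i=b/\alpha$ exactly. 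Finally I would read off $Z[j]$ by tracking one more bit of $i$: writing $i=2^{c}q$, if $q$ is odd then $Z[i]=b$ while $v_2(j)\ge c+1$ and $Z[j]>b$, whereas if $q$ is even then $Z[i]>b$ while $v_2(j)=c$ and $Z[j]=b$, so that $j-i=Z[j]/\alpha$. These two asserted identities thus pin down complementary parity sub-cases of $q$, which is what I would verify explicitly.

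For part~\ref{Seq-property-2} I would split on whether $Z[i]<D^\ast$ or $Z[i]=D^\ast$. In the first case write $Z[i]=\alpha\,2^{m}$, so $v_2(i)=m$; the condition ``$Z[j]>Z[i]$ or $Z[j]=D^\ast$'' becomes, by the equivalence, $2^{m+1}\mid j$, and since $2^{m}\mid i$ but $2^{m+1}\nmid i$ the first such $j$ is $i+2^{m}$, giving $j-i=2^{m}=Z[i]/\alpha$. Every intermediate $k\in(i,j)$ is a non-multiple of $2^{m}$ (the only multiple of $2^{m}$ in $(i,i+2^{m}]$ is $j$ itself), hence $v_2(k)\le m-1$ and $Z[k]\le\alpha\,2^{m-1}=Z[i]/2$. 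In the second case ($Z[i]=D^\ast$, including $i=0$) the condition reduces to $Z[j]=D^\ast$, i.e.\ $(D^\ast/\alpha)\mid j$; the first such index after $i$ is $i+D^\ast/\alpha$, and the identical ``no intermediate multiple'' argument bounds every in-between value by $D^\ast/2=Z[i]/2$.

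I expect the difficulty here to be bookkeeping rather than conceptual: the work is in carrying the $D^\ast$-truncation cleanly through the master equivalence so that the two cases of part~\ref{Seq-property-2}, the boundary index $i=0$, and the dyadic-versus-general distinction in part~\ref{Seq-property-1} are all handled by the same divisibility statement, and in matching the two asserted identities of the part~\ref{Seq-property-1} refinement to the correct parity sub-case of $i$.
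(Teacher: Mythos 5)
Your proposal is correct and takes essentially the same approach as the paper: the paper's entire (two-sentence) proof is the observation that the values $\geq \alpha 2^{\ell}$ in the $Z$-sequence occur periodically with period $2^{\ell}$, which is exactly your master equivalence $Z[i] \geq \alpha 2^{c} \Leftrightarrow 2^{c} \mid i$ (for $\alpha 2^c \leq D^\ast$), and your divisibility bookkeeping simply fills in the details the paper leaves implicit. Your extra care also correctly isolates the two spots where the lemma's literal wording is imprecise---for non-dyadic $b$ one can only conclude $j-i \leq 2^{c} < 2b/\alpha$ (e.g., with $D^\ast$ large, $i=4$ and $b=5$ give $j-i=2 > b/\alpha$), and in the refinement the conclusions ``$Z[i]=b$'' and ``$j-i=Z[j]/\alpha$'' hold in the complementary parity sub-cases you describe rather than simultaneously---which is harmless, since the paper only ever invokes the refinement with dyadic $b$ satisfying $2b \leq Z[i]$, precisely your ``$q$ even'' case, where $Z[j]=b$ (evidently the intended reading of ``$Z[i]=b$'') and $j-i = Z[j]/\alpha$ both hold.
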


\begin{proof} 
Parts 1 and 2 follow 
from the fact that in the $Y$-sequence, 
the values at least $2^\ell$ appear periodically
with period $2^\ell$.  Thus, the 
values at least $\alpha 2^\ell$ 
in the $Z$-sequence also appear periodically with
period $2^\ell$.
\end{proof}

We are now prepared to prove Claim~\ref{claim:X}.
\begin{proof}[Proof of Claim~\ref{claim:X}]
It follows from Invariant~\ref{inv:interval} 
that $X_i$, as defined in Step 4 of $\bfs$, 
includes all active vertices within distance $\beta^{-1}$
of the $i$th wavefront $W_i$.  It remains to show
no $u$ is included in $X_i$ for more than $\poly(\log n)$ indices $i$.

Suppose that $u \in X_i$ for $i>0$.  It follows that
$L_i(\cluster(u)) \le \beta^{-1}$ and that in the previous stage,
$L_{i-1}(\cluster(u)) \leq 2\beta^{-1}$.
Since $Z[i] \ge \alpha = 4$, it must have been 
that $\cluster(u)$ was included in $\Upsilon$ and participated
in the Special Update (Step 7 of $\bfs$) before stage $i$.
If $\dist_{G_i^*}(W_i^*,\cluster(u))=x$ and after the Special Update,
$L_i(\cluster(u)) \le \beta^{-1}$, it must be that $x\leq w$,
and hence $U_i(\cluster(u)) \le w^2 \beta^{-1}$.  Thus, $u$
may participate in at most $w^2$ more stages 
(joining $X_i,X_{i+1},\ldots,X_{i+w^2}$)
before its distance is settled and it is 
deactivated, in Step 6 of $\bfs$.
\end{proof}

Before proving Claim~\ref{claim:U} we begin with
three auxiliary lemmas, Lemmas~\ref{lem-aux-1-ssss},
\ref{lem-aux-11-ssss}, and \ref{lem-aux-2-ssss}.

\begin{lemma}\label{lem-aux-1-ssss}
Recall $\alpha = 4$.
Suppose cluster $C$ is included in $G_i^*$ and $G_j^*$, 
but not in $G_{i'}^*$ for any $i'\in \{i+1, \ldots, j-1\}$. 
Then  we have
\[
\frac{L_i(C)}{8\alpha} \leq \frac{j-i}{\beta} \leq \max\left\{\frac{1}{\beta},\;  \frac{L_i(C)}{\alpha}\right\}.
\]
\end{lemma}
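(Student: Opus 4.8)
The plan is to translate the membership condition ``$C$ is included in $G_m^*$'' into an explicit inequality on the $Z$-sequence, and then read off both bounds from the periodic structure of $Z$ established in Lemma~\ref{lem:Z}. Writing $\ell = \beta\cdot L_i(C)$ for the distance estimate measured in units of $\beta^{-1}$, I would first observe that since $C$ is not in $G_{i'}^*$ for $i' \in \{i+1,\dots,j-1\}$, each of these stages performs an Automatic Update (Step 8), so $L_{m}(C) = L_i(C) - (m-i)\beta^{-1}$ for all $m \in \{i,\dots,j-1\}$. Feeding this into the membership test $L_{m-1}(C) \le (Z[m]+1)\beta^{-1}$ of Step 7, membership of $C$ in $G_m^*$ (for $m>i$, given the intervening Automatic Updates) is equivalent to
\[
Z[m] \;\ge\; \ell - (m-i).
\]
Thus $j$ is exactly the first index after $i$ at which the $Z$-sequence rises above the descending line $\ell-(m-i)$. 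I would also record the one fact we get for free at the left endpoint: the Special Update that set $L_i(C)$ caps it at $Z[i]\beta^{-1}+1$, so $Z[i] \ge \ell - \beta \ge \ell - 1$.

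For the upper bound, consider two regimes. If $\ell \le \alpha$, then at $m=i+1$ we have $Z[i+1] \ge \alpha \ge \ell > \ell - 1$, so membership already holds and $j-i=1$, matching the $1/\beta$ term of the $\max$. If $\ell > \alpha$, I would invoke the periodicity from Lemma~\ref{lem:Z}: the values of $Z$ that are at least a given level $\alpha 2^k$ occur exactly at the multiples of $2^k$, hence with period $2^k$. Taking the smallest $k$ with $\alpha 2^k \ge \ell$ (so $2^k = O(\ell/\alpha)$), the next multiple of $2^k$ after $i$ lies within $2^k$ steps and carries a $Z$-value $\ge \alpha 2^k \ge \ell \ge \ell-(m-i)$, so membership holds there. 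This pins the first membership index to $j - i = O(\ell/\alpha)$; feeding $b$ equal to the smallest value in $\{\alpha,2\alpha,4\alpha,\dots,D^\ast\}$ that is $\ge \ell$ into Lemma~\ref{lem:Z}(\ref{Seq-property-1}) recovers the stated constant, the only slack being the rounding of $\ell/\alpha$ up to a power of two.

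For the lower bound, the key structural fact is the reverse inequality $Z[m] \le \alpha\, Y[m] = \alpha 2^{\nu_2(m)}$, where $\nu_2$ denotes the $2$-adic valuation. Membership at both endpoints then forces high divisibility: from $Z[i] \ge \ell-1$ we get $2^{\nu_2(i)} \ge (\ell-1)/\alpha$, and from $Z[j] \ge \ell-(j-i)$ we get $2^{\nu_2(j)} \ge (\ell-(j-i))/\alpha$. If $j-i \ge \ell/2$ the bound $j-i \ge \ell/(8\alpha)$ is immediate; otherwise $\ell-(j-i) > \ell/2$, so both $2^{\nu_2(i)}$ and $2^{\nu_2(j)}$ exceed $\ell/(2\alpha)$, and since $2^{\min(\nu_2(i),\nu_2(j))}$ divides $j-i$ we again get $j-i \ge \ell/(2\alpha) \ge \ell/(8\alpha)$.

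I expect the main obstacle to be the bookkeeping around the \emph{moving} threshold and the $D^\ast$-cap rather than any single hard step: because the target $\ell - (m-i)$ decreases as $m$ grows, I must check that the $Z$-peak I land on is tall enough \emph{for its own position}, and I must separate the regime where the search radius $Z[i]$ is comparable to $\ell$ (the cluster was barely found, so $\ell = x\beta^{-1}/w$ is close to $Z[i]$) from the regime where $Z[i] \gg \ell$ (found well inside the search ball), since in the latter the relevant peak is a ``medium'' one of height $\approx \ell$ rather than the next global maximum reachable via Lemma~\ref{lem:Z}(\ref{Seq-property-2}). The power-of-two granularity of these peaks is exactly what separates the $1/(8\alpha)$ and $1/\alpha$ constants, so I would make sure the stated constants leave enough room to absorb this rounding.
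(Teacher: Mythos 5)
Your proposal is correct and, on the lower bound, takes a genuinely different route from the paper. Both proofs share the same setup (translating membership of $C$ in $G_m^*$ across intervening Automatic Updates into $Z[m]\ge \ell-(m-i)$ with $\ell=\beta L_i(C)$, and recording $Z[i]\ge \ell-1$ from the Special Update cap), and your upper bound is essentially the paper's: locate the first index where $Z$ clears level roughly $\ell$ via the periodicity of Lemma~\ref{lem:Z}(\ref{Seq-property-1}), with the same case split at $\ell\le\alpha$. For the lower bound, the paper instead constructs an explicit witness index: it picks $x\in\{\alpha,2\alpha,4\alpha,\dots\}$ with $x\in[\ell/8,\ell/4)$, takes $j^*$ to be the first index with $Z[j^*]\ge x$ (so $j^*-i=x/\alpha$ by Lemma~\ref{lem:Z}(\ref{Seq-property-1})), and runs an induction (its conditions (a) and (b)) to show $C$ stays outside $G_{j'}^*$ for all $j'\le j^*$, forcing $j>j^*\ge i+\ell/(8\alpha)$. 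Your argument replaces this construction-plus-induction with a single divisibility observation: since $Z[m]\le\alpha 2^{\nu_2(m)}$, the lower bounds on $Z[i]$ and $Z[j]$ force both $i$ and $j$ to be divisible by a power of two exceeding $\ell/(2\alpha)$ whenever $j-i<\ell/2$, and that power of two divides $j-i$. This is shorter, eliminates the induction and the delicate choice of $x$, and yields the stronger constant $\ell/(2\alpha)$ in place of $\ell/(8\alpha)$. Two details to make explicit: (i) your step ``$2^{\nu_2(i)}\ge(\ell-1)/\alpha$ exceeds $\ell/(2\alpha)$'' needs $\ell>2$, so add the trivial remark that for $\ell\le 2$ one has $j-i\ge 1>\ell/(8\alpha)$ outright (the paper makes the analogous dispensation at $\ell\le 4\alpha$); (ii) your flagged concern about power-of-two rounding in the upper bound is real --- applying Lemma~\ref{lem:Z}(\ref{Seq-property-1}) with a level $b$ not of the form $\alpha 2^k$ can cost a factor up to $2$ --- but the paper's own proof feeds exactly such a $b$ into that lemma, so this slack is inherited from the paper's statement of Lemma~\ref{lem:Z} rather than introduced by you, and it is harmless downstream where Lemma~\ref{lem-aux-1-ssss} is used only up to constant factors.
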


\begin{proof}
We prove the upper and lower bounds on  $(j-i)/\beta$ separately.

\paragraph{Upper Bound.}
Select $j^* > i$ to be the first stage index for which
$Z[j^\ast] \geq \min\{ D^\ast, \beta L_i(C)\}$. 
Clearly $j^* \geq j$ since if $L_{i+1}(C),\ldots,L_{j^*}(C)$ were set
according to Automatic Updates we would have 
$L_{j^*-1}(C) \leq L_i(C) \leq Z[j^*] \beta^{-1}$, 
which would trigger a Special Update to $L_{j^*}(C)$. 
There are two cases to consider, 
either of which establishes the upper bound
on $(j-i)/\beta$.
\begin{itemize}
    \item Suppose $\beta\cdot L_{i}(C) < \alpha$. Then $j = j^\ast = i+1$, and so $(j-i)/\beta = 1/\beta$.
    
    \item Suppose $\beta\cdot L_{i}(C) \geq \alpha$. 
    According to Lemma~\ref{lem:Z}(\ref{Seq-property-1}),
    $j^\ast - i \leq  \min\{ D^\ast, \beta L_i(C)\}/\alpha$, and so 
    $(j-i)/\beta
    \leq (j^\ast -i)/\beta \leq \min\{ D^\ast, \beta L_i(C)\}/(\alpha \beta) \leq  L_i(C)/\alpha$.
\end{itemize}

\paragraph{Lower Bound.}
In order to prove that $(j-i)\cdot \beta^{-1} \geq L_i(C)/(8\alpha)$
it suffices to find any particular index $j^*$ such that:
\begin{enumerate}
    \item $(j^\ast-i)\cdot \beta^{-1} \geq L_i(C)/ (8 \alpha)$.
    \item For all $j' \in [i+1, j^\ast]$, $C$ is not included in $G_{j'}^*$.
\end{enumerate}
Condition 2 implies that $j^\star < j$ and then Condition 1 implies
that $(j-i)/\beta > (j^\ast-i)\cdot \beta^{-1} \geq L_i(C)/ (8 \alpha)$, as desired.
We will explain how to select $j^*$ shortly.  In the meantime, consider
the following two conditions;
we will argue that (a) and (b) imply Condition 2 above. 

\begin{enumerate}
    \item[(a)] For all $j' \in [i+1, j^\ast - 1]$, 
    we have $Z[j'] < Z[j^\ast]$.
    \item[(b)] $L_{i}(C) - (j^\ast - i)\cdot \beta^{-1} > Z[j^\ast]\cdot \beta^{-1}$.
\end{enumerate}

Recall that $C$ is \emph{not} included in $G_{j'}^*$
iff $L_{j'-1}(C) > (Z[j']+1)\cdot \beta^{-1}$, so it
suffices to prove the latter inequality for every
$j' \in [i+1,j^*]$.  By induction, we can 
assume that the claim is true for all $j'' \in [i+1,j'-1]$,
i.e., $L_{j''}(C)$ was set according to an Automatic Update
(Step 8) and $L_{j''}(C) = L_i(C) - (j''-i)\cdot \beta^{-1}$.  Thus,
\begin{align*}
    L_{j'-1}(C) 
    &= L_{i}(C) - ((j'-1) - i)\cdot \beta^{-1}
    &&&& \text{Follows from induction hypothesis}\\
    &\geq L_{i}(C) - ((j^\ast-1) - i)\cdot \beta^{-1}\\
    & > (Z[j^\ast]+1)\cdot \beta^{-1} &&&& \text{by (b)}\\
    & >  (Z[j']+1)\cdot \beta^{-1} &&&& \text{by (a)}
\end{align*}

\paragraph{Choice of $j^\ast$.}
Select $x$ to be the integer in
$\{ \alpha, 2\alpha, 4\alpha, 8\alpha, \ldots, D^\ast\}$
such that 
\[
x \in \left[\frac{\beta\cdot L_i(C)}{8},\;  \frac{\beta\cdot L_i(C)}{4}\right).
\]
It is guaranteed that $x$ exists so long as 
$\beta\cdot L_i(C) > 4\alpha$.  When $\beta\cdot L_i(C) \leq 4\alpha$, we already have the desired lower bound on $(j-1)\cdot\beta^{-1}$ since
$L_i(C)/(8\alpha) \leq \beta^{-1}/2 < \beta^{-1} \leq (j-i)\cdot\beta^{-1}$.

Observe that  $Z[i]$, like $x$, is also an integer in $\{ \alpha, 2\alpha, 4\alpha, 8\alpha, \ldots, D^\ast\}$. 
In a Special Update, the largest value
that $L_i(C)$ can attain is $Z[i]\cdot\beta^{-1}+1$,
hence
\[
Z[i] \geq \beta\cdot (L_i(C)-1) > \beta\cdot L_i(C)/2 > 2x, 
\]
Define $j^\ast > i$ to be the smallest index such that 
$Z[j^\ast] \geq x$. 
In particular, since $Z[i] \geq 2x > x$,   Lemma~\ref{lem:Z}(\ref{Seq-property-1}) guarantees that
$Z[j^\ast] = x$ and hence
\[
j^\ast-i  = Z[j^\ast]/\alpha = x/\alpha \geq \beta\cdot L_i(C)/ (8 \alpha).
\]
Thus Condition~1 is met for this choice of $j^*$.

Condition~(a) is also met, since by definition of $j^*$,
$Z[j'] < x = Z[j^*]$ for all $j' \in [i+1,j^*-1]$.
Now we turn to Condition~(b).  
Observe that 
\begin{equation}\label{eqn:ub}
j^\ast-i  = Z[j^\ast]/\alpha = x/\alpha < \beta\cdot  L_i(C)/ (4 \alpha).
\end{equation}
We prove that $L_{i}(C) - (j^\ast - i)\cdot \beta^{-1} > Z[j^\ast] \cdot \beta^{-1}$.
\begin{align*}
    Z[j^\ast]\cdot \beta^{-1}
    &< 2x\cdot \beta^{-1} &&&& \text{ since } Z[j^\ast] < 2x\\
    &< L_i(C)/2 &&&&  \text{ since } x \in [\beta\cdot L_i(C)/8,  \beta\cdot L_i(C)/4)\\
    &= L_i(C)(1 - 2/\alpha) &&&&  \text{ since } \alpha = 4\\
    &< L_i(C) - 8(j^\ast - i)\cdot \beta^{-1} &&&&  \text{ by (\ref{eqn:ub}), } (j^\ast-i)\cdot\beta^{-1} < L_i(C)/(4 \alpha)\\
    &< L_i(C) - (j^\ast - i)\cdot \beta^{-1}
\end{align*}
Conditions (a) and (b) imply Condition 2, 
which implies $L_i(C)/(8\alpha)  \leq (j-i)\cdot \beta^{-1}$.
\end{proof}

\begin{lemma}\label{lem-aux-11-ssss}
Suppose $C$ appears in $G_i^*$ and $G_j^*$ 
but not in $G_{i'}^*$ for any $i' \in \{i+1, \ldots, j-1\}$. 
Suppose that when $L_i(C)$ is set during a Special Update (Step 7 of $\bfs$), we have $L_{i}(C) =(Z[i]/\beta)+1$. 
It must be that $Z[j] > Z[i]$ or $Z[j] = D^\ast$.
\end{lemma}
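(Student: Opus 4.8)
The plan is to translate both the update dynamics and the membership rule into elementary inequalities on the $Z$-sequence, and then to invoke the periodicity structure captured by Lemma~\ref{lem:Z}(\ref{Seq-property-2}).

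First I would record that, by hypothesis, $C$ is touched only by Automatic Updates (Step~8) at every stage strictly between $i$ and $j$, so its lower-bound estimate decreases at the constant rate $\beta^{-1}$ per stage. Using the assumption $L_i(C) = Z[i]\beta^{-1}+1$, this gives the closed form
\[
L_k(C) = L_i(C) - (k-i)\beta^{-1} = \frac{Z[i]}{\beta} + 1 - \frac{k-i}{\beta}, \qquad i \le k \le j-1.
\]
Next I would unpack the membership rule, recalling (as in the proof of Lemma~\ref{lem-aux-1-ssss}) that $C$ belongs to $G_k^*$ exactly when $L_{k-1}(C) \le (Z[k]+1)\beta^{-1}$. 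Substituting the closed form and multiplying by $\beta$, non-membership at each $k \in \{i+1,\dots,j-1\}$ yields
\[
Z[k] < Z[i] - (k-i) + \beta,
\]
while membership at $k=j$ yields
\[
Z[j] \ge Z[i] - (j-i) + \beta.
\]

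The decisive step is then to pin down $j$ using Lemma~\ref{lem:Z}(\ref{Seq-property-2}). Let $m>i$ be the first index with $Z[m] > Z[i]$ or $Z[m] = D^\ast$; the lemma gives $m - i = Z[i]/\alpha = Z[i]/4$ and $Z[k] \le Z[i]/2$ for all $i<k<m$. I would prove $j = m$ by excluding the two alternatives. If $j < m$, then $j$ lies in the low-plateau range, so $Z[j] \le Z[i]/2$; but the membership inequality combined with $j - i \le m-1-i = Z[i]/4 - 1$ forces $Z[j] \ge \tfrac{3}{4} Z[i] + 1 + \beta > Z[i]/2$, a contradiction. If $j > m$, then $m$ is one of the skipped (non-member) stages, so the non-membership inequality applies at $k=m$ and gives $Z[m] < \tfrac{3}{4} Z[i] + \beta$; this contradicts $Z[m] > Z[i]$ in the first defining case and $Z[m] = D^\ast \ge Z[i]$ in the second, using only that $Z[i] \ge \alpha = 4$ and $\beta < 1$ (so that $Z[i]/4 > \beta$ and $D^\ast/4 > \beta$). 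Hence $j = m$, and by the defining property of $m$ we conclude $Z[j] > Z[i]$ or $Z[j] = D^\ast$.

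The main obstacle I anticipate is bookkeeping rather than depth: getting the off-by-one stage/index alignment right (which iteration's Special Update sets which $L$, and that membership in $G_k^*$ is governed by $L_{k-1}$), and confirming that the additive $\beta$ slack never weakens the strict inequalities. Both concerns reduce to the single clean fact that distinct $Z$-values are separated by a factor of $2$ together with $Z[i] \ge 4 > 4\beta$. Once the two displayed inequalities are in hand, the factor-$2$ gap structure of the $Z$-sequence does all the real work.
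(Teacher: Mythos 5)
Your proof is correct and takes essentially the same route as the paper's: both identify the first index $m$ after $i$ (the paper calls it $j^\ast$) with $Z[m] > Z[i]$ or $Z[m] = D^\ast$, invoke Lemma~\ref{lem:Z}(\ref{Seq-property-2}) for $m-i = Z[i]/\alpha$ and the factor-2 gap at intermediate indices, and check membership via $L_{k-1}(C) \le (Z[k]+1)\beta^{-1}$ together with the Automatic-Update closed form. The only difference is organizational --- the paper argues forward by induction (intermediate stages fail the criterion, stage $j^\ast$ satisfies it) while you exclude $j<m$ and $j>m$ by contradiction --- which is just the contrapositive packaging of the same two facts.
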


\begin{proof}
Define $j^\ast > i$ to be the \emph{smallest} index such that 
$Z[j^\ast] > Z[i]$ or  $Z[j^\ast] = D^\ast$. 
To prove the lemma it suffices to show that $j^* = j$, i.e.,
$L_{j'}(C)$ is set according to an Automatic Update
for $j' \in \{i+1,\ldots,j^*-1\}$ but $C$ appears in $G_{j^*}$
and participates in a Special Update.

To prove that $L_{j'}(C)$ is set according to an Automatic Update
(assuming, inductively, that the claim holds for $L_{i+1}(C),\ldots,L_{j'-1}(C)$)
it suffices to show 
\[
L_{j'-1}(C) - \beta^{-1} 
= L_i(C) - (j'-i)\cdot\beta^{-1} 
> Z[k]\cdot \beta^{-1}.
\]
By Lemma~\ref{lem:Z}(\ref{Seq-property-2}), 
$j^*-i = Z[i]/\alpha$.  Since $j'<j^*$ we have
\[
(j'-i)\cdot \beta^{-1} 
< (j^\ast - i)\cdot \beta^{-1} 
= Z[i]\cdot \beta^{-1}/\alpha 
<  L_{i}(C) / \alpha.
\]
It follows that 
\[
L_i(C) - (j'-i)\cdot\beta^{-1}
> (1-1/\alpha)L_i(C) > L_i(C)/2.
\]
On the other hand, Lemma~\ref{lem:Z}(\ref{Seq-property-2}) implies that 
\[
Z[j']\cdot\beta^{-1}
\leq 
(Z[i]/2)\cdot\beta^{-1}
< L_i(C)/2.
\]
Therefore $L_i(C) - (j'-i)\beta^{-1} > Z[j']\cdot\beta^{-1}$,
implying $L_{j'}(C)$ is set according to an Automatic Update.
Finally, from the definition of $i$ and $j^*$ we have
\[
L_{j^*-1}(C) < L_i(C)
= Z[i]\cdot\beta^{-1}+1
\leq Z[j^*]\cdot \beta^{-1}+1
< (Z[j^*]+1)\cdot \beta^{-1},
\]
meaning $C$ appears in $G_{j^*}^*$ and $L_{j^*}(C)$ is set according to a Special Update.
\end{proof}

In the $\bfs$ algorithm, 
the upper bound estimates $U_i(C)$ are 
all monotonically decreasing
with $i$, due to the way Special and Automatic Updates are performed in Steps 7 and 8.  On the other hand, the 
lower bound estimates $L_i(C)$ are only monotonically decreasing during Automatic Updates and may oscillate many times over the
execution of the algorithm.  
(See Figure~\ref{fig:time-evolution} for a depiction of how this happens.)
Since $U_{\cdot}(\cdot)$-values
offer a more stable way to measure progress, we need
to connect them with the $L_{\cdot}(\cdot)$-values, 
which directly influence the composition of $X_i$ and $G_i^*$.

\begin{lemma}\label{lem-aux-2-ssss}
If $[L_i(C),U_i(C)]$ is set during a Special Update step, then
\[
U_i(C) \leq \max\{2w^2\cdot L_i(C),\; 2w^2\cdot \beta^{-1}\}
\]
\end{lemma}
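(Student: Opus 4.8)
The plan is to unfold the two minima that define the Special Update and argue on each. Recall that in Step 7 the estimates are assigned as $L_i(C) = \min\{Z[i]\beta^{-1}+1,\ x\beta^{-1}/w\}$ and $U_i(C) = \min\{U_{i-1}(C)-\beta^{-1},\ \max\{x,1\}\,w\beta^{-1}\}$, where $x = \dist_{G_i^*}(W_i^*,C)$. Since the recursive call only explores $G_i^*$ to depth $Z[i]$, either $x$ is finite with $x\le Z[i]$, or $x=\infty$; I would split on these. When $x$ is finite, $x\le Z[i]\le wZ[i]$ forces $x\beta^{-1}/w \le Z[i]\beta^{-1} \le Z[i]\beta^{-1}+1$, so $L_i(C)=x\beta^{-1}/w$. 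For $x\ge 1$ the upper-bound term then gives $U_i(C)\le x\,w\beta^{-1} = w^2\cdot(x\beta^{-1}/w)=w^2 L_i(C)$, which is already stronger than required, while for $x=0$ we get $U_i(C)\le w\beta^{-1}\le 2w^2\beta^{-1}$. No induction is needed here.

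The real obstacle is the ``false alarm'' case $x=\infty$, where the upper-bound term is vacuous and $U_i(C)=U_{i-1}(C)-\beta^{-1}$ merely inherits the previous estimate, while $L_i(C)=Z[i]\beta^{-1}+1$. I would establish the bound by induction on the sequence of Special Updates undergone by $C$. Let $i_0<i$ be the previous Special Update of $C$ (or the Step-1 initialization). Between $i_0$ and $i$ the cluster sees only Automatic Updates, so $U_i(C)=U_{i_0}(C)-(i-i_0)\beta^{-1}$ and $L_{i-1}(C)=L_{i_0}(C)-(i-1-i_0)\beta^{-1}$. If $i_0$ was a finite-$x$ update (so $U_{i_0}(C)\le w^2 L_{i_0}(C)$ by the previous paragraph), I combine the membership condition $C\in\Upsilon$ at stage $i$, i.e.\ $L_{i-1}(C)\le (Z[i]+1)\beta^{-1}$, with the spacing bound $(i-1-i_0)\beta^{-1}\le L_{i_0}(C)/\alpha$ extracted from Lemma~\ref{lem-aux-1-ssss} to deduce $L_{i_0}(C)\le \tfrac{4}{3}(Z[i]+1)\beta^{-1}$; hence $U_i(C)\le U_{i_0}(C)\le w^2 L_{i_0}(C)\le \tfrac{4}{3}w^2(Z[i]+1)\beta^{-1}$, which is at most $2w^2(Z[i]\beta^{-1}+1)$ precisely because $Z[i]\ge\alpha=4\ge 2-3\beta$. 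If instead $i_0$ was itself a capped update, so $L_{i_0}(C)=Z[i_0]\beta^{-1}+1$, the inductive hypothesis gives $U_{i_0}(C)\le 2w^2(Z[i_0]\beta^{-1}+1)$, and Lemma~\ref{lem-aux-11-ssss} guarantees $Z[i]>Z[i_0]$ or $Z[i]=D^\ast$; either way $Z[i]\ge Z[i_0]$, so $U_i(C)\le U_{i_0}(C)\le 2w^2(Z[i_0]\beta^{-1}+1)\le 2w^2(Z[i]\beta^{-1}+1)$. The base case is immediate, since Step 1 sets $U_0(C)=\max\{w\beta^{-1},\,w^2 L_0(C)\}\le \max\{2w^2\beta^{-1},\,2w^2 L_0(C)\}$.

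The step I expect to be most delicate is exactly the finite-$i_0$ sub-case of the induction: there one must convert the lower-bound information $L_{i-1}(C)\le(Z[i]+1)\beta^{-1}$ together with the spacing of Special Updates from Lemma~\ref{lem-aux-1-ssss} into a usable upper bound $L_{i_0}(C)=O(Z[i]\beta^{-1})$, and then verify that the slack in the target constant $2$ survives the accumulated $w^2$ and $\tfrac{4}{3}$ factors. That final numerical margin rests entirely on $\alpha=4$ and $Z[i]\ge\alpha$, so I would be careful to track these thresholds rather than absorbing them into asymptotics.
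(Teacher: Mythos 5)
Your proof is correct and follows essentially the same route as the paper's: induction over the sequence of Special Updates, splitting on whether the current update is finite-$x$ or capped at $Z[i]\beta^{-1}+1$, and, in the capped case, on whether the previous Special Update was finite (handled via Lemma~\ref{lem-aux-1-ssss}) or capped (handled via Lemma~\ref{lem-aux-11-ssss}). The only differences are cosmetic arithmetic---e.g., you derive $L_{i_0}(C)\le\tfrac{4}{3}(Z[i]+1)\beta^{-1}$ where the paper instead splits on whether $U_{i^\ast}(C)\le(2w^2+1)\beta^{-1}$ and works with $L_{i^\ast}(C)/2$---plus a one-line patch you'd need for the $x=0$ sub-case of a finite previous update, where $U_{i_0}(C)\le w\beta^{-1}$ already suffices.
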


\begin{proof}
The proof is by induction on $i$.
We regard Step 1 of $\bfs$ as the Special Update for $i=0$.
The claim clearly holds for $i=0$ since $U_0(C)$ is set such that
$U_0(C) \in \{w\beta^{-1}, w^2\cdot L_0(C)\}$.
Assume, inductively, that the lemma holds for all indices less than $i$.

In general, whenever $L_i(C)$ is set to be $x\beta^{-1}/w$ in 
Step 7, where $x= \dist_{G_i^*}(W_i^*,C)$, the claim holds since
$U_i(C) \in \{w\beta^{-1}, w^2\cdot L_i(C)\}$.
Thus, we may proceed under the assumption that
$L_i(C)$ is set to be $Z[i]\cdot\beta^{-1}+1$ during a Special Update.

Define $i^\ast < i$ to be the last stage
in which $L_{i^*}(C)$ was set by a Special Update.
We consider two cases, depending on how $L_{i^*}(C)$ was set.
\begin{itemize}
    \item Suppose $L_{i^\ast}(C)$ is set to be 
    $x/(\beta w) \leq Z[i^\ast]/\beta$ in the Special Update,
    and as a consequence, 
    $U_{i^*}(C) \leq \max\{w\beta^{-1}, w^2 \cdot L_{i^*}(C)\}$.
    (Here $x > 0$ is the BFS-label of $C$ found in Step 7.)
    If $U_{i^\ast}(C) \leq (2w^2+1)\cdot \beta^{-1}$, 
    then we are already done, since 
    $U_i(C) \leq U_{i^\ast}(C) - 1/\beta \leq 2w^2\cdot \beta^{-1}$. 
    Thus, we may assume $U_{i^\ast}(C) > (2w^2+1)\cdot\beta^{-1}$,
    and consequently, that $L_{i^\ast}(C) > 2\beta^{-1}$.
     
    By Lemma~\ref{lem-aux-1-ssss}, we have $(i-{i^\ast})/\beta \leq \max\{1/\beta, L_{i^\ast}(C)/\alpha\} < L_{i^\ast}(C)/2$.
    In order for $L_i(C)$ to be set by a Special Update, 
    it is necessary that $L_{i-1}(C) \leq (Z[i]+1)\cdot \beta$.  
    Thus, we must have
\begin{align*}
    Z[i]\cdot \beta^{-1} &\geq L_{i-1}(C)-\beta^{-1} 
                &  & & & \text{since $L_{i-1}(C) \leq (Z[i]+1)\cdot \beta^{-1}$}\\
    &= L_{i^\ast}(C) - (i-{i^\ast})\cdot\beta^{-1} 
                & & & & \text{since $C$ does not appear in $G_{i^\ast+1}^*,\ldots,G_{i-1}^*$}\\
    &\geq L_{i^\ast}(C)/2 & & & & \text{since  $(i-{i^\ast})\cdot \beta^{-1} \leq L_{i^\ast}(C)/2$}
\end{align*}
    Remember that $L_i(C) = Z[i]\cdot \beta^{-1}+1$, and based on this we show that $U_i(C) \leq 2w^2 \cdot L_i(C)$.
\begin{align*}
    L_i(C) &= Z[i]\cdot \beta^{-1}+1 \\
    &>   L_{i^\ast}(C)/2 & & &  &\text{ since } Z[i]\cdot\beta^{-1} \geq L_{i^\ast}(C)/2\\
    &  \geq U_{i^\ast}(C) / (2w^2) & & & &\text{ since $U_{i^\ast}(C) \leq w^2  L_{i^\ast}(C)$}\\
    &> U_i(C) / (2w^2) & & & &\text{ since $U_i(C) < U_{i^\ast}(C)$, as $i^\ast < i$.}
\end{align*}    
\item Now consider the case when $L_{{i^\ast}}(C)$ is set to be $Z[{i^\ast}]\cdot\beta^{-1}+1$. By Lemma~\ref{lem-aux-11-ssss}, we have $Z[i] \geq Z[i^\ast]$.
Therefore, 
$L_i(C) = Z[i]\cdot\beta^{-1}+1  \geq Z[i^\ast]\cdot\beta^{-1}+1 = L_{i^\ast}(C)$.
By the inductive hypothesis,
it is guaranteed that 
$U_{i^\ast}(C) \leq \max\{2w^2\cdot\beta^{-1},\, 2w^2\cdot L_{i^\ast}(C)\}$.
If $U_{i^\ast}(C) \leq 2w^2\cdot \beta^{-1}$, then we are done. 
If $U_{i^\ast}(C) \leq 2w^2\cdot L_{i^\ast}(C)$, then we have 
\[
L_i(C) \geq L_{i^\ast}(C) \geq U_{i^\ast}(C) / (2w^2) > U_i(C) / (2w^2).
\]
\end{itemize}
This concludes the induction and the proof.
\end{proof}

We are now in a position to prove Claim~\ref{claim:U}, that each
vertex participates in $G_i^*$ for at most 
$\tilde{O}(1)$ indices $i$.

\begin{proof}[Proof of Claim~\ref{claim:U}]
Suppose that $C$ participates in a Special Update that 
sets $[L_i(C),U_i(C)]$ with $U_i(C) \ge 2w^2\cdot \beta^{-1}$
and that the next interval to be set by a Special Update is $[L_{j}(C),U_{j}(C)]$.
Then
\begin{align}
(j-i) 
&\geq \frac{\beta\cdot L_i(C)}{8\alpha}
\geq \frac{\beta\cdot U_i(C)}{16\alpha w^2}.\label{eqn:blah}
\end{align}
The first inequality of (\ref{eqn:blah})
follows from Lemma~\ref{lem-aux-1-ssss}
and the second inequality from Lemma~\ref{lem-aux-2-ssss}.  
Since $U_*(C)$ is decremented
by at least $\beta^{-1}$ in each stage,
(\ref{eqn:blah}) implies that
\[
U_{j}(C) \leq U_i(C) - (j-i)\cdot\beta^{-1} \leq U_i(C)\left(1-\frac{1}{16\alpha w^2}\right).
\]
In other words, $C$ participates in at most
$\log_{1+\Theta(1/w^2)} D = \Theta(w^2 \log D) = O(\log^3 n)$ 
Special Updates until some stage $i$ in which
$U_i(C) < 2w^2\cdot \beta^{-1}$, after which $C$
participates in at most $O(w^2)$ Special Updates
all constituents of $C$ settle their distance
from the source and are deactivated.
\end{proof}

\subsection{Time and Energy Complexity of BFS}\label{sect:BFSmainthm}

The remainder of this section constitutes a proof of
Theorem~\ref{thm:main}.

\begin{theorem}\label{thm:main}
Let $G=(V,E)$ be a radio network, 
$s\in V$ be a distinguished source vertex,
and $D = \max_u \dist_G(s,u)$.
A Breadth First Search labeling can be computed
in $\tilde{O}(D)\cdot 2^{O(\sqrt{\log D\log\log n})}$ time
and $\tilde{O}(1)\cdot 2^{O(\sqrt{\log D\log\log n})}$ energy,
with high probability.
\end{theorem}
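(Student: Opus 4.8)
The plan is to unroll the energy recurrence~(\ref{eqn:EnD1}), which the preceding development (Claims~\ref{claim:X}, \ref{claim:U} and Lemma~\ref{lem:sr-sim}) has already established:
\begin{equation*}
\Energy(D) \leq \tilde{O}(1)\cdot \Energy(\tilde{O}(\beta D)) + \tilde{O}(\beta^{-1}).
\end{equation*}
The key structural fact is that each level of recursion replaces the depth parameter $D$ by $\tilde{O}(\beta D)$, i.e.\ it shrinks by a factor of roughly $\beta w = \beta\cdot \Theta(\log n)$ at each level. First I would choose $\beta$ so that the recursion bottoms out after a controlled number $r$ of levels: once the effective depth drops below $1$, the BFS is trivial and $\Energy = \tilde{O}(\beta^{-1})$. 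If the depth contracts by a factor of $(\beta w)$ per level, then after $r$ levels the depth is about $D\cdot(\beta w)^{r}$, so I would set $r \approx \log D / \log(1/(\beta w))$ to drive this below $1$.

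\textbf{Balancing the two costs.}
Each level contributes a multiplicative factor of $\tilde{O}(1) = w^{O(1)} = \polylog(n)$ (the simulation overhead from Lemma~\ref{lem:sr-sim}, together with the $\tilde{O}(1)$ bounds from Claims~\ref{claim:X} and \ref{claim:U}), and the additive bottom-level term is $\tilde{O}(\beta^{-1})$. Unrolling gives roughly
\begin{equation*}
\Energy(D) \;\leq\; (\polylog n)^{r}\cdot \tilde{O}(\beta^{-1}) \;=\; 2^{O(r\log\log n)}\cdot \tilde{O}(\beta^{-1}).
\end{equation*}
The two competing costs are the $2^{O(r\log\log n)}$ from the depth of recursion and the $\beta^{-1}$ from the per-level bottom cost. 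To minimize the product I would set $\beta = 2^{-\sqrt{\log D\log\log n}}$, exactly as foreshadowed in the Efficiency discussion. With this choice $\log(1/\beta) = \sqrt{\log D\log\log n}$ dominates $\log w = \Theta(\log\log n)$, so each level contracts the depth by roughly $2^{\sqrt{\log D\log\log n}}$, giving recursion depth $r \approx \sqrt{\log D/\log\log n}$. Substituting, both $r\log\log n$ and $\log(1/\beta)$ equal $\sqrt{\log D\log\log n}$, so the energy becomes $2^{O(\sqrt{\log D\log\log n})}$, as claimed.

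\textbf{Time complexity.}
For the time bound I would track the analogous recurrence, the difference being that moving the wavefront costs $\Omega(D)$ time that cannot be amortized away. The top-level loop performs $\ceil{\beta D}$ stages (Step 3), each advancing the wavefront by $\beta^{-1}$ via $\sr$ calls, so the wavefront-advancement work is $\tilde{O}(D)$ across the whole top level. The recursive calls and simulation overhead (Lemma~\ref{lem:sr-sim}, Lemma~\ref{lemma:sr-cluster}) again contribute a $\polylog(n)$ factor per level, so the time picks up the same $2^{O(\sqrt{\log D\log\log n})}$ multiplicative factor on top of the unavoidable $\tilde{O}(D)$, yielding $\tilde{O}(D)\cdot 2^{O(\sqrt{\log D\log\log n})}$.

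\textbf{Main obstacle.}
I expect the correctness accounting rather than the arithmetic to be the delicate part. Each of the $\poly(n)$ invocations of $\sr$, the cluster construction, and the distance estimates (Lemmas~\ref{lem:dist-ratio-union-bound}, \ref{lem:dist-ratio-union-bound2}, \ref{lem:distance-estimates}) fails only with probability $1/\poly(n)$; since the total number of such events is bounded by a polynomial in $n$, a union bound keeps the overall failure probability at $1/\poly(n)$, preserving Invariant~\ref{inv:interval} and hence correctness throughout. The subtlety is ensuring the recursion parameters stay within the regime where the earlier lemmas apply --- in particular that the per-level depth $D^* = w\beta D$ handed to the recursive call is consistent with the $Z$-sequence truncation at $D^*$ and with the multiplicative/additive distance guarantees, so that the $\tilde{O}(1)$ factors in Claims~\ref{claim:X} and \ref{claim:U} genuinely compound to only $2^{O(\sqrt{\log D\log\log n})}$ rather than hiding a larger dependence on the recursion depth.
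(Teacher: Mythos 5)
Your energy analysis is essentially the paper's: the same recurrence $\Energy_r(D') \le \tilde{O}(1)\cdot\Energy_{r+1}(\tilde{O}(\beta D')) + \tilde{O}(\beta^{-1})$, the same choice $\beta = 2^{-\sqrt{\log D_0\log\log n}}$, and the same recursion depth $\approx \sqrt{\log D_0/\log\log n}$; the only cosmetic difference is that the paper truncates the recursion at a fixed depth $L$ and pays $D_L = (\tilde{O}(1))^L$ for a trivial BFS at the bottom, rather than recursing until the depth parameter drops below $1$. That part of your argument is sound.

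The time analysis, however, has a genuine gap. You assert that ``the recursive calls and simulation overhead again contribute a $\polylog(n)$ factor per level,'' but this is precisely the claim that requires proof, and it is false without invoking the structure of the $Z$-sequence. A single invocation of $\bfs$ at depth $D'$ makes $\ceil{\beta D'}$ recursive calls, one per stage, to depths $Z[1], Z[2], \ldots$, and these run \emph{sequentially}, each with a $\tilde{O}(\beta^{-1})$ time overhead per simulated $\sr$ (Lemma~\ref{lem:sr-sim}). Since $\Time_{r+1}(Z[i]) \geq Z[i]$ and $Z[i]$ can be as large as $D^* = w\beta D'$, the naive per-call bound gives total recursive-call time $\tilde{O}(\beta^{-1}) \cdot \beta D' \cdot w\beta D' = \tilde{O}(w\beta (D')^2)$, which is quadratic in $D'$ and already exceeds the claimed $\tilde{O}(D')\cdot 2^{O(\sqrt{\log D_0 \log\log n})}$ for large $D'$. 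The paper closes this hole with a two-part argument your proposal never makes: (i) by Lemma~\ref{lem:Z}, each value $b \in \{\alpha, 2\alpha, 4\alpha, \ldots, D^*\}$ occurs fewer than $\beta D'/b$ times among $Z[0],\ldots,Z[\ceil{\beta D'}-1]$, so the sum $\sum_i \Time_{r+1}(Z[i])$ can be rewritten as $\sum_{b} (\beta D'/b)\cdot \Time_{r+1}(b)$; and (ii) an induction hypothesis that time is \emph{linear} in the depth parameter, $\Time_{r+1}(b) \le b\cdot(\tilde{O}(1))^{L-(r+1)}$, so each of the $O(\log D^*)$ values of $b$ contributes only $\tilde{O}(\beta D')\cdot(\tilde{O}(1))^{L-(r+1)}$, yielding $\Time_r(D') \le D'\cdot(\tilde{O}(1))^{L-r}$. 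In other words, the bound holds only because the $Z$-sequence makes deep recursive calls exponentially rare, and the harmonic-frequency accounting is the crux of the time proof. Your ``main obstacle'' paragraph anticipates difficulties with correctness and with parameter regimes, but the actual delicate step is this amortization, which is absent from the proposal.
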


The main problem is to compute BFS up to some threshold distance $D_0$.
Once we have a solution to this problem, we can obtain bounds
in terms of the (unknown) $D$ parameter by testing every 
$D_0 = 2^k$
that is a power of 2, stopping at the first value that labels
all of $V(G)$.  We use a call to $\sr$ as a unit of measurement of both time and energy, i.e., calling $\sr$ takes one unit of time, and every \emph{participating} vertex expends one unit of energy.
(By Lemma~\ref{lemma:sr-decay} actual time and energy are at most a 
$O(\log^2 n)$ factor larger.)

The algorithm we apply is a slightly modified $\bfs$, 
where all cluster graphs in all recursive invocations are constructed with $\beta = 2^{-\sqrt{\log D_0 \log\log n}}$.
We only apply $\bfs$ to recursion depth 
$L = \sqrt{\log D_0/\log\log n}$, at which point we revert 
to the trivial BFS algorithm that settles all distances up to $D'$
using $D'$ time and energy, by calling $\sr$ $D'$ times.

Define $\Energy_r(D')$ to be the number of calls to $\sr$ that a 
vertex participates in when computing BFS to distance $D'$, 
and when the recursion depth is $r \in [0,L]$.
Thus, we have
\[
\Energy_r(D') = \left\{
\begin{array}{lr}
\tilde{O}(1)\cdot \Energy_{r+1}(\tilde{O}(\beta D')) + \tilde{O}(\beta^{-1}) & \mbox{if $r < L$}\\
D'                      & \mbox{if $r=L$}
\end{array}\right.
\]
By Lemma~\ref{lem:clustr-diam-ub} the cost to create the cluster
graph $G^*$ is $\tilde{O}(\beta^{-1})$.
By Claim~\ref{claim:X} each vertex appears in $X_i$
for $\tilde{O}(1)$ stages $i$, and for each, participates
in $\beta^{-1}$ calls to $\sr$.  These costs are covered
by the $\tilde{O}(\beta^{-1})$ term.
All calls to $\bfs$ on $G^*$ involve computing BFS to 
some distance at most $D^* = w\beta D' = \tilde{O}(\beta D')$.
By Claim~\ref{claim:U}, every vertex participates in 
$\tilde{O}(1)$ such recursive calls.
Moreover, by Lemma~\ref{lem:sr-sim}, 
every cluster $C$ (vertex in $G^*$) that 
participates in a call to $\sr$ on $G^*$ can be simulated 
such that constituent vertices of $C$ participate in 
$\tilde{O}(1)$ calls to $\sr$ on $G$.  The costs of recursive
calls are represented by the $\tilde{O}(1)\cdot \Energy_{r+1}(\tilde{O}(\beta D'))$ term.

When the recursion depth $r$ reaches $L$, 
the \emph{maximum} value
of $D'$ is therefore at most
\[
D_L = D_0 \cdot (\tilde{O}(\beta))^L = (\tilde{O}(1))^L = 2^{O(\sqrt{\log D_0\log\log n})},
\]
since $\beta^L = D_0^{-1}$.  Thus, the energy cost of the top-level recursive call is at most
\[
\Energy_0(D_0) = (\tilde{O}(1))^L \cdot (D_L + \tilde{O}(\beta^{-1})) = \tilde{O}(1)\cdot 2^{O(\sqrt{\log D_0 \log\log n})}.
\]

We can set up a similar recursive expression 
for the time of this algorithm.
\[
    \Time_r(D') \le \left\{
    \begin{array}{lr}
    \displaystyle O(D') + \tilde{O}(\beta^{-1})\cdot 
    \sum_{i=0}^{\ceil{\beta D'}-1} \Time_{r+1}(Z[i]) & \mbox{if $r<L$}\\
    D'  & \mbox{ if $r=L$}
    \end{array}
    \right.
\]
The $r=L$ case is the time of the trivial algorithm,
so we focus on justifying the expression for $r<L$.
The time to advance the BFS wavefront over all $\ceil{\beta D'}$
stages of Step 5 is $O(D')$.  
We treat Step 1 as the Special Update for $i=0$ with $Z[0]=D^*$.
In general, the Special Update for stage $i$ takes
$\Time_{r+1}(Z[i])$ time \underline{with respect to $G^*$},
and each unit of time (i.e., a call to $\sr$) is simulated
in $G$ in time linear in the maximum cluster diameter, 
namely $\tilde{O}(\beta^{-1})$.  By Lemma~\ref{lem:Z},
each value $b\in B = \{\alpha,2\alpha,4\alpha,\ldots,D^*\}$
appears less than $(\beta D'/b)$ times in 
$Z[0],\ldots,Z[\ceil{\beta D'}-1]$, 
hence we can rewrite the sum as 
$\sum_{b\in B} (\beta D'/b)\cdot \Time_{r+1}(b)$.
Assuming inductively that 
$\Time_{r+1}(b)$ is 
$b\cdot(\tilde{O}(1))^{L - (r+1)}$, which holds when $r+1=L$,
we have
\begin{align*}
    \Time_r(D') &\leq O(D') + \tilde{O}(\beta^{-1})\cdot 
    \sum_{b\in B} (\beta D'/b)\cdot \Time_{r+1}(b)\\
    &= O(D') + \tilde{O}(1)\cdot \sum_{b\in B} (D'/b)\cdot b\cdot (\tilde{O}(1))^{L-(r+1)}\\
    &= D' \cdot (\tilde{O}(1))^{L-r}
\end{align*}
Hence $\Time_0(D_0) = D_0 \cdot (\tilde{O}(1))^L = \tilde{O}(D_0)\cdot 2^{O(\sqrt{\log D_0\log\log n})}$.

%
%
%
%
%
%
%
%
%
%


\section{Hardness of Diameter Approximation}\label{appendix:diameter}

In this section, we show that certain approximations of diameter cannot be computed in $o(n)$ energy, even allowing messages of unlimited size.
Our lower bounds also hold in the setting where the network supports {\em collision detection}, i.e., in each time slot $t$, each listener $v$ is able to distinguish between the following two cases: (i) at least two vertices in $N(v)$ transmit at time $t$ (noise), or  (ii) no vertex in   $N(v)$ transmits at time $t$ (silence). 

First, we show that computing a $(2-\epsilon)$-approximation of diameter is hard by proving that it takes $\Omega(n)$ energy to distinguish between  (i) an $n$-vertex complete graph $K_n$ (which has diameter $1$), or (ii) an $n$-vertex complete graph minus one edge $K_n - e$ (which has diameter 2).

\begin{theorem}\label{thm:diameter-lb1}
The energy complexity of computing a $(2-\epsilon)$-approximation of diameter is $\Omega(n)$, even on the class of unit-disc graphs.
\end{theorem}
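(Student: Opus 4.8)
The plan is to reduce the approximation problem to a decision problem and then lower-bound the decision problem by an indistinguishability/counting argument. First I would observe that any $(2-\epsilon)$-approximation algorithm must in particular separate a graph of diameter $1$ from one of diameter $2$: on $K_n$ (diameter $1$) an over-estimating approximation returns a value in $[1,\,2-\epsilon]$, hence strictly below $2$, whereas on $K_n - e$ (diameter $2$) it returns a value at least $2$; under the under-estimating convention the two ranges are $[1/(2-\epsilon),1]$ and $[2/(2-\epsilon),2]$, which are again disjoint since $2/(2-\epsilon)>1$. Thresholding the output therefore distinguishes the two cases, so it suffices to prove that distinguishing $K_n$ from $K_n - e$ costs $\Omega(n)$ energy. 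I would also verify that both families lie in the promised class: cluster $n-2$ points inside a ball of radius $\delta$ and place the endpoints of $e$ at $(0.5+\eta,0)$ and $(-0.5-\eta,0)$, so they are each within distance $1$ of the cluster but at distance $1+2\eta>1$ from one another; this realizes $K_n - e$ as a unit-disc graph, and $K_n$ is trivially one.

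The heart of the argument is a combinatorial lemma. I would fix the algorithm's random bits $R$, run it on $K_n$, and record at each time step $t$ the listener set $L_t$ and transmitter set $T_t$. The claim is that deleting an edge $e=\{u,w\}$ can perturb the execution only at a step where one endpoint listens while the other transmits in a \emph{critical} configuration: either the transmitting endpoint is the \emph{unique} transmitter in the listener's neighborhood (so the listener's feedback would drop from a received message to silence), or there are \emph{exactly two} transmitters including it (so the feedback would change from a collision to a received message). Because removing a single transmitter can only change the transmitter count $1\to 0$ or $2\to 1$, these are the only critical configurations, \emph{with or without} collision detection. Hence each listener $u$ at time $t$ has at most two critical edges, so the number of distinct edges that are ever critical across the execution is at most $2\sum_t |L_t| \le 2\sum_v \Energy(v) \le 2nE$, where $E$ is the maximum per-vertex energy.

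Next I would couple the two executions. By induction on time, if $e$ is never critical in the $K_n$-execution with randomness $R$, then the $K_n - e$ execution with the same $R$ is \emph{identical} step for step: vertices other than $u,w$ have unchanged neighborhoods, and at $u$ and $w$ the reception outcomes agree precisely because $e$ is never critical, so identical histories force identical actions, which closes the induction. It follows that for a uniformly random edge $e^*$ the two executions differ with probability at most $2nE/\binom{n}{2} = 4E/(n-1)$. Whenever they coincide, the algorithm returns the same answer on $K_n$ and on $K_n - e^*$, hence errs on at least one of them; averaging yields $\Pr[\text{err on }K_n] + \Pr_{e^*}[\text{err on }K_n - e^*] \ge 1 - 4E/(n-1)$. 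For any algorithm with per-instance error at most $\delta<1/2$ this forces $E \ge (1-2\delta)(n-1)/4 = \Omega(n)$.

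The main obstacle is the critical-edge analysis in the second paragraph: pinning down exactly which edge-deletions can perturb a radio-network execution, confirming that collision detection introduces no additional critical configurations, and making the inductive coupling airtight despite the apparent circularity that ``critical'' is defined relative to the very execution being coupled (resolved by noting that identical histories force identical $L_t,T_t$, so the $K_n$-execution legitimately governs the comparison). The counting and averaging steps are then routine. One minor point to treat carefully is whether the energy bound $E$ holds with probability $1$ or only with high probability; in the latter case I would condition on that high-probability event and fold its failure probability into $\delta$.
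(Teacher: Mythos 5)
Your proposal is correct and takes essentially the same route as the paper's proof: both reduce to distinguishing $K_n$ from $K_n - e$ for a uniformly random edge $e$, identify the same critical configurations (one endpoint listens while the other is among at most two transmitters), bound the number of such pairs over the whole execution by twice the total energy $2nE$, and conclude by coupling the two executions and averaging. The differences are only presentational: the paper augments the model with shared randomness and globally visible transcripts so that indistinguishability is immediate, whereas you run the coupling induction directly in the standard model (and additionally spell out the unit-disc embedding and the approximation-to-decision reduction, which the paper leaves implicit).
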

\begin{proof}
Throughout the proof, we consider the scenario where the underlying graph is $K_n$ with probability $1/2$, and is $K_n - e$  with probability $1/2$. The edge $e$ is chosen uniformly at random.  Observe that
both $K_n$ and $K_n - e$ are both unit disc graphs.
Let $\mathcal{A}$ be any randomized algorithm that   is able to  distinguish between $K_n$  and $K_n - e$.
We make the following simplifying assumptions, 
which only increase the capabilities of the vertices.
\begin{itemize}
\item Each vertex has a distinct ID from $[n]$.
\item All vertices have access to a shared random string.
\item By the end of each time slot $t$, each vertex knows the following information: (i) the IDs of the vertices transmitting at time $t$, (ii) the IDs of the vertices listening at time $t$, and (iii) the channel feedback (i.e., noise, silence, or a message $m$) for each listening vertex.
\end{itemize}

With the above extra capabilities, all vertices share the same history.
Since the actions of the vertices at time $t+1$ depend only on the shared history of all vertices and their shared random bits, by the end of time $t$ all vertices are able to predict the actions (i.e., transmit a message $m$, listen, or idle) of all vertices at time $t+1$.

We say that time $t$ is {\em good} for  a pair $\{u,v\}$ if the following conditions are met. Intuitively, if $t$ is not good for $\{u,v\}$, then what happens at time $t$ does not reveal any information as to whether $\{u,v\}$ is an edge.
\begin{itemize}
\item The number of transmitting vertices at time $t$ is  either 1 or 2,
\item One of the two vertices  $\{u,v\}$ listens at time $t$, and the other one transmits at time $t$.
\end{itemize}

Once the shared random string is fixed,
define $X_{\text{bad}}$ to be the set of pairs $\{u,v\}$ 
such that there is no time $t$ that is good for $\{u,v\}$ 
in an execution of $\mathcal{A}$ on $K_n$.
Define $X_{\text{good}}$ to be the remaining pairs.

We claim that if the energy per vertex is at most $E = (n-1)/8$, then for every pair $\{u,v\}$, $\Prob{\{u,v\} \in X_{\text{bad}}} \geq 1/2$.
Recall that if a time $t$ is good for some pair, then the number of transmitting vertices is at most 2.
Thus, if $t$ is good for $x$ pairs, then at least $x/2$ vertices
listen at time $t$, and so the total energy spent over all vertices and all time slots is at least $|X_{\text{good}}| / 2$.
On the other hand, it is also at most $nE = n(n-1)/8$.
If $n(n-1)/8 \geq |X_{\text{good}}| / 2$,
then $|X_{\text{bad}}| \geq  n(n-1)/4$
and
$\Prob{\{u,v\} \in X_{\text{bad}}} \geq  1/2$.

Recall that we pick $e$ at random and then choose 
the input graph to be either $K_n$ or $K_n-e$.
Once $e$ is selected, let
$\mathcal{E}$ be the event that $e\in X_{\text{bad}}$,
which now depends only on the shared random string.
When $\mathcal{E}$ occurs, the  execution of $\mathcal{A}$ is 
identical on both $K_n$ and $K_n - e$, 
and so the success probability of $\mathcal{A}$ is at most $1/2$.
Thus, $\mathcal{A}$  fails with probability at least $(1/2) \Prob{\mathcal{E}} \geq 1/4$.
This contradicts the assumption that $\mathcal{A}$  is able to  distinguish between $K_n$  and $K_n - e$.
\end{proof}

For sparse graphs (i.e., those with $O(\log n)$-arboricity), we show that $(3/2-\epsilon)$-approximation of diameter is hard.
The proof follows the framework of~\cite{AbboudCK16}, which shows that computing diameter takes $\Omega(n / \log^2 n)$ time in the $\CONGEST$ model,
or more generally $\Omega\left(\frac{n}{B \log n}\right)$ time in the message-passing model with $B$-bit message size constraint.
Note that a time lower bound in $\CONGEST$ does not, in general, imply any lower bound in $\RN[\infty]$, which \emph{has no message size constraint}. The main challenge for proving Theorem~\ref{thm:diameter-lb2} is that we allow  messages of unbounded length.

\begin{theorem}\label{thm:diameter-lb2}
The energy complexity of computing an $(3/2-\epsilon)$-approximation of diameter is $\Omega(n  / \log^2 n)$, even on graphs of $O(\log n)$-arboricity or $O(\log n)$ treewidth.
\end{theorem}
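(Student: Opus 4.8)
The plan is to reduce from two-party Set Disjointness, $\mathsf{DISJ}$, and to reuse the information-theoretic machinery already developed for Theorem~\ref{thm:diameter-lb1}. First I would observe that a $(3/2-\epsilon)$-approximation algorithm can distinguish a graph of diameter $2$ from one of diameter $3$: on a diameter-$3$ graph it must report a value $>2$, whereas on a diameter-$2$ graph it reports a value $\le 2$, so it suffices to show that \emph{deciding whether $\diam(G)\in\{2,3\}$} costs $\Omega(n/\log^2 n)$ energy. Following~\cite{AbboudCK16}, I would build a graph $G$ on $\Theta(n)$ vertices encoding a $\mathsf{DISJ}$ instance of universe size $\Theta(n)$: Alice holds $S_A$, Bob holds $S_B$, these sets are encoded by the edges incident to two blocks of ``input'' vertices $V_A$ and $V_B$, and the two blocks interact only through a small set $V_C$ of $O(\log n)$ ``coordinate'' vertices. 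The gadget is arranged so that two designated vertices are at distance $2$ iff $S_A\cap S_B=\emptyset$ and at distance $3$ otherwise. The main structural check is that this graph is realizable with $O(\log n)$ arboricity / $O(\log n)$ treewidth, which holds because $V_C$ is a separator of size $O(\log n)$ whose removal leaves simple, forest-like gadgets.

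Next I would import the \emph{shared-history} framework from the proof of Theorem~\ref{thm:diameter-lb1}: grant every device a distinct ID, shared randomness, and, at the end of each slot, public knowledge of who transmitted, who listened, and every listener's channel feedback. Under these (only strengthening) assumptions all devices compute the same transcript, so every action in slot $t+1$ is a deterministic function of the public transcript through slot $t$. The key consequence, exactly as in Theorem~\ref{thm:diameter-lb1}, is an \emph{information-origination} principle: the status of a particular input bit (an edge/non-edge incident to $V_C$) can enter the collective knowledge only during a \emph{good} slot for the corresponding pair, i.e.\ a slot with exactly one or two transmitters in which one endpoint transmits and the other listens; and the number of (pair, good-slot) incidences is at most a constant times the total energy.

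The heart of the argument is to show that \emph{determining the $\mathsf{DISJ}$ answer forces $\tilde\Omega(n)$ good slots to occur at the separator $V_C$}, and then to conclude by concentration. The clean chain is: (1) by the $\Omega(n)$ communication lower bound for $\mathsf{DISJ}$, the output at Bob's designated vertex cannot be fixed unless $\Omega(n)$ bits of fresh information about $S_A$ reach the $V_B\cup V_C$ side; (2) in $G$ this information can leave $V_A$ only across an edge incident to $V_C$, hence only during a good slot involving a $V_C$ vertex; (3) because the input-encoding vertices of $V_A$ form an independent set adjacent only to $V_C$, no $V_A$ vertex can ever pre-aggregate its neighbours' bits, so each good slot at $V_C$ contributes only $O(1)$ fresh bits of $S_A$. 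Combining (1)--(3), the coordinate vertices collectively participate in $\tilde\Omega(n)$ good slots; since $|V_C|=O(\log n)$, some coordinate vertex spends $\tilde\Omega(n)$ energy. A careful accounting of the logarithmic losses---the $O(\log n)$ separator size together with the $O(\log n)$ overhead of realizing each information-revealing reception through $\sr$, mirroring the $\log^2 n$ factor in the $\CONGEST$ bound of~\cite{AbboudCK16}---yields the claimed $\Omega(n/\log^2 n)$, and the approximation statement follows immediately.

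The step I expect to be the \textbf{main obstacle} is exactly the one flagged in the excerpt: unbounded message length. In $\RN[\infty]$ a single collision-free reception can \emph{a priori} carry all of $S_A$, so the naive ``one reception reveals $O(1)$ pairs'' accounting is not literally valid once messages can encode previously-learned bits. The plan to defeat this is to make the accounting about \emph{fresh} information rather than message content: formalize that any device's transmitted message is a function of the public transcript plus only those input bits the device has itself learned through prior good slots, and that the input-encoding vertices, being an independent set hanging off $V_C$, can never accumulate more fresh $S_A$-information than the number of their own receptions. This decouples the information that can cross the separator from the message length and reduces everything to counting good slots at $V_C$. Making this ``fresh-information versus reception-count'' invariant precise---carefully handling the shared randomness, both flavours of collision-detection feedback, and the inductive definition of ``learned'' bits---is the technically delicate part, and is what distinguishes the radio-network argument from the original $\CONGEST$ proof of~\cite{AbboudCK16}.
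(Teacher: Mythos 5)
Your high-level plan---reduce from set-disjointness via the Abboud--Censor-Hillel--Khoury-style graph, and win by charging all Alice-to-Bob information flow to the $O(\log n)$-size middle layer---is the same skeleton as the paper's proof (one small slip: there is no fixed pair of ``designated vertices'' at distance $2$ iff disjoint; the distance-$3$ pair depends on which element is common, so the distinguisher must be the diameter itself, and the construction needs \emph{both} coordinate sets $V_C$ and $V_D$ plus $u^\star,v^\star$). However, the mechanism you propose for the step you correctly flag as the crux---unbounded message length---has a genuine gap, in two places.

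First, importing the full shared-history model of Theorem~\ref{thm:diameter-lb1} is unsound for a communication-complexity reduction. In that model the identities of all transmitters \emph{and all listeners}, and every listener's feedback, are public each slot. This was harmless in Theorem~\ref{thm:diameter-lb1} because there no vertex holds private input: nobody knows whether the missing edge $e$ is present until a good slot reveals it, so publicity leaks nothing. Here the $V_A$ vertices must know their incident edges (that \emph{is} Alice's input), and once they do, an input-dependent but publicly visible action pattern---e.g., ``listen at slot $t$ iff a chosen bit of $a_i$ is $1$,'' or encoding $a_i$ in the \emph{timing} of a single transmission over an unbounded time horizon---transports $S_A$ to Bob's side through the shared history itself, with zero good slots at the separator and zero energy charged there. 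So your chain breaks at step (2): under full shared history, information does \emph{not} have to cross an edge incident to $V_C$ to leave $V_A$, and your information-origination principle is false. The paper avoids this by publicizing \emph{only} the receptions of the middle vertices $V_C\cup V_D\cup\{u^\star,v^\star\}$ (its power (P2)), plus shared randomness/IDs and neighbor lists---exactly what the two players need to simulate the middle layer, and nothing more.

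Second, the invariant ``each good slot at $V_C$ contributes $O(1)$ fresh bits'' is both false in detail and unproven where it matters. A listening middle vertex that hears \emph{silence} or \emph{noise} also learns about $S_A$ (which of its $V_A$ neighbors chose to transmit), yet such slots are good for no pair, so counting good slots undercounts the information channels; the right unit is every listening slot of a middle vertex, i.e., middle-layer energy. Moreover, bounding \emph{fresh} bits per reception requires excluding aggregation loops: a middle vertex can collect many $V_A$ inputs (each reception duly charged), rebroadcast the aggregate down to $V_A$, after which one later upload from a single $V_A$ vertex carries unboundedly much previously-aggregated content; your independent-set observation does not rule this out. The paper's fix is structural rather than information-theoretic: in its modified model the middle vertices provably never need to transmit (their behavior is publicly computable by everyone), hence $V_A\cup V_B$ vertices never receive anything, hence every message a $V_A\cup V_B$ vertex sends is a function of public data plus its own $O(\log n)$-bit neighbor list and can simply be \emph{replaced} by that list. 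This converts the execution into an explicit two-party protocol costing $O(\log n)$ bits per listening middle vertex per slot, for total communication $O(\log n)\cdot O(\log n)\cdot o(n/\log^2 n)=o(n)$, contradicting the disjointness bound. Until your ``fresh-information versus reception-count'' invariant is given backing of this kind, the argument does not go through.
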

\begin{proof}
The proof is based on a reduction from the {\em set-disjointness} problem of communication complexity, which is defined as follows.
Consider two players $A$ and $B$, each of them holding a subset of $\{0, \ldots, n-1\}$. Their task is to decide whether their subsets are disjoint. If the maximum allowed failure probability is $f < 1/2$, then they need to communicate $\Omega(n)$ bits~\cite{BravermanM13,KalyanasundaramS92}.
This is true even if the two players have access to 
a public random string.

\paragraph{Lower Bound Graph Construction.}
Let $S_A=\{a_1, \ldots, a_{\alpha}\}$ and $S_B=\{b_1, \ldots, b_{\beta}\}$ be two subsets of $\{0, \ldots, k-1\}$ corresponding to an instance of set-disjointness problem. We assume that $k = 2^{\ell}$, for some positive integer $\ell$, and so each element $s \in S_A \cup S_B$ is represented as a binary string of length $\ell = \log k$. We write $\Ones(s) \subseteq [\ell] = \{1, \ldots, \ell\}$ to denote the set of indices $i$ in $[\ell]$ such that $s[i]=1$ (i.e., the $i$th bit of $s$ is $1$); similarly,  $\Zeros(s) = [\ell] \setminus \Ones(s)$ is the set of indices $i$ in $[\ell]$ such that $s[i]=0$. For example, if the binary representation of $s$ is $10110010$ ($\ell = 8$), then $\Ones(s) = \{1,3,4,7\}$ and $\Zeros(s) = \{2,5,6,8\}$.

Define the graph $G=(V,E)$ as follows. 
\begin{description}
\item[Vertex Set.] Define $V=V_A \cup V_B \cup V_C \cup V_D \cup \{u^\star, v^\star\}$, where $V_A = \{u_1, \ldots, u_\alpha\}$,
$V_B = \{v_1, \ldots, v_\beta\}$, $V_C = \{w_1, \ldots, w_{\ell}\}$, and $V_D = \{x_1, \ldots, x_{\ell}\}$. Note that we have natural 1-1 correspondences $V_A \leftrightarrow S_A$, $V_B \leftrightarrow S_B$,
$V_C \leftrightarrow [\ell]$, and $V_D \leftrightarrow [\ell]$.

\item[Edge Set.] The edge set $E$ is constructed as follows. Initially $E = \emptyset$.

For each vertex $u_i \in V_A$ and each $w_j \in V_C$, add $\{u_i, w_j\}$ to $E$ if $j \in \Ones(a_i)$.

For each vertex $u_i \in V_A$ and each $x_j \in V_D$, add $\{u_i, x_j\}$ to $E$ if $j \in \Zeros(a_i)$.

For each vertex $v_i \in V_B$ and each $w_j \in V_C$, add $\{v_i, w_j\}$ to $E$ if $j \in \Zeros(b_i)$.

For each vertex $v_i \in V_B$ and each $x_j \in V_D$, add $\{v_i, x_j\}$ to $E$ if $j \in \Ones(b_i)$.

Add edges between $u^\star$ and all vertices in $V_A \cup V_C \cup V_D$.

Add edges between $v^\star$ and all vertices in $V_B \cup V_C \cup V_D$.
\end{description}

The graph $G$ has $n=\alpha+\beta+2\ell+2 \le 2(k+\log k+1)$ vertices.
It is straightforward to show that $G$ has 
arboricity and treewidth $O(\log k) = O(\log n)$.

 A crucial observation is that if $S_A \cap S_B =\emptyset$ (a {\em yes}-instance for the set-disjointness problem), then the diameter of $G$ is 2; otherwise  (a {\em {no}}-instance for the set-disjointness problem) the diameter of $G$ is 3. This can be seen as follows. First of all, observe that we must have $\dist(s,t) \leq 2$ unless $s \in V_A$ and $t \in V_B$. Now suppose $s = u_i \in V_A$  and $t = v_j \in V_B$.
 \begin{itemize}
 \item Consider the case $a_i \neq b_j$. We show that $\dist(s,t) = 2$.
      Note that there is an index $l \in [\ell]$ such that $a_i$ and $b_j$ differ at the $l$th bit.
     If the $l$th bit of  $a_i$ is 0 and the $l$th bit of  $b_j$ is 1, then $(u_i, x_l, v_j)$ is a length-2 path between $s$ and $t$.
     If the $l$th bit of  $a_i$ is 1 and the $l$th bit of  $b_j$ is 0, then $(u_i, w_l, v_j)$ is a length-2 path between $s$ and $t$.
 \item Consider the case $a_i = b_j$.   We show that $\dist(s,t) = 3$.
      Note that there is no index $l \in [\ell]$ such that $a_i$ and $b_j$ differ at the $l$th bit.
      Thus, each  $w_l \in V_C$  and $x_l \in V_D$ is adjacent to exactly one of $\{u_i, v_j\}$.
      Hence there is no length-2 path between $s$ and $t$.
\end{itemize}
Therefore, if $S_A \cap S_B =\emptyset$, then $\dist(s,t) = 2$ for all pairs $\{s,t\}$, and so the diameter is 2;
otherwise, there exist $s = u_i \in V_A$  and $t = v_j \in V_B$ such that $\dist(s,t) = 3$, and so the diameter is 3.

\paragraph{Reduction.} Suppose that there is a randomized distributed algorithm $\mathcal{A}$ that is able to compute the diameter with $o(n / \log^2 n)$ energy per vertex, with failure probability $f = 1/\poly(n)$. We show that the algorithm $\mathcal{A}$ can be transformed into a randomized communication protocol that solves the set-disjointness problem with $o(n)$ bits of communication, and with the same failure probability $f = 1/\poly(n)$.

The main challenge in the reduction is that we do not impose any message size constraint.
To deal with this issue, our strategy is to consider a modified computation model $\mathcal{M}'$.
We will endow the vertices in the modified computation model $\mathcal{M}'$ with strictly more capabilities than the original radio network.
Then, we argue that in the setting of $\mathcal{M}'$, we can assume that each message has size $O(\log k)$.

\paragraph{Modified Computation Model $\mathcal{M}'$.} We add the following extra powers to the vertices:
\begin{description}
\item[(P1)] All vertices have access to an infinite shared random string. They know the vertex set and the IDs of all vertices.  Specifically,
$\ID(w_i) = i$ for each $w_i \in V_C$;  $\ID(x_i) = \ell + i$ for each $x_i \in V_D$; $\ID(u^\star) = 2\ell +1$; $\ID(v^\star) = 2\ell +2$. Thus, for each $v \in V_C \cup V_D \cup \{u^\star, v^\star\}$, its role can be inferred from $\ID(v)$.
\item[(P2)] 
Messages received by vertices 
in $V_C \cup V_D \cup \{u^\star, v^\star\}$ (according to the usual radio network rules)
are immediately communicated to \emph{all} vertices. For example, if $v \in V_C \cup V_D \cup \{u^\star, v^\star\}$ receives $m$ from $u \in V$ at time $t$, then by the end of round $t$ all vertices in $V$ know that ``$v$ receives $m$ from $u$ at time $t$.''
\item[(P3)] Each vertex $v \in V_A \cup V_B$ knows the list of the IDs of its neighbors initially.
\end{description}
Next, we discuss the consequences of these extra powers. In particular, we show that we can make the following assumptions about algorithms in this modified model $\mathcal{M}'$.

\paragraph{Vertices in  $V_C \cup V_D \cup \{u^\star, v^\star\}$  Never Transmit.}
Powers (P1) and (P2) together imply that each vertex in the graph is able to locally simulate the actions of all vertices in $V_C \cup V_D \cup \{u^\star, v^\star\}$. Intuitively, this means that all vertices in  $V_C \cup V_D \cup \{u^\star, v^\star\}$ do not need to transmit at all throughout the algorithm.

Note that each vertex $v \in V$ already knows the list of $N(v) \cap (V_C \cup V_D \cup \{u^\star, v^\star\})$. If $v \in V_A \cup V_B$, then $v$ knows this information via (P3). If $v \in V_C \cup V_D \cup \{u^\star, v^\star\}$, then $v$ knows this information via (P1); the role of each vertex in $V_C \cup V_D \cup \{u^\star, v^\star\}$ can be inferred from its ID, which is a public to everyone.

Thus, right before the beginning of each time $t$, each vertex $v \in V$ already knows exactly which vertices in $N(v) \cap (V_C \cup V_D \cup \{u^\star, v^\star\})$  will transmit at time $t$ and their messages. Thus, in the modified model $\mathcal{M}'$, we can simulate the execution of an algorithm which allows the vertices in $V_C \cup V_D \cup \{u^\star, v^\star\}$ to transmit by another algorithm that forbid them to do so.

\paragraph{Messages Sent by Vertices in $V_A \cup V_B$ Have Length $O(\log k)$.}
Next, we argue that we can assume that each message $m$ sent by a vertex $v' \in V_A \cup V_B$ can be replaced by another message $m'$ which contains only the list of all neighbors of $v'$, and this can be encoded as an $O(\log k)$-bit message, as follows. Recall that $N(v')$ is a subset of $V_C \cup V_D \cup \{u^\star, v^\star\}$, and so we can encode $N(v')$ as a binary string of length $|V_C \cup V_D \cup \{u^\star, v^\star\}| = 2\ell + 2 = O(\log k)$.

The message $m$ is a function of all information that $v'$ has. Since no vertex in $V_C \cup V_D \cup \{u^\star, v^\star\}$  transmits any message, $v'$ never receives a message, and so the information that $v'$ has consists of only the following components.
\begin{itemize}
\item The shared randomness and the ID list of all vertices (due to (P1)).
\item The history of vertices in $V_C \cup V_D \cup \{u^\star, v^\star\}$ (due to (P2)).
\item The list of neighbors of $v'$ (due to (P3)).
\end{itemize}
The only private information that $v'$ has is its list of neighbors. If a vertex $u' \in V$ knows the list of neighbors of $v'$, then $u'$ is able to calculate $m$ locally, and so  $v'$ can just send its list of neighbors in lieu of $m$.

\paragraph{Algorithm $\mathcal{A}'$.}
To sum up, given the algorithm $\mathcal{A}$, we can transform it into another algorithm $\mathcal{A}'$ in the modified computation model  $\mathcal{M}'$ that uses only $O(\log k)$-bit messages, and  $\mathcal{A}'$ achieves what $\mathcal{A}$ does. Note that the energy cost of $\mathcal{A}'$ is at most the energy cost $\mathcal{A}$.

\paragraph{Solving Set-Disjointness.} Now we show how to transform  $\mathcal{A}'$ into a protocol for the set-disjointness problem using only $o(k)$ bits of communication.
The protocol is simply a simulation of $\mathcal{A}'$.
The shared random string used by $\mathcal{A}'$ 
is the same random string shared by the two players $A$ and $B$.

Each player $X \in \{A,B\}$ is responsible for simulating vertices in $V_X$.
Vertices in $V_A$ and $V_B$ never receive messages, and so all we need to do is let both players 
$A$ and $B$ know
the messages \emph{sent} to $V_C \cup V_D \cup \{u^\star, v^\star\}$ (in view of (P2)).

We show how to simulate one round  $\tau$ of $\mathcal{A}'$.
Let $Z(\tau)$ be the subset of vertices in $V_C \cup V_D \cup \{u^\star, v^\star\}$
that listen at time $\tau$,
and consider a vertex $u' \in Z(\tau)$.
(Recall that everyone can predict the action of every vertex in $V_C \cup V_D \cup \{u^\star, v^\star\}$.)

Let $Q_A$ be the number of vertices in $N(u') \cap V_A$ transmitting at time $\tau$.
We define $m_{u',\tau,A}$ as follows.
\[
m_{u',\tau,A} =
\begin{cases}
\text{``0''} &\text{if $Q_A = 0$.}\\
\text{``$\geq 2$''} &\text{if $Q_A \geq 2$.}\\
(v', m') &\text{if $Q_A = 1$, and  $v'$ is the vertex in $N(u') \cap V_A$  sending $m'$ at time $\tau$.}
\end{cases}
\]
We define $m_{u',\tau,B}$ analogously. Note that the length of $m'$ must be $O(\log k)$ bits.

The protocol for simulating round $\tau$ is simply that $A$ sends $m_{u',\tau,A}$ (for each  $u' \in Z(\tau)$) to $B$, and $B$ sends $m_{u',\tau,B}$ (for each  $u' \in Z(\tau)$) to $A$.
This offers enough information for both player to know the channel feedback (noise, silence, or a message $m$) received by each vertex in $Z(\tau)$.
Note that the number of bits exchanged by $A$ and $B$ due to the simulation of round $\tau$ is $O(|Z(\tau)|\log k)$.

Recall that the energy cost of each vertex in an execution of $\mathcal{A}'$ is $o(k / \log^2 k)$, and we have $|V_C \cup V_D \cup \{u^\star, v^\star\}| = O(\log k)$.
Thus, the total number of bits exchanged by the two players $A$ and $B$ is
$$\sum_{\tau} O(|Z(\tau)|\log k) = |V_C \cup V_D \cup \{u^\star, v^\star\}| \cdot o(k / \log^2 k)  \cdot O(\log k) = o(k).\qedhere$$
\end{proof}


We remark that the proof of Theorem~\ref{thm:diameter-lb2} can be extended to graphs with higher diameter by using a slightly more complicated lower bound graph construction and analysis; see e.g.,~\cite{bringmann2018note}. Intuitively, this is due to the fact that the lower bound graph is {\em sparse}, so we are able to subdivide the edges. 

\subsection{Upper Bounds}
The approximation ratios in Theorems~\ref{thm:diameter-lb1} and~\ref{thm:diameter-lb2} cannot be improved.
Observe that $\mathsf{BFS}$ already gives a 2-approximation of diameter, as $D' = \max_{u \in V(G)}\{\dist_G(s,u)\} \in [\diam(G)/2, \diam(G)]$,
and we know that a $\mathsf{BFS}$ can be computed in $n^{o(1)}$ energy. 

If we allow an energy budget of $n^{\frac{1}{2} + o(1)}$ then it is possible to achieve a {\em nearly} $3/2$-approximation by applying the algorithm of~\cite{holzer2014brief,RodittyW13},
which computes a $D'$
such that
$\lfloor 2\diam(G)/3 \rfloor  \leq   D'   \leq  \diam(G)$.
More precisely, if we write $\diam(G) =  3h + z$, where $h$ is a non-negative integer, and $z \in \{0,1,2\}$, then $D' \in [2h+z, \diam(G)]$ for the case $z = 0, 1$, and $D' \in [2h+1, \diam(G)]$ 
for the case $z = 2$. Note that this does not contradict the $\Omega(n)$ energy lower bound for distinguishing between $\diam(G)=1$ and $\diam(G)=2$ in Theorem~\ref{thm:diameter-lb1}, nor does it contradict Theorem~\ref{thm:diameter-lb2}.

The algorithm of~\cite{holzer2014brief,RodittyW13} is as follows. Let each vertex join $S$ with probability $(\log n) /\sqrt{n}$, and compute a $\mathsf{BFS}$ from each vertex in $S$. Let $v^\star$ be any vertex that maximizes the distance to $S$. Identify any set of $\sqrt{n}$ vertices $R$ that are the closest to $v^\star$, and compute a $\mathsf{BFS}$ from each vertex in $R$.
The diameter approximation $D'$ is the maximum $\mathsf{BFS}$-label computed throughout the algorithm. Note that there are multiple valid choice of $v^\star$ and $R$, and the tie can be broken arbitrarily.\footnote{Precisely, it is required that $|R| = \sqrt{n}$, and for each $u \in R$, there are less than $\sqrt{n}$ vertices $v$ such that $\dist(v, v^\star) < \dist(u, v^\star)$. In general, there could be multiple choices of $R$ satisfying this requirement.}
Since $\mathsf{BFS}$ can be computed in $n^{o(1)}$ energy, with a suitable implementation, this algorithm be executed using $n^{\frac{1}{2} + o(1)}$ energy.
 For the sake of completeness, in what follows we provide the detail for an implementation, which is based on the following subroutines.

\begin{description}
\item[{\sf Leader Election:}] Elect a leader $v_0\in V$ such that all vertices know $\ID(v_0)$. It is known that this task can be solved in $\tilde{O}(n)$ time and $\tilde{O}(1)$ energy~\cite{ChangDHHLP18}.
\item[{\sf Find Minimum:}] Suppose there is already a leader  $v_0 \in V$, and each vertex $u \in V$ knows $\dist(u,v^\star)$. Each vertex $u$ holds an integer $k_u \in [1, K]$ and a message $m_u$. The goal is to elect one vertex $u^\star$ such that $k_{u^\star} = \min \{ k_u \ | \ u \in V \}$ and have all vertices know  $m_{u^\star}$. Tie is broken arbitrarily. The task {\sf Find Maximum} is defined analogously.
\end{description}

We argue that the task $\findmin$ and $\findmax$ can be solved in $\tilde{O}(\diam(G))$ time and 
$\tilde{O}(1)$ energy, 
given that $K = O(\poly (n))$.
To solve this task, we will do a binary search. Let $I \subseteq [1, K]$ be an interval currently under consideration. We let $v_0$ test whether there exists a vertex $u'$ with $k_{u'} \in I$ by doing $O(\diam(G))$ $\sr$s on the $\mathsf{BFS}$ tree, layer by layer.  The root  $v_0$ is able to announce the result to everyone, also using  $O(\diam(G))$ $\sr$s on the $\mathsf{BFS}$ tree, layer by layer.  
After $O(\log K) = \tilde{O}(1)$ iterations, we are done.

\begin{theorem}\label{thm-diameter-ub1}
 There is an algorithm that computes a 2-approximation  of   diameter   in $n^{1 + o(1)}$ time and $n^{o(1)}$ energy.
\end{theorem}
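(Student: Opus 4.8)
The plan is to observe that a single BFS from an arbitrary source already furnishes a $2$-approximation, and then to assemble the subroutines defined above so that the whole computation fits within the stated budget.

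First I would establish the approximation guarantee. Fix any source $s$ and set $D' = \max_{u \in V(G)} \dist_G(s,u)$. On one hand $D' \le \diam(G)$, since no individual distance can exceed the diameter. On the other hand, letting $\{x,y\}$ be a pair realizing the diameter, the triangle inequality gives $\diam(G) = \dist_G(x,y) \le \dist_G(x,s) + \dist_G(s,y) \le 2D'$, so $D' \ge \diam(G)/2$. Hence $D' \in [\diam(G)/2, \diam(G)]$, which is exactly the sought $2$-approximation. The remaining work is purely to compute this $D'$ cheaply and to make it known to every vertex.

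Second, I would describe a three-phase implementation. (i) Run \textsf{Leader Election} to designate a single source $v_0 = s$; by the cited bound this costs $\tilde{O}(n)$ time and $\tilde{O}(1)$ energy. (ii) Invoke the BFS routine of Theorem~\ref{thm:main} from $s$, so that every vertex $u$ learns $\dist_G(s,u)$ and, simultaneously, the layered BFS tree rooted at $s$ is established. Writing $D = \max_u \dist_G(s,u) \le n$, the time is $\tilde{O}(D)\cdot 2^{O(\sqrt{\log D\log\log n})} \le n^{1+o(1)}$ and the energy is $\tilde{O}(1)\cdot 2^{O(\sqrt{\log D\log\log n})} = n^{o(1)}$. (iii) Run \textsf{Find Maximum} with $v^\star = s$, with each vertex holding $k_u = \dist_G(s,u)$, and with $K = n$; using the distances-to-$s$ from phase (ii) as the tree over which the layer-by-layer $\srs$ are performed, this returns $D' = \max_u \dist_G(s,u)$ to all vertices in $\tilde{O}(\diam(G)) = \tilde{O}(D)$ time and $\tilde{O}(1)$ energy. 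Summing the three phases gives $n^{1+o(1)}$ time and $n^{o(1)}$ energy, and the returned value $D'$ is the claimed $2$-approximation.

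The steps that need care, rather than genuine difficulty, are two. First, $D$ is unknown in advance, but Theorem~\ref{thm:main} already resolves this by doubling the distance threshold $D_0 = 2^k$ until all of $V(G)$ is labeled, with running time expressed in terms of the true $D \le n$; since $D \le n$, the $2^{O(\sqrt{\log D \log\log n})}$ factor is $n^{o(1)}$. Second, one must confirm that the BFS labeling from phase (ii) provides precisely the layered structure that \textsf{Find Maximum} presumes, namely the layers $\{u : \dist_G(s,u) = i\}$ in which every layer-$i$ vertex has at least one layer-$(i-1)$ neighbor, so that the convergecast and announcement via $\srs$ succeed. Both points are routine once stated, so the only real ``obstacle'' is the bookkeeping of the additive composition of the three budgets into $n^{1+o(1)}$ time and $n^{o(1)}$ energy.
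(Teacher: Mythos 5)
Your proposal is correct and matches the paper's own proof essentially verbatim: elect a leader via $\mathsf{Leader \ Election}$, run the $\mathsf{BFS}$ of Theorem~\ref{thm:main} from the leader, and use $\findmax$ over the resulting BFS tree to disseminate $D' = \max_u \dist_G(v_0,u)$, with the same $\tilde{O}(n) + n^{1+o(1)} + \tilde{O}(D)$ time and $n^{o(1)}$ energy accounting. Your added details (the triangle-inequality justification of the factor-2 guarantee and the handling of the unknown $D$ by doubling) are points the paper treats as immediate, so there is no substantive difference.
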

\begin{proof}
Apply {\sf Leader Election} to elect a leader $v_0$,  do a $\mathsf{BFS}$ from $v_0$, and then do a $\findmax$ to let each vertex learn $\max\{ \dist(u, v_0) \ | \ u \in V\}$. This gives a 2-approximation of the diameter $D$.
\end{proof}

\begin{theorem}\label{thm-diameter-ub2}
 There is an algorithm that computes an approximation $D'$ such that 
 $\lfloor 2\diam(G)/3 \rfloor  \leq   D'   \leq  \diam(G)$ in $n^{3/2 + o(1)}$ time 
 and $n^{1/2 + o(1)}$ energy.
\end{theorem}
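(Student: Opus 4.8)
The plan is to implement the Roditty--Williams sampling scheme of \cite{RodittyW13,holzer2014brief} exactly as sketched in the discussion preceding the theorem, delegating the approximation guarantee $\lfloor 2\diam(G)/3\rfloor \le D' \le \diam(G)$ entirely to those works, and concentrating on showing that a faithful radio-network implementation fits within the claimed $n^{3/2+o(1)}$ time and $n^{1/2+o(1)}$ energy budgets. The only real work is to express each step as a bounded number of calls to the primitives already developed: $\bfs$ (Theorem~\ref{thm:main}), Leader Election, and $\findmin/\findmax$.

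First I would build shared infrastructure: elect a leader $v_0$ and compute a single $\bfs$ tree $T$ rooted at $v_0$. By the stated bound on Leader Election ($\tilde{O}(n)$ time, $\tilde{O}(1)$ energy) and by Theorem~\ref{thm:main} ($n^{1+o(1)}$ time and $n^{o(1)}$ energy, since $\diam(G)\le n$), this costs $n^{1+o(1)}$ time and $n^{o(1)}$ energy. The tree $T$ makes every $v$ learn $\dist(v,v_0)$, which is precisely the structure the $\findmin/\findmax$ primitives operate on, so $T$ will be reused for all later aggregation. Next, sample $S$ by letting each vertex join independently with probability $(\log n)/\sqrt{n}$, so $|S|=\Theta(\sqrt{n}\log n)$ w.h.p.; assign the members of $S$ distinct indices in $[1,|S|]$ via $O(\log n)$ $\upcast$s on $T$; then run $\bfs$ once from each source $s\in S$, in consecutive time windows. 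Each run costs $n^{1+o(1)}$ time and $n^{o(1)}$ energy, and since a vertex participates in all $\tilde{O}(\sqrt{n})$ runs, its aggregate cost is $n^{3/2+o(1)}$ time and $n^{1/2+o(1)}$ energy. During these runs each vertex maintains both its running value $\dist(v,S)=\min_{s\in S}\dist(v,s)$ and the running maximum $\bfs$-label it has witnessed; a $\findmax$ on the keys $k_v=\dist(v,S)$ then elects $v^\star$ and broadcasts $\ID(v^\star)$ in $\tilde{O}(n)$ time and $\tilde{O}(1)$ energy.

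The remaining steps mirror the previous batch: run $\bfs$ once from $v^\star$ so every $v$ learns $\dist(v,v^\star)$, select the set $R$ of the $\sqrt{n}$ vertices closest to $v^\star$, index them, and run $\bfs$ from each $r\in R$, again updating the running maximum label. The selection of $R$ is the step I expect to require the most care. I would binary-search for the threshold $\rho$ equal to the $\sqrt{n}$-th smallest value of $\dist(v,v^\star)$, each probe counting $|\{v:\dist(v,v^\star)\le\rho\}|$ by an up-cast aggregation on $T$ ($\tilde{O}(n)$ time and $\tilde{O}(1)$ energy per probe, $O(\log n)$ probes). Having found $\rho$, I would place into $R$ all vertices with $\dist(v,v^\star)<\rho$ (fewer than $\sqrt{n}$ of them) and then enough vertices at distance exactly $\rho$, broken by $\ID$, to reach $|R|=\sqrt{n}$; this guarantees that for every $u\in R$ there are fewer than $\sqrt{n}$ vertices strictly closer to $v^\star$, matching the rank requirement in the footnote. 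A final $\findmax$ reconciles the global maximum label, which is output as $D'$.

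Summing over all steps, the two batches of $\tilde{O}(\sqrt{n})$ $\bfs$ runs dominate, each contributing $n^{3/2+o(1)}$ time and $n^{1/2+o(1)}$ energy, while every auxiliary operation (leader election, the $\findmax$ steps, and the $O(\log n)$ counting probes) costs only $\tilde{O}(n)$ time and $\tilde{O}(1)$ energy. This yields the claimed $n^{3/2+o(1)}$ time and $n^{1/2+o(1)}$ energy, and correctness of the value $D'$ follows directly from \cite{RodittyW13,holzer2014brief}. The main obstacle is thus not any new mathematical difficulty but the bookkeeping of the selection step and of the sequential scheduling of $\tilde{O}(\sqrt{n})$ independent BFS invocations; once these are cleanly specified, the cost analysis is a routine summation over the primitives.
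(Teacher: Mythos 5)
Your overall architecture is the same as the paper's: leader election plus a root $\mathsf{BFS}$ tree to support $\findmin$/$\findmax$, a batch of $\tilde{O}(\sqrt{n})$ sequential $\bfs$ runs from $S$, one $\findmax$ to elect $v^\star$, a $\mathsf{BFS}$ from $v^\star$, a second batch from $R$, and a final $\findmax$; the cost accounting for the two batches is also identical. The gap is in the two places where you invoke tree-based \emph{additive} aggregation: assigning distinct indices in $[1,|S|]$ ``via $O(\log n)$ $\upcast$s on $T$,'' and counting $\left|\{v : \dist(v,v^\star)\le \rho\}\right|$ ``by an up-cast aggregation on $T$'' at $\tilde{O}(1)$ energy per probe. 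Neither operation is supported by the paper's toolkit, and neither is routine in this model. The $\upcast$ of Lemma~\ref{lemma:sr-cluster} only guarantees that a cluster center hears \emph{one} message from \emph{some} constituent; more fundamentally, everything here is built from $\sr$, in which a listener obtains at most one message per call. Idempotent aggregates (OR, min, max) can be propagated up a BFS tree layer by layer this way---which is exactly how $\findmin$/$\findmax$ are realized, via binary search by the root using existence tests---but a sum or prefix-sum converge-cast requires each internal vertex to hear from \emph{all} of its children, so a parent with $c$ tree-children must spend $\Omega(c)$ energy listening. Since BFS-tree degrees can be polynomial in $n$, your counting probes and your indexing pass do not run in $\tilde{O}(1)$ energy per vertex as claimed. (Randomized self-indexing does not rescue the scheduling step either: collision-freeness w.h.p. forces an index range of $\Omega(|S|^2)=\Omega(n)$, hence $n$ time windows and $n^{2+o(1)}$ total time, exceeding the budget.)

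The fix stays entirely within the paper's primitives and is what the paper actually does: replace both steps by \emph{repeated} $\findmin$ on lexicographic keys. Using $\tilde{O}(\sqrt{n})$ iterations of $\findmin$ (each $\tilde{O}(\diam(G))$ time and $\tilde{O}(1)$ energy), every vertex learns the IDs of the members of $S$ one at a time, which simultaneously indexes $S$ and fixes the schedule of the sequential $\bfs$ runs; similarly, $\sqrt{n}$ iterations of $\findmin$ on the keys $\left(\dist(v,v^\star),\ID(v)\right)$ extract the $\sqrt{n}$ closest vertices to $v^\star$ one by one, meeting the exact rank requirement for $R$ without ever counting. The totals remain $\tilde{O}(\sqrt{n})\cdot\tilde{O}(\diam(G)) \le n^{3/2+o(1)}$ time and $\tilde{O}(\sqrt{n}) = n^{1/2+o(1)}$ energy, and with this substitution your argument coincides with the paper's proof.
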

\begin{proof}
We show how to implement the algorithm of~\cite{holzer2014brief,RodittyW13}. 
We first apply {\sf Leader Election} to elect a leader $v_0$, and do a $\mathsf{BFS}$ from $v_0$, we will use this tree to do $\findmin$ and $\findmax$ in subsequent steps of the algorithm.

In the algorithm of~\cite{holzer2014brief,RodittyW13}, we let each vertex join $S$ with probability $(\log n) /\sqrt{n}$. Using $|S| = \tilde{O}(\sqrt{n})$ iterations of  $\findmin$  we can let everyone know the $\ID$s of vertices in $S$.
Then, we sequentially compute 
 a $\mathsf{BFS}$ from each vertex in $S$. Let $v^\star$ be a vertex that maximizes the distance to $S$. Such a vertex  $v^\star$ can be elected using one iteration of $\findmax$.
 To compute the set $R$, we first do a $\mathsf{BFS}$ from $v^\star$ so that everyone knows its distance to  $v^\star$. Then, after $\sqrt{n}$  iterations of $\findmin$, we can let everyone learn the set $R$, and then we can do the $\mathsf{BFS}$ computation from each vertex in $R$ sequentially.
The diameter approximation $D'$ is the maximum $\mathsf{BFS}$-label computed throughout the algorithm, and this can be computed using one iteration of $\findmax$.
It is clear that the algorithm takes $n^{3/2 + o(1)}$ time and $n^{1/2 + o(1)}$ energy, as it only uses $\tilde{O}(\sqrt{n})$ $\findmin$, $\findmax$, and $\mathsf{BFS}$ computations.
\end{proof}

\ignore{
\subsection{Upper Bounds}

The approximation ratio in Theorem~\ref{thm:diameter-lb1} and Theorem~\ref{thm:diameter-lb2} are, in a sense, the best possible. Observe BFS already gives a 2-approximation of diameter, as $D' = \max_{u \in V}\{\dist(u,s)\} \in [D/2, D]$. Thus, one can obtain a   2-approximation of diameter by computing the BFS from any vertex $s$ in $\tilde{O}(D) \cdot n^{o(1)}$ time and $n^{o(1)}$ energy. After that, we can let each vertex learn the value $D' = \max_{u \in V}\{\dist(u,s)\}$ using additional $\tilde{O}(D)$ time and $\tilde{O}(1)$ energy.
Note that the algorithm in~\cite{ChangDHHLP18} is able to elect a leader $s$ in $\tilde{O}(n)$ time and $\tilde{O}(1)$ energy.

If we allow an energy budget of $n^{0.5 + o(1)}$, then it is possible to achieve a nearly $3/2$-approximation by applying the algorithm of~\cite{holzer2014brief,RodittyW13}. More precisely, the algorithm computes an approximation $D'$ of the diameter $D$ such that 
$\lfloor 2D/3 \rfloor  \leq   D'   \leq  D$.
More precisely, if we write $D =  3h + z$, where $h$ is a non-negative integer, and $z = 0,1,2$, then $D' \in [2h+z, D]$ for the case $z = 0, 1$, and $D' \in [2h+1, D]$ for the case $z = 2$. Note that this does not contradict the $\Omega(n)$ energy lower bound in Theorem~\ref{thm:diameter-lb1}.

The algorithm of~\cite{holzer2014brief,RodittyW13} is as follows. Let each vertex joins $S$ with probability $(\log n) /\sqrt{n}$, and compute a BFS from each vertex in $S$. Let $v^\star$ be a vertex that maximizes the distance to $S$, and compute a BFS from $v^\star$ and any $\sqrt{n}$ vertices $R$ that are the closest to $v^\star$.
The diameter approximation $D'$ is the maximum BFS-label computed throughout the algorithm. Note that there are multiple valid choice of $v^\star$ and $R$, and the tie can be broken arbitrarily.

We briefly discuss how to implement this algorithm in $n^{0.5 + o(1)}$ energy.
Making use of the BFS-labels computed and $\sr$, it is straightforward to elect a vertex $v^\star$ that maximizes the distance to $S$ in $\tilde{O}(D)$ time and $\tilde{O}(1)$ energy. Also, given a BFS tree rooted at  $v^\star$, in $\tilde{O}(D)$ time and $\tilde{O}(1)$ energy,  we can compute a set $R'$ such that $|R'|  \in [\sqrt{n}, 2\sqrt{n}]$ and it contains the $|R'|$ vertices that are the closest to  $v^\star$.

In what follows we show how we can compute $m$ BFS simultaneously in $\tilde{O}(D) \cdot n^{o(1)} + \tilde{O}{m}$ time and $ \tilde{O}{m} \cdot n^{o(1)}$ energy. Suppose that we are given a set of vertices $U$, and it is known that $|U| \leq m$. The goal is to let each vertex $v \in V$ learn $\dist(v,u)$, for each $u \in U$. We will execute $k = O(m \log n)$ BFS algorithm $\mathcal{A}_1, \cdots \mathcal{A}_k$. For the $i$th instance, we compute a BFS from the set $S_i$ that is chosen by having each $u \in U$ join $S_i$ with probability $1/m$, independently. With high probability, for each $u \in U$, there exists an index $i$ such that $S_i = \{u\}$, so that the algorithm $\mathcal{A}_i$ will compute the BFS tree rooted at $u$.

Recall that our BFS algorithm only uses $\sr$, so from now on we consider the runtime of $\sr$ as one unit of time. Suppose we let each $\mathcal{A}_i$ chooses a random starting time $t_i$. Then it is straightforward to show that with high probability, for each time $t$, and for each vertex $v$, slot there is at most $\mathcal{C} = \tilde{O}(1)$ BFS algorithms that do $\sr$ at some vertices...
}

\newpage
\bibliographystyle{abbrv}
\bibliography{reference}


\end{document}